\newtheorem{proposition}{Proposition}
\newtheorem{lemma}{Lemma}
\newtheorem{theorem}{Theorem}
\newtheorem{claim}{Claim}
\newtheorem{axiom}{Axiom}
\newcommand{\wcapacity}{\textsc{Mwisl}}
\newcommand{\capswcapacity}{\textsc{MWISL}}
\newcommand{\mwisl}{\wcapacity}
\newcommand{\tlog}{\widehat{\log}}
\newcommand{\wscheduling}{\textsf{WScheduling}}
\newcommand{\uscheduling}{\textsf{UScheduling}}
\newcommand{\cG}{{\cal G}}
\newcommand{\cH}{{\cal H}}
\newcommand{\cE}{{\cal E}}
\newcommand{\cK}{{\cal K}}
\newcommand{\cI}{{\cal I}}
\newcommand{\cS}{{\cal S}}
\newcommand{\cP}{{\cal P}}
\newcommand{\cF}{{\cal F}}
\newcommand{\cC}{{\cal C}}
\newcommand{\s}{{\mathfrak l}}
\newcommand{\e}{{\mathfrak m}}
\newcommand{\Ds}{{\Delta}}
\newcommand{\ghi}{{G_{hi}}}
\newcommand{\glo}{{G_{lo}}}
\newcommand{\ghim}{{G_{hi}^M}}
\newcommand{\glom}{{G_{lo}^M}}
\newcommand{\mypar}[1]{{\medskip\textbf{#1.}}}
\begin{document}

\title{Effective Wireless Scheduling via Hypergraph Sketches
  \thanks{This work contains an extended treatment of results announced in \cite{us:stoc15}, \cite{us:fsttcs15}, and \cite{us:icalp17}.}}

\author{
  Magn\'us M. Halld\'orsson
  \qquad
  Tigran Tonoyan \\ \\
  ICE-TCS, School of Computer Science, Reykjavik University \\
  \url{{magnusmh,ttonoyan}@gmail.com}
}

\begin{titlepage}

\maketitle              % typeset the title of the contribution

\begin{abstract}
An overarching issue in resource management of wireless networks is assessing their \emph{capacity}:
\emph{How much communication can be achieved in a network, utilizing all the tools available: power control, scheduling, routing, channel assignment and rate adjustment?}
We propose the first framework for approximation algorithms in the \emph{physical model} of wireless interference that addresses these questions in full. The approximations obtained are at most doubly logarithmic in the link length and rate diversity. Where previous bounds are known, this gives an exponential improvement (or better).

A key contribution is showing that the complex interference relationship of the physical model can be simplified, at a small cost, into a novel type of amenable conflict graphs. 
We also show that the approximation obtained is provably the best possible for any conflict graph formulation.
\end{abstract}

\thispagestyle{empty}

\end{titlepage}

\section{Introduction}

% Graphs, and conflicts
Graphs are ubiquitous structures that are used, among other things, for modelling
\emph{conflicts} between pairs of elements. Such conflicts arise naturally in resource allocation. An independent set in a graph corresponds to a subset of non-conflicting elements, while a vertex coloring of the graph implies a \emph{schedule} of the elements in groups of non-conflicting sets.
Such pairwise conflicts are though only the simplest form of \emph{constraints}.

%% Hypergraphs, constraints
An example of more general constraints on resource usage: ``at most two out of these three elements can be active simultaneously''. Such constraints are captured with  \emph{hypergraphs}, whose hyperedges correspond to not-all-active-simultaneously constraints. The concepts of independent sets and colorings carry over to hypergraphs as well. The downside of this generalization is that hypergraphs have proven to be much less amenable to efficient or effective solutions. They are also harder to reason about, with less powerful theoretic tools available.

%% Hypergraph sketches
This paper proposes a way to finesse the hardness of working with hypergraphs, by reducing them to graphs.
We form a \emph{sketch} of a given hypergraph that conservatively captures the essential constraints. The sketch is an ordinary graph
with the property that the solution of an optimization problem on the graph is also a valid solution in the hypergraph. Necessarily, the other direction need not hold exactly, but the big question is how much of a loss in precision is sacrificed by sketching. The obvious benefit of sketching is that the rich theory of graph algorithmics can be brought to bear, with commensurate conceptual simplifications.

%% Here: geometric hypergraphs from wireless scheduling
The object of study in this work are certain geometrically-defined hypergraphs that capture interferences in wireless systems. Our main result is that they can be sketched at a low cost. This implies major improvements for a large family of such scheduling problems.

\mypar{Wireless scheduling}
% Wireless scheduling, diversity
The effective use of wireless networks revolves around utilizing fully all available diversity.
This can include power control, scheduling, routing, channel assignment and
transmission rate control on the communication links.
%Continuous operation: one type of problem.
At the heart of this large space of optimization problems are certain fundamental problems, which either involve maximizing throughput within a time frame or minimizing the number of time slots. 

% MWIS: Prototypical
Consider the following prototypical problem, known as Max Weighted Independent Set of Links (\wcapacity):
We are given a set of links, each of which is a pair of sender and receiver nodes, and a positive weight associated with each link.
Underlying is a system of constraints that
stipulate which subsets of links can be simultaneously active due to the unavoidable \emph{interference} between links.
The objective is to find a maximum weight subset of links that can be simultaneously active.

%% Physical model
To capture interference, the model of choice for analytic studies of wireless systems is the \emph{physical} or SINR (\emph{Signal to Interference and Noise Ratio}) model. 
Each node is located in a metric space and each active transmission incurs \emph{fractional} interference on every other link, that is a function of the relative positions of the nodes of the two links. A transmission is successful as long as the total interference from the other links does not exceed a given threshold.
This model is provably more accurate than binary (or graph-based) models.
It is not without its weaknesses in fully capturing the reality of wireless systems, which we will address later in the paper.
However, it is arguably the measuring stick with which we compare other models, and forms the basis of more refined models.

%% Our objectives: Examining dependence on key parameters
All the scheduling problems of interest here are NP-hard.
Our objective is to give efficient algorithms that provide good performance guarantees. When constant-approximations are out of reach, we seek slow-growing functions of the key parameters: $n$, the number of links, and $\Delta$, the
diversity in link lengths (i.e., the ratio between the length of the longest to the shortest link).
A secondary objective is to derive \emph{simple} algorithms based on \emph{local} rules, as such methods are most likely to be applicable or informative
in constrained system setting, e.g., distributed.

% Our sandwiching approach
Our approach is to produce two graphs, $G_{lo}$ and $G_{hi}$, that \emph{sandwich} the input hypergraph $\cH$ in the following sense: every independent set of $G_{hi}$ is also an independent set of $\cH$, and every independent set of $\cH$ is also an independent set of $G_{lo}$. These graphs belong to a new class that generalizes the intersection graphs of disks, and they share the desirable properties of constant-approximability of (weighted) maximum independent set and graph coloring problems, among others.
For instance, to solve the {\wcapacity} problem on $\cH$, we simply run a weighted independent set algorithm on $G_{hi}$ and output the solution.

%% The price
The ``price'' of the graph abstraction is given by the difference between the upper and the lower sandwich graphs. 
Technically, it is bounded by taking an independent set in $G_{lo}$ and
considering its chromatic number in $G_{hi}$.
This factor is either\footnote{All logarithms in this paper are base-2.} $O(\log^* \Delta)$ or $O(\log\log \Delta)$, depending on the setting.
We show that this is actually the best possible price that can be achieved with any conflict graph representation.

\subsection{Our Results}

% General framework
We develop a general approximation framework that can tackle nearly all wireless scheduling problems, such as TDMA scheduling, joint routing and scheduling and others.
The problems
handled can additionally involve path or flow selection, multiple channels and radios, and packet scheduling.

The approximation factors are \emph{double-logarithmic} (in link and rate diversity) approximation for these problems, exponentially  improving the previously known logarithmic approximations, and, importantly, extending them to incorporate \emph{different fixed data rates and rate control}.

% Oblivious power: Additional benefits of our approach
Our approach also finesses the task of selecting optimum power settings by using \emph{oblivious} power assignment, one that depends only on the properties of the link itself and not on other links. The performance bounds are however in comparison with the optimum solution that can use arbitrary power settings.

% Fixed rate approximation  &  tightness of formulation
In the special case of fixed uniform rates (where all links require the same data rate), our approach yields an even better $O(\log^* \Delta)$-approximation, if we are willing to forego the advantage of oblivious power assignments. 
We show that this is actually the best possible, not only for our construction, but for any formulation involving conflict graph abstractions. 
The same holds for the double-logarithmic factor involving non-uniform data rates.

\mypar{Assumptions}
%% Assumptions
We make some undemanding assumptions about the settings. 
We assume that nodes can adjust their transmission power.

We assume that the networks are
\emph{interference-constrained}, in that interference, rather than the ambient noise, is the determining factor of proper
reception.  This assumption is common and is particularly natural in settings with rate control, since the impact of noise can always be made
negligible by avoiding the highest rates, losing only a small factor in performance.
We also assume that nodes are (arbitrarily) located in a doubling metric, which generalizes Euclidean space, allowing the modeling of some of  non-geometric effects seen in practice. We show that all of our assumptions are necessary (to obtain results of the form given here).
We have not attempted to minimize the constant factors involved in the analysis.

\mypar{Paper Organization} 
We first introduce our sandwiching technique in Sec.~\ref{s:sandwiching} 
and outline  the necessary properties of applicable problems.
We then describe in detail (in Sec.~\ref{s:problems}) a large class of scheduling problems and explain why our results apply to them.

The conflict graph construction is given in Sec.~\ref{s:conflict}, 
where we then proceed to bound in general terms the quality of the sandwiching attained. We also derive the key graph-theoretic properties that allow for constant approximability.

The most technical material is in Sec.~\ref{s:feas}, where we finally introduce the physical model of interference. The main effort is in showing that independent sets in the conflict graphs correspond to feasible sets of links (as per the hypergraph formulation).
This is shown separately for general fixed rates with oblivious power control, and for fixed uniform rates with arbitrary power control.

In Sec.~\ref{s:limitations}, we show that our formulations are best possible, both by showing that no better bounds can be achieved with our types of conflict graphs, and by arguing that every conflict graph formulation essentially matches one of our conflict graphs.
We also show that our assumptions are all necessary, including power control, metric space, and interference-limited setting.

Finally, we provide some context in Sec.~\ref{s:context}, 
first describing related work that did not fall purely under one of the problems studied (Sec. \ref{s:problems}). We then address the issue of strengths and weaknesses of models of interference.

\section{Sandwiching Hypergraphs with Graphs}
\label{s:sandwiching}

\mypar{Independence systems}
% Hypergraph
A \emph{hypergraph} $\cF = (V,\cE)$ consists of a collection $\cE$ of \emph{hyperedges},
which are subsets of a finite set $V$.
A graph is a hypergraph with edges only of size 2. 
In our context, the vertices of the hypergraph correspond to \emph{communication links} and the hyperedges encode constraints caused by interference: if a set of concurrently transmitting links contains one of the hyperedges, then some of the transmissions fail.

% Independence systems
A subset of vertices is \emph{independent} if it contains no hyperedge.
The \emph{independence system} $\cI_\cF$ consists of all the independent sets in the hypergraph $\cF$.

\mypar{Sandwiching}
We seek a pair of graphs: a graph $\ghi$, that constrains the hypergraph from above, and $\glo$, that constrains it from below,
satisfying:
\[ \cI_{\ghi} \subseteq \cI_\cF \subseteq \cI_{\glo}\ . \]

% k-sandwiching
Sandwiching by itself is trivial (using the empty and the complete graph)
but we seek graphs with not-too-different independence systems.
Specifically, the pair of graphs are a \emph{$\rho$-sandwich} if
  \[ \chi(\ghi[S]) \le \rho \cdot \chi(\glo[S])\ , \]
where $\chi(G)$ is the (vertex) chromatic number of $G$.
In other words, every independent set in $\glo$ can then be partitioned into at most $\rho$ independent sets in $\ghi$.
We refer to the smallest such $\rho$ as the \emph{tightness} of the sandwiching, 
which determines the quality of the sandwiching.

The graph $\glo$ used will simply consist of the 2-edges of $\cF$, namely the incompatible pairs of links: $E(\glo) = \{e \in \cE : |e|=2\}$.
We will generally omit the mention of $\glo$ and refer to $\ghi$ as the hypergraph sketch, as well as referring to the \emph{tightness} of $\ghi$.

The idea behind sandwiching is to obtain efficient approximations of an optimization problem involving independence constraints given by a hypergraph $\cF$ by simply solving the same problem with a modified independence system given by the graph $\ghi$. This always gives a feasible solution, and if the problem at hand is ``nice'' (as discussed below), then the tightness of sandwiching gives an upper bound on the efficiency of approximation.

\mypar{Properties of problems for which sandwiching applies}
Sandwiching can be applied to a wide variety of optimization problems that involve constraints in the form of a hypergraph $\cF$. The problems can, e.g., involve various other data outside of the scope of $\cF$. It suffices that three properties hold:
\begin{description}
\item[Monotonicity] If $\cF, \cF'$ are hypergraphs with
$\cI_\cF \subseteq \cI_{\cF'}$, then $OPT(\cF) \ge OPT(\cF')$ for a minimization problem, and $OPT(\cF) \le OPT(\cF')$ for a maximization problem, where $OPT$ is the optimum measure of the problem.
\item[Tightness] The increase (or decrease) in the objective function between the graphs in a $\rho$-sandwich is at most proportional to the tightness of the sandwiching, on every induced subgraph. Namely, 
  $OPT(\ghi[S])/OPT(\glo[S]) = O(\rho)$, for every $S \subseteq V$ (for minimization problems).
\item[Approximability] The problem admits a $c$-approximation algorithm on the class of sandwich graphs from which $\ghi$ is chosen, for a parameter $c$.
\end{description}

Given these properties, the strategy is simply to solve the problem at hand over the constraints given by the graph $\ghi$. We have a $c$-approximation for this restricted form, due to the approximability property, and the tightness and monotonicity properties ensure that, e.g., for a minimization problem, $OPT(\ghi) = O(\rho)\cdot OPT(\glo) = O(\rho)\cdot OPT(\cF)$. Hence, we have a $O(c\rho)$-approximation for the problem in $\cF$.

We show in Sec.~\ref{s:problems} how most wireless scheduling problems can be handled with this strategy. This approximation allows us to bring to bear the large body of theory of graph algorithms, simplifying both the exposition and the analysis. We also present several problems that do not fall under this framework, but can nevertheless be solved using sandwiching in a more customized manner.

\section{Wireless Scheduling Problems}
\label{s:problems}

In many wireless scheduling problems, the basic object of study is a set $L$ of $n$ (potential) communication links, where each link  $i\in L$ represents a single-hop communication request between two wireless nodes -- a sender node $s_i$ and a receiver node $r_i$.

% Interference, feasible sets.
Transmissions on links cause interference to other links. The transmission rate of a link that is scheduled in a given slot depends on its signal to interference ratio (SIR). We consider two kinds of scheduling problems. In \emph{fixed-rate} problems, every link $i$ has a fixed SIR threshold $\beta_i$, and the only requirement is that it achieve the rate associated with this threshold: a link is \emph{successful} if and only if it is scheduled so that its SIR is at least $\beta_i$. Such fixed thresholds give rise to a feasibility formulation that is described in terms of a hypergraph $\cF=(L,\cE)$ on the links: if $S \subseteq L$ is the set of links transmitting (in a given time/frequency slot),
then all the links in $S$ are successful if and only if $S \in \cI_\cF$. We say then that $S$ is a \emph{feasible} set of links.

We also consider problems involving \emph{rate control}. Here, the goal is not to achieve a fixed minimum rate, but to optimize some function of achieved data rates, e.g., the total rate over all links. Hence, this case is not described directly with the hypergraph formulation above, but we can reduce such problems to their fixed-rate variants (essentially) preserving the approximation factor.

The property of our conflict graphs that provides the approximability property is that they are \emph{$O(1)$-inductive independent}. More strongly, they are \emph{$O(1)$-simplicial}, as defined below (See Sec.~\ref{s:graphalgo} for proofs). 
A \emph{$k$-simplicial elimination order} is
one where the \emph{post-neighbors} of each vertex, or the neighbors appearing to its right,  can be covered with $k$ cliques.  A graph is $k$-simplicial
if it has a $k$-simplicial elimination order. In \emph{$k$-inductive independence} graphs, the set of post-neighbors of each vertex is only required to have independence number bounded by $k$ (hence, a $k$-simplicial graph is also $k$-inductive independent).
These graph classes have been well studied, and it is known that among others, vertex coloring and maximum weight independent set problems are $k$-approximable in $k$-inductive independent and $k$-simplicial graphs~\cite{akcoglu, kammertholey, yeborodin}.

\subsection{Fixed-Rate Problems}
\label{ss:fixedrate}

These problems can be classified as \emph{covering} or \emph{packing} problems, where in the former we seek to minimize the number of time slots, while in the latter to maximize a weighted feasible selection of links. Various other objectives might also apply, such as the sum of completion times (i.e.\ indices of time slots), that we do not address here. 

Monotonicity of all these problems is easy to check. They also have efficient approximations on our conflict graphs (and more generally on $O(1)$-inductive independent graphs). So we only need to demonstrate their tightness (for a given $\rho$-sandwich), when not obvious. In some special cases, we need an ad-hoc approach for obtaining the approximation.

It should also be mentioned that fixed-rate problems can be considered in two regimes: \emph{Uniform thresholds}, where the  thresholds $\beta_i$ are equal for all links, and \emph{general thresholds}, where there is no restriction. The only difference in our results concerning these two regimes is that the tightness $\rho$ of sandwiching is significantly better in the case of uniform thresholds. However, the analysis of the problems below does not depend on the particular regime, and assumes a general $\rho$-sandwich is given.

\subparagraph*{Max (Weight) Independent Set of Links ({\wcapacity})}
Find a feasible set of links of maximum cardinality or weight.

 A local-ratio algorithm gives constant-approximation in constant-simplicial graphs \cite{yeborodin}.

\subparagraph*{Admission Control}
The \emph{online} {\wcapacity}, or \emph{admission control} problem, is defined as follows: the links arrive one-by-one, and the algorithm is to irrevocably admit or reject the current link in the feasible set. The quality of a solution is evaluated via the \emph{competitive ratio}, that is, the ratio between the solution value obtained by the online algorithm and that of the optimum offline solution. 

It is known that deterministic online algorithms perform rather poorly, when compared with the offline optimum~\cite{FanghanelGHV13}. 
Hence,~\cite{GHKSV14} considers algorithms on \emph{stochastic input models}, such as the \emph{secretary model}, in which an  adversarial graph is presented in a random order, and the \emph{prophet-inequality model}, in which a random graph is presented in an adversarial order. They present expected constant-competitive ($O(\log n)$-competitive) algorithms for unweighted (weighted, resp.) variants of the problem on constant-inductive independent graphs. Applying this to $\ghi$ and using sandwiching, we obtain expected competitive ratios $O(\rho)$ and $O(\rho \log n)$, respectively, compared with the optimum offline solution in the hypergraph $\cF$.

\subparagraph*{(TDMA) Link Scheduling}
Partition the input set of links into the minimum number of feasible subsets. 

A simple \emph{first-fit} style greedy algorithm gives constant factor approximation to vertex coloring in constant-simplicial graphs \cite{yeborodin}.

\subparagraph*{Online Link Scheduling}
The online variant of Link Scheduling we consider is as follows. The links arrive one by one, in an online manner, and the algorithm should assign each arriving link to a time slot, so that the set of links in each slot is feasible, and the number of slots is minimal. Once a link is assigned to a slot, it cannot be moved to  another one, but its power level can be adjusted with newly arriving links, to reinforce feasibility.

In order to approximate the online scheduling problem, we simply apply an online vertex coloring algorithm to the graph $\ghi$. A graph $G$ is \emph{$d$-inductive} if there is an ordering of the vertices, such that each vertex has at most $d$ post-neighbors in the ordering. It is well known that a simple greedy online algorithm colors $d$-inductive graphs using $O(d\log{n})$ colors~\cite{iranionline}, where $n$ is the number of vertices.  
It is a simple observation that every constant-inductive independent graph $G$ is $O(\chi(G))$-inductive. Hence, we have an algorithm that colors $\ghi$ with $O(\chi(\ghi)\log n)$ colors. By sandwiching, $\chi(\ghi)\le \rho \chi(\glo)$, implying that the obtained algorithm is $O(\rho \log n)$-competitive, compared with the optimum offline solution in the hypergraph $\cF$.

\subparagraph*{Multi-Channel Selection}
Given a natural number $c$ -- the number of channels -- select a maximum number (or weight) of links that can be partitioned into $c$ feasible subsets (a subset for each channel).

There is a constant-factor approximation algorithm for constant-simplicial graphs \cite{yeborodin}.

\subparagraph*{Fractional Scheduling}
In this fractional variant of Link Scheduling, we are additionally given a real-valued demand $d(i)$ on each link $i$, indicating the amount of time that each link needs to be scheduled.
A \emph{fractional schedule} of the links is a collection of feasible sets with rational values $\cS=\{(I_k,t_k) : k=1,2\dots,q\}\subseteq \cI_{\cF}\times\mathbb{R}_+$, where $\cI_{\cF}$ is the set of all feasible subsets of $L$. The sum $\sum_{k=1}^q{t_k}$ is the \emph{length} of the schedule $\cS$. The \emph{link capacity vector} $c_{\cS}:L\rightarrow \mathbb{R}_+$ associated with the schedule $\cS$ is given by
$c_{\cS}(i) = \sum_{(I,t)\in \cS: I\ni i}t$, indicating how much scheduling time the link gets.

The \emph{fractional scheduling problem} is a covering problem, where given a demand vector $d$, the goal is to compute a minimum length schedule that serves the demands, namely,  for each link $i\in L$,
$c_{\cS}(i)\ge d(i)$.

A greedy algorithm presented in~\cite{wan13} achieves constant-approximation on constant inductive independent graphs. 

\subparagraph*{Joint Routing and Scheduling}
 Consider a set of source-destination node pairs (multihop communication requests) $(u_i,v_i)$, $i=1,2,\dots,p,$ with associated weights/utilities $\omega_i>0$. The nodes are located in a multihop network given by a directed graph $G$, where the \emph{edges} of the graph are the transmission links. Let $\cP_i$ denote the set of directed $(u_i,v_i)$ paths in $G$ and let $\cP=\cup_i \cP_i$. 

A \emph{path flow} for the given set of requests is a set $F=\{(P_k,\delta_k): k=1,2,\dots\}\subseteq \cP \times \mathbb{R}_+$.  The \emph{link flow vector} $f_{F}$ corresponding to path flow $F$, with
$f_{F}(i)=\sum_{(P,\delta)\in F: P\ni i}{\delta},$
gives the flow along each link $i$.

The \emph{multiflow routing and scheduling problem} is a covering problem, where given source-destination pairs with associated utilities, the goal is to find a path flow $F$ together with a fractional link schedule $\cS$ of length $1$, such that\footnote{Essentially, the schedule here gives a probability distribution over the feasible sets of links.} for each link $i$, the link flow is at most the link capacity provided by the schedule, $f_{F}(i) \le c_{\cS}(i)$, and the \emph{flow value}
\[
W=\sum_{i=1}^p \omega_i\cdot \sum_{(P_k,\delta_k)\in F, P_k\in \cP_i}{\delta_k}
\]
 is maximized.

A constant-approximation algorithm of~\cite{wan14} (the result holds with unit utilities) for constant-inductive independent graphs applies here.
It should also be noted that the fractional scheduling and routing and scheduling problems can be reduced to the {\wcapacity} problem using linear programming techniques (described e.g.\ in~\cite{jansen03}), as shown in~\cite{wan09}. We will further discuss this in Sec.~\ref{ss:ratecontrol}.

Let us verify that this problem satisfies the tightness property. Consider a feasible solution in $\glo$ that consists of a path flow $F=\{(P_k,\delta_k): k=1,2,\dots\}$ and a schedule $\cS=\{(I_k,t_k) : k=1,2,\dots\}$ of length $\sum_{k\ge 1}t_k=1$, such that $f_{F}(i) \le c_{\cS}(i)$. By the sandwiching property, the schedule $\cS$ can be refined into a schedule $\cS'=\{(I_k^s,t_k)\}_{k,s}$ in $\ghi$, where $\cS'$ serves the same demand vector as $\cS$, and $\cS'$ has length at most $\rho$ times the length of $\cS$. We then scale the refined schedule to have length 1, so that the scaled path flow $F'=\{(P_k,\delta_k/\rho): k=1,2,\dots\}$ together with the new schedule will be feasible in $\ghi$, as all link demands will be served. Clearly, the value of $F'$ is at least a $1/\rho$ fraction of the value of $F$, so we achieve a tightness of $\rho$.

\subparagraph*{Multi-Channel Multi-Antenna Extensions}
% MC-MA extension
 All the problems above can be naturally generalized to the case when there are multiple channels (e.g.\ frequency bands) available and moreover, wireless nodes are equipped with multiple antennas and can operate in different channels simultaneously (MC-MA.)
Each node $u$ is equipped with $a(u)$ antennas numbered from $1$ to $a(u)$ and can (only) use a subset $\cC(u)$ of  channels. 

%% Virtual links
For each link $i = (s_i, r_i)$, we form a collection of 
$a(s_i)a(r_i)|\cC(s_i)\cap \cC(r_i)|$ \emph{virtual} links, 
that correspond to each selection of an antenna of the sender node $s_i$, an antenna of receiver node $r_i$ and a channel $c\in \cC(s_i)\cap \cC(r_i)$ available to both nodes.  We call link $i$ \emph{the original} of its virtual links. 
A set of virtual links $S$ is feasible in MC-MA if and only if no two links in $S$ share an antenna (i.e., they do not use the same antenna of the same node), and the set of originals of links in $S$ using each channel is feasible (in $\cF$). We show that the conflict graphs $\ghi$ and $\glo$ can be extended to this setting, preserving their properties.

% The refinement graph
Let $L$ denote the set of virtual links and $L_o$ the corresponding originals. We define the conflict graphs $\ghim(L)$ and $\glom(L)$ that have a node for each virtual link, with two virtual links adjacent if at least one of the following holds: 1. they share an antenna, or 2. they share a channel \emph{and} their originals are adjacent in $\ghi(L_o)$ or respectively in $\glo(L_o)$, i.e., in the single channel setting. In particular, the replicas of the same original link form an independent set in both graphs.

%% Inductive independence
We prove that if $\ghi$ is $k$-simplicial, then $\ghim$ is $k+2$-simplicial. The other properties follow by similar arguments. To this end, consider a virtual link $i\in L$, and let us see which links are in the neighborhood of $i$. The neighborhood of $i$ can be partitioned into three sets: 1. The virtual links that share the channel with $i$, denoted $O$, 2. The links that use the sender antenna of  $i$, denoted $S$, and 3. The links that use the receiver antenna of $i$, denoted $R$. Note that $O$ consists of replicas of \emph{distinct} links in $L_o$, which are all adjacent with the original of $i$ in $\ghi$. Also, note that $S$ and $R$ form cliques in $\ghim$. It is now easy to see that $\ghim$ is $k+2$-simplicial, where the simplicial ordering is induced by the simplicial ordering of $\ghi$.

\subparagraph*{Spectrum Auctions With Sub-Modular Valuations}
The \emph{spectrum auction problem} is a packing problem that can be considered a generalization of {\wcapacity}, where there are multiple channels and a not-necessarily-additive weight function. 

Given a set $L$ of links, a natural number $c$ (number of available channels) and a valuation function $\omega:L\times 2^{[c]}\rightarrow \mathbb{N}$, find a \emph{feasible} allocation $A:L\rightarrow 2^{[c]}$ that maximizes the sum of valuations $\omega(A)=\sum_{i\in L}{\omega_{i,A(i)}}$, where $[c]={1,2,\dots,c}$. Note that each feasible allocation is a collection of $c$ feasible sets, each corresponding to a channel.
Note also that the problem is reduced to solving a number of {\wcapacity} problems when the valuation function is additive, i.e.\ $\omega_{i,T}=\sum_{j\in T}\omega_{i,j}$, leading to a $O(\rho)$-approximation.

In the more general case when the valuation function $\omega_{i,T}$ is a \emph{submodular} function of $T$ for each link $i$, i.e.\ for any sets $T,T'$ of channels, $\omega_{i,T\cup T'} + \omega_{i,T\cap T'}\le \omega_{i,T} + \omega_{i,T'}$, randomized algorithms presented in~\cite{HoeferK15} give constant-factor approximation for constant-inductive independent graphs and $O(\log n)$-approximation for the physical model, in expectation. These approximations hold for a particular kind of submodular functions called \emph{matroid rank sum functions}. Thus, in order to obtain an (expected) $O(\rho)$-approximation for matroid rank sum functions, we only need to verify the tightness property. 

First, note that any non-negative submodular function $f$ is subadditive, i.e.\ for each set $S$, $f(S)\le \sum_{e\in S} f(e)$. Consider a feasible allocation $S_1,S_2,\cdots,S_c$ in $\glo$. Using sandwiching, we can split each $S_t$ into $\rho$ independent sets $S_t^1,S_t^2,\cdots,S_t^\rho$ in $\ghi$ (where some of the subsets may be empty). Consider (at most) $\rho$ tentative allocations $\{S_1^j,S_2^j,\cdots,S_c^j\}$ for $j=1,2,\cdots,\rho$ and consider the sum of total valuations of these allocations. Let $i$ be any fixed link. In each of the obtained allocations, link $i$ gets a subset of channels and the subsets corresponding to different allocations are disjoint and sum up to the  set of channels allocated to $i$ in the original allocation. This observation and the fact that the valuation function for each link is subadditive imply that the sum of total valuations is at least the total valuation of the original allocation. Since there are at most $\rho$ refined valuations, this implies that the best one of them gives total valuation at most $\rho$ times that of the original allocation.

\subparagraph*{Spectrum Auctions with General Valuations}
When the valuation functions $\omega_{i,T}$ are unrestricted, our framework may not be applied directly, as we cannot guarantee the tightness property. However, in this case we can take advantage of a particular solution proposed in \cite{HoeferKV14}, where a linear programming approach is developed. Using this approach an $O(\sqrt{c})$-approximation is obtained for constant inductive independent graphs, where $c$ is, as before, the number of channels.

 For a vertex $v$ in a $k$-inductive independent graph $G$, let $N^+_G(v)$ denote the set of post-neighbors in the inductive independence order. The linear program (which is a relaxation of the corresponding integer linear program (ILP)) for $k$-inductive independent graphs presented in~\cite{HoeferKV14} is as follows.
\begin{align}
\nonumber \text{Maximize } &\sum_{v\in V}{\sum_{T\subseteq[c]}{\omega_{v,T}x_{v,T}}}   &&\\
 \nonumber \text{s.t. } &\sum_{u\in N^+_G(v)}{\sum_{T\subseteq [c],T\ni j}{x_{u,T}}} \le k && v\in V, j\in [c]\\
\label{E:ilp} & \sum_{T\subseteq [c]}{x_{v,T}} \le 1 && v\in V\\
\nonumber & x_{v,T}\ge 0 && v\in V,T\subseteq[c]
\end{align}

The first constraint corresponds to $k$-inductive independence: The number of post-neighbors of a vertex $v$ that are assigned the same channel $j$ must be bounded by $k$. The second constraint states that each vertex is assigned a single set of channels.

An algorithm based on randomized rounding of the linear program solution  is presented in~\cite{HoeferKV14}, giving $O(k\sqrt{c})=O(\sqrt{c})$-approximate solution in expectation. Again, the problem with this solution is that in the absence of tightness property, as we do not know how the ILP solution compares with the optimal solution in $\glo$. However, we can ``plant'' the tightness property in the ILP, as follows. The key observation is that the only constraint that really depends on the underlying graph is the inductive independence constraint, so by simply replacing the right-hand side of the constraint with $k'=k\cdot \rho$, we obtain, due to sandwiching, that every solution in $\glo$ is a feasible solution in the ILP\footnote{Here we also use the fact that $\ghi$ and $\glo$ have the same inductive independence order.},  even though  the ILP is formulated in terms of $\ghi$. This means that the ILP optimum is a lower bound on the optimum in $\glo$. Hence, the randomized rounding algorithm of~\cite{HoeferKV14} gives us a $O(\rho\sqrt{c})$-approximation of the optimum in $\cF$.

A similar approach can be used to obtain an expected $O(\rho)$-approximation for another special case considered in~\cite{HoeferK15}, when the valuation function for each link is \emph{symmetric}, i.e.\ the valuation depends only on the number of channels rather than on specific subsets: for each link $i$ and subsets $T,T'$ of channels, $\omega_{i,T}=\omega_{i,T'}$ if $|T|=|T'|$.

\subsection{Rate Control and Scheduling}
\label{ss:ratecontrol}

Most of the fixed-rate problems also have variants where choosing the data rates is part of the problem.
We describe here how these problems can be reduced to fixed-rate problems with minimal overhead.
Again, the approximability of the problems in the physical model depends linearly on the tightness of the sandwiching.\footnote{In general, the tightness parameter depends on rates, and since we don't have fixed rates in this case, tightness will depend on max/min rates.}
It is important to stress that the reduction is done to fixed-rate problems with \emph{non-uniform} thresholds, which means that the approximation guarantees for non-uniform thresholds apply here.

\subparagraph*{{\capswcapacity} with Rate Control}
 By Shannon's theorem, given a set $S$ of links simultaneously transmitting in the same channel, the transmission data (bit-)rate $r(S,i)$ of a link $i$ is a non-decreasing function of the SIR, $SIR(S,i)$, of link $i$ (with other parameters, e.g.\ frequency, fixed). Thus, we consider the {\wcapacity} problem where each link $i$ has an associated non-decreasing \emph{utility function} $u^i:\mathbb{R}_+\rightarrow \mathbb{R}_+$, and the weight of link $i$ is the value of $u^i$ at $SIR(S,i)$ if link $i$ is selected in the set, and $0$ otherwise. The goal is, given the links with utility functions,  to find a subset $S$ that maximizes the total utility $\sum_{i\in S} u^i(r(S,i))$. We assume that $u^i(r(S,i))=0$ if $SIR(S,i)<1$, namely when the signal is weaker than the interference.

An $O(\log n)$-approximation for this variant of {\wcapacity} was obtained in~\cite{KesselheimESA12}. We show, by reducing the problem to {\wcapacity} in a modified fixed-rate instance, that this ratio can be replaced with $O(\rho)$, where $\rho$ is the tightness of the fixed-rate instance.

Let us fix a utility function $u$. First, assume that the possible set of weights for each link is a discrete set $u_{min}=u_1<u_2<\cdots<u_{t}=u_{max}$. Then, we can replace each link $i$ with $t$ copies  $i_1,i_2,\cdots,i_t$ with different thresholds and fixed weights, where $\omega_{i_k}=u_k$ and $\beta_{i_k}=\min\{x: u^{i_k}(x) \ge u_k\}$, but $\omega_{i_k}=0$ if $\beta_{i_k}< 1$ (the latter is justified by our assumption that $SIR<1 \Rightarrow u=0$). Now, the problem becomes a {\wcapacity} problem for the modified instance $L'$ with link replicas and fixed weights. Observe that no feasible set in $L'$ contains more than a single copy of the same link, \footnote{This follows from the definition of the physical model (Sec.~\ref{s:model}), the assumption that $\beta_i\ge 1$ for each link $i$, and that the copies occupy the same geometric place.} implying that  each feasible set of the modified instance corresponds to a feasible set of the original instance, with an obvious transformation. 

For the case when the number of possible utility values is too large or the set is continuous, a standard trick can be applied. Let $u^i_{max},u^i_{min}$ be  the minimum and maximum possible utility values for the given link $i$. The  modified instance $L'$ is constructed by replacing each link $i$ with $O(\log u^i_{max}/u^i_{min})$  copies $i_1,i_2,\dots$ of itself and assigning each replica $i_k$ weight $\omega_k=2^{k-1}$ and threshold $\beta_k=\min\{x : 2^{k-1} \le u^i(x) \le 2^{k}\}$ if $\beta_k\ge 1$ and let $\omega_k=0$ otherwise. 

If the value $\log u^i_{max}/u^i_{min}$ is still too large, it may be inefficient to have $O(\log u^i_{max}/u^i_{min})$ copies for each link. It is another standard observation that only the last $O(\log n)$ copies of each link really matter, as restricting to only those links  degrades the approximation by at most a factor 2.

\subparagraph*{Fractional Scheduling with Rate Control}
In this formulation, we redefine a fractional schedule to be  a set $\cS=\{(I_k,t_k) : k=1,2\dots,q\}\subseteq 2^L\times\mathbb{R}_+$, namely, $I_k$ are arbitrary subsets rather than independent ones.  We redefine the link capacity vector ${\hat c}_{\cS}$ to incorporate the data rates as follows:
\begin{equation}\label{E:ratecap}
{\hat c}_{\cS}(i) = \sum_{(I,t)\in \cS: I\ni i}t\cdot r(I,i).
\end{equation}
 The \emph{fractional scheduling with rate control} problem is to find a  minimum length schedule $\cS$ that serves a given demand vector $d$, namely, such that  for each link $i\in L$,
$
{\hat c}_{\cS}(i)\ge d(i).
$

The problem can be formulated as an exponential size linear program $LP_1$, as follows.
\begin{align}
\nonumber \text{Minimize } &\sum_{I\subseteq L}{t_I}   &&\\
\nonumber \text{s.t } & \sum_{I\subseteq L : I\ni i}t_I \cdot r(I,i) \ge d(i) && \forall i\in L\\
\nonumber & t_I\ge 0 && \forall I\subseteq L
\end{align}

The dual program $LP_2$ is then:
\begin{align}
\nonumber \text{Maximize } &\sum_{i\in L}{d(i) y_i}  &&\\
\nonumber \text{s.t. } & \sum_{i\in I}y_i\cdot  r(I,i) \ge 1 && \forall I\subseteq L\\
\nonumber & y_i\ge 0 && \forall i\in L
\end{align}

As~\cite[Thm. 5.1]{jansen03} states, if there is an approximation algorithm that finds a set $\hat I$ such that $\sum_{i\in \hat I}y_i r(\hat I,i)\ge \frac{1}{a}\max_{I\subseteq L}\sum_{i\in I}y_i r(I,i)$, then there is an $a$-approximation algorithm for $LP_1$, where the former algorithm acts as an approximate \emph{separation oracle} for $LP_1$. This auxiliary problem is simply a special case of  {\wcapacity} with rate control. Thus, there is an approximation preserving reduction from the fractional scheduling with rate control to {\wcapacity} with rate control.

\subparagraph*{Routing, Scheduling and Rate Control}
The rate-control variant of the routing and scheduling problem is formulated in the same way as for the fixed rate setting, with only the capacity constraints modified to involve the modified link capacity vector ${\hat c}_{\cS}$ of (\ref{E:ratecap}) incorporating the data rates on the links, instead of $c_{\cS}$.

This problem can also be reduced to {\wcapacity} with rate control, using similar methods as for the fractional scheduling problem. The reduction is nearly identical to the reduction of fixed rate versions of these problems to {\wcapacity}, presented in~\cite[Thm. 4.1]{wan09}.

\section{Conflict Graphs}
\label{s:conflict}

% Points in space
Consider a set $L$ of links, whose nodes 
are represented as points in a metric space with distance function $d$.
We denote $d_{ij}=d(s_i,r_j)$\label{G:asymdistance} and denote by $l_i=d(s_i,r_i)$\label{G:li}
the \emph{length} of link $i$, where $s_i$ ($r_i$) denotes the sender node (receiver node, resp.) of link $i$. Let further $d(i,j)$ denote the minimum distance between the nodes of links $i$ and $j$.

% Sensitivity
Each link $i$ has an associated \emph{sensitivity} $\s_i$, which indicates how sensitive it is to interference. This depends linearly on the strength of the transmission on the link, which depends on the length of the link, but can also depend on the coding. Higher data rates mean higher sensitivity.
For technical reasons, we shall require that $\s_i \ge 4l_i$; we show in Sec.~\ref{s:model} why this is apropos.

  Let $\Ds(S)=\max_{i,j\in S}\{\s_i/\s_j\}$ denote the \emph{sensitivity diversity} of a set $S\subseteq L$ of links.  

% Conflict graph, parameter f
The conflict graphs are parameterized by a positive function $f$.
The graph $G_f = (L,E)$ is defined by
\begin{equation}
 (i,j) \in E \quad \Leftrightarrow \quad  d_{ij}d_{ji} < \s_i\s_j f\left(\s_{max}/\s_{min}\right)\ ,
\label{eq:gengraphdef}
\end{equation}
where $\s_{min}=\min\{\s_i,\s_j\},\s_{max}=\max\{\s_i,\s_j\}$.

When the sensitivity of links is proportional to the length, i.e.,\ $\s_i\sim l_i$ for all $i\in L$ (the ``uniform thresholds'' case considered below), the graph definition can be simplified to:
\begin{equation}
 (i,j) \in E \quad \Leftrightarrow \quad  d(i,j) < \s_{min} f\left(\s_{max}/\s_{min}\right)\ ,
\label{eq:unifgraphdef}
\end{equation}

We will generally assume that $f$ is \emph{sub-linear}, i.e., $f(x)=o(x)$.
We will choose the graph $\glo$ to be $G_1$ (i.e., $f(x)\equiv 1$), and $\ghi$ to be $G_f$ for an appropriate non-decreasing sublinear function $f$, depending on the setting, as discussed in Sec.~\ref{s:feas}. If follows easily from the properties of the physical model (Sec.~\ref{s:model}) that $\glo$ is precisely the set of $2$-edges of the hypergraph corresponding to the physical model.

We next show the key properties of these conflict graphs: their tightness and efficiency of (approximate) computation. We first need to discuss the metric under consideration.

\emph{Metric}. We will assume that the metric in which the nodes are
located shares some of the aspects of the Euclidean
plane. Specifically, the number of unit balls that can fit without
overlap in an $R$-ball is bounded by a polynomial in $R$.  Formally,
we consider metric spaces of \emph{bounded doubling dimension}, where
the doubling dimension $m$ of the metric space is the infimum
of all numbers $\delta > 0$ such that for every $\epsilon \in (0,1]$,
every ball of radius $r>0$ has at most $C\epsilon^{-\delta}$ points of
mutual distance at least $\epsilon r$, where $C\geq 1$ is an absolute
constant.  This generalizes Euclidean spaces, as the $m$-dimensional
Euclidean space has doubling dimension $m$~\cite{heinonen}.

\subsection{Tightness}

We begin by bounding the number of independent sets in $\ghi=G_f$ that are necessary to cover a feasible set. We show that this number is $O(f^*(\Ds(S)))$ for any feasible set $S$, where $f^*$, the \emph{iterated} $f$, is defined for every sub-linear function, as follows.

%% Definition of f^*
For each integer $c\geq 1$, the function $f^{(c)}(x)$ is defined inductively by: $f^{(1)}(x)=f(x)$ and $f^{(c)}(x)=f(f^{(c-1)}(x))$\label{G:frepeated}. Let $x_0=\sup\{x\geq 1, f(x) \ge x\} +1$; such a point exists for every $f(x)=o(x)$. The function $f^*(x)$\label{G:fstar} is defined by:
$
f^*(x)=\arg\min_c\{f^{(c)}(x)\le x_0\}
$ for arguments $x> x_0$, and $f^*(x)=1$ for the rest.
Note that for a function $f(x)=\gamma x^\delta$ with constants $\gamma>0$ and $\delta\in (0,1)$, $f^*(\Ds)=\Theta(\log{\log{\Ds}})$, while for $f(x)=\gamma \log^{t} x$ with constants $\gamma>0$ and $t\ge 1$, $f^*(\Ds)=\Theta(\log^*{\Ds})$. Those will be the functions we are most interested in.

\begin{theorem}\label{T:sandwich}
Our conflict graphs are $O(f^*(\Ds(S)))$-tight, for any non-decreasing sub-linear function $f$.
That is, if $S$ is independent in $\glo$, then $\chi(\ghi[S]) = O(f^*(\Ds(S)))$.
\end{theorem}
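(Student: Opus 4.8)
The plan is to prove the tightness bound by a recursive partitioning of the feasible set $S$ according to sensitivity scales, peeling off one ``level'' of the iterated function $f$ at each step of the recursion. Fix a feasible set $S$, which by hypothesis is independent in $\glo = G_1$, i.e., $d_{ij}d_{ji} \ge \s_i\s_j$ for every pair $i,j \in S$. Let $\Delta = \Ds(S)$ be the sensitivity diversity. First I would bucket the links of $S$ by sensitivity into groups $S_0, S_1, S_2, \dots$ where $S_t$ contains links whose sensitivity lies in a dyadic (or more coarsely, $f$-geometric) band. The key structural claim is a \emph{packing lemma}: within any single band, or more generally within any set whose sensitivity diversity is bounded by some value $D$, the graph $\ghi[S_t]$ has chromatic number $O(1)$ when $D$ is at most the ``fixed point threshold'' $x_0$ of $f$, because then $f(D)$ is comparable to a constant and the edge condition \eqref{eq:gengraphdef}, combined with $\glo$-independence and the doubling dimension of the metric, forces bounded local degree (indeed bounded clique cover of neighborhoods — this is exactly where $O(1)$-simplicial structure of $G_f$ restricted to bounded-diversity sets comes in). This gives the base case $f^*(\Delta) = 1$.

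For the inductive step I would argue that at the top scale, links of near-maximal sensitivity see, via the edge rule \eqref{eq:gengraphdef}, a separation governed by $f(\Delta)$ rather than $\Delta$; so after removing $O(1)$ color classes handling same-scale conflicts, the ``surviving'' interactions among the remaining links are confined to a family of subsets each of sensitivity diversity at most $f(\Delta)$ (plus a bounded multiplicative slack). Applying the induction hypothesis to each such subset costs $\chi \le O(f^*(f(\Delta))) = O(f^*(\Delta) - 1)$ colors, and the geometry (doubling dimension) lets us reuse the same palette across the spatially-separated pieces up to an $O(1)$ blow-up. Summing the $O(1)$ additive cost per level over $f^*(\Delta)$ levels yields $\chi(\ghi[S]) = O(f^*(\Delta))$. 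Making the recursion bottom out correctly at $x_0$ is precisely what the definition of $f^*$ is tailored for.

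The main obstacle — and the heart of the argument — is the packing/coloring lemma at a single scale: showing that $\glo$-independence together with the $G_f$ edge condition and bounded doubling dimension implies that each link has only $O(1)$ ``scale-local'' conflicts up to a clique cover, \emph{uniformly in the scale}, and that the residual cross-scale conflicts genuinely shrink the effective diversity from $\Delta$ to $f(\Delta)$. The subtlety is that $d_{ij}d_{ji}$ is an asymmetric product, not a metric distance, so the usual disk-packing arguments must be routed through the lower bound $d_{ij}d_{ji} \ge \s_i\s_j$ to convert the product condition into a genuine geometric separation at the relevant scale; handling the asymmetry and the requirement $\s_i \ge 4l_i$ carefully is where the real work lies. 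I expect the uniform-threshold simplification \eqref{eq:unifgraphdef} to be the clean model case one should set up first, then lift to the general asymmetric definition.
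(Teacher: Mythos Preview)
Your recursion step is where the argument breaks down. You assert that after spending $O(1)$ colors on ``same-scale'' conflicts, what remains decomposes into spatially separated pieces each of diversity at most $f(\Delta)$ that can share a palette. But you give no mechanism for producing such a decomposition, and none is apparent: two links whose sensitivity ratio exceeds $f(\Delta)$ can perfectly well be adjacent in $\ghi$ (the edge rule involves $f$ of their ratio, not the ratio itself), so there is no reason the cross-scale edges organize themselves into geometrically isolated low-diversity clusters. Removing $O(1)$ color classes cannot reduce the \emph{global} diversity of the remaining set, and you have not identified any structural property that would let the recursion descend from $\Delta$ to $f(\Delta)$.

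The paper does not recurse at all. It proves directly that $\ghi[S]$ is $O(f^*(\Delta))$-degenerate in the non-decreasing sensitivity order, by bounding the number of post-neighbors of any fixed link $i$. The key step you are missing is a \emph{triangle-type lemma} (Lemma~\ref{L:triangles}): if $i$ is adjacent in $\ghi$ to two links $j,k$ with $\s_i \le \s_j \le \s_k$, then one can bound $d_{jk}d_{kj}$ from above by roughly $3\s_i\s_k f(\s_k/\s_i) + \tfrac{2}{3}\s_j\s_k$. Since $j$ and $k$ are \emph{not} adjacent in $\glo$, one also has $d_{jk}d_{kj} \ge \s_j\s_k$, and combining the two forces $\s_j/\s_i < 9 f(\s_k/\s_i)$. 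Thus any two post-neighbors of $i$ in $S$ have sensitivity ratios (to $\s_i$) related by the map $9f$, so listing them in increasing sensitivity gives a chain of length at most $O(f^*(\Delta))$. Your packing lemma for bounded-ratio neighbors (Lemma~\ref{P:simpleset}) is indeed needed, but only to handle the finitely many post-neighbors with $\s_j \le c\,\s_i$; the $f^*$ bound comes entirely from the triangle lemma, not from any recursive diversity reduction.
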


Fix a non-decreasing sub-linear function $f$.
The proof requires the following three lemmas, which encapsulate the technicalities of dealing with our conflict graphs.
\begin{lemma}\label{L:adjacent}
Let links $i,j$, $\s_i\le \s_j$, be adjacent in $\ghi$. Then, $d_{ij} + d_{ji} \le l_i + l_j + 2\sqrt{\s_i\s_j f(\s_j/\s_i)}$.
\end{lemma}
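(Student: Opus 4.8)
The plan is to unwind the adjacency condition~(\ref{eq:gengraphdef}) and combine it with the triangle inequality on the underlying metric. By hypothesis, links $i,j$ are adjacent in $\ghi = G_f$, so $d_{ij}d_{ji} < \s_i\s_j f(\s_{max}/\s_{min})$, and since $\s_i \le \s_j$ this reads $d_{ij}d_{ji} < \s_i\s_j f(\s_j/\s_i)$. Write $t = \sqrt{\s_i\s_j f(\s_j/\s_i)}$, so $d_{ij}d_{ji} < t^2$. The goal is to bound $d_{ij} + d_{ji}$ from above, so I first observe that the product bound alone does not suffice (one of $d_{ij}, d_{ji}$ could be large and the other tiny); we need the metric geometry to control each term. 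Indeed, for a link $i = (s_i, r_i)$ of length $l_i = d(s_i, r_i)$, any pair of cross-distances are comparable: by the triangle inequality, $d_{ij} = d(s_i, r_j) \le d(s_i, r_i) + d(r_i, r_j)$ and $d_{ji} = d(s_j, r_i) \le d(s_j, r_j) + d(r_j, r_i)$, and more to the point $d_{ij} \le l_i + d_{ji}$-type relations hold, since $d(s_i, r_j) \le d(s_i, r_i) + d(r_i, s_j) + d(s_j, r_j)$ — no wait, I want a direct two-term comparison.

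The cleaner route: by the triangle inequality, $d_{ij} \le l_j + d_{ji} + l_i$ is too lossy; instead use $d_{ij} = d(s_i, r_j) \le d(s_i, s_j) + d(s_j, r_j)$ — this introduces sender-sender distance, not what we want either. Let me fix the right inequality. We have $d(s_i, r_j) + d(s_j, r_i) \le d(s_i, r_i) + d(r_i, r_j) + d(s_j, r_j) + d(r_j, r_i)$? That double-counts $d(r_i,r_j)$. The robust statement I will use is the standard SINR-model fact that for any two links, $|d_{ij} - d_{ji}| \le l_i + l_j$, which follows because $d_{ij} \le d(s_i, s_j) + l_j$ and $d_{ji} \le d(s_i, s_j) + l_i$ hence $d_{ij} - d_{ji} \le l_j + l_i$ (and symmetrically, also using receiver-receiver distance for the other sign). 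So set $a = d_{ij}$, $b = d_{ji}$ with $ab < t^2$ and $|a - b| \le l_i + l_j$. Then $a + b = |a-b| + 2\min(a,b) \le (l_i + l_j) + 2\sqrt{ab} < (l_i + l_j) + 2t = l_i + l_j + 2\sqrt{\s_i\s_j f(\s_j/\s_i)}$, using $\min(a,b) \le \sqrt{ab}$. That is exactly the claimed bound.

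So the key steps, in order, are: (1) expand the adjacency definition and reduce it, via $\s_i \le \s_j$, to $d_{ij}d_{ji} < \s_i\s_j f(\s_j/\s_i)$; (2) establish the metric inequality $|d_{ij} - d_{ji}| \le l_i + l_j$ from the triangle inequality applied to the sender nodes (and receiver nodes); (3) combine them through $x + y = |x - y| + 2\min(x,y) \le |x-y| + 2\sqrt{xy}$ to conclude. The main obstacle — really the only nontrivial point — is step (2): getting the correct two-sided triangle-inequality estimate relating the two cross-distances of a pair of links to their lengths, and making sure the direction of the inequality is the one needed. Everything else is the AM–GM-flavored manipulation in step (3), which is routine. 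I would also double-check that no noise or doubling-dimension hypothesis is needed here — this lemma is purely about the graph definition and the metric triangle inequality, so it holds verbatim for any metric.
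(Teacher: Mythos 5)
Your proof is essentially the paper's: bound $\min(d_{ij},d_{ji})$ by $\sqrt{\s_i\s_j f(\s_j/\s_i)}$ from the adjacency product, bound the gap $\max - \min$ by $l_i+l_j$ from the triangle inequality, and add. One small slip in your justification of step (2): from the two upper bounds $d_{ij} \le d(s_i,s_j)+l_j$ and $d_{ji} \le d(s_i,s_j)+l_i$ you cannot deduce $d_{ij}-d_{ji}\le l_i+l_j$ (subtracting two upper bounds gives nothing); you need a lower bound on the subtrahend, e.g.\ the reverse triangle inequality $d_{ji}\ge d(s_i,s_j)-l_i$, or more directly the single chain $d(s_i,r_j)\le d(s_i,r_i)+d(r_i,s_j)+d(s_j,r_j)=l_i+d_{ji}+l_j$, which is exactly how the paper phrases it ($m'_{ij}\le l_i+l_j+m_{ij}$). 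The conclusion of step (2) is correct and the rest is fine.
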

\begin{proof}
Denote $m_{ij}=\min(d_{ji}, d_{ij})$ and $m'_{ij} = \max(d_{ji}, d_{ij})$.
Since $i$ and $j$ are adjacent, $m_{ij}m'_{ji}\le \s_i\s_jf(\s_j/\s_i)$, 
which implies that $m_{ij} \le \sqrt{\s_i\s_j f(\s_j/\s_i)}$.
By the triangular inequality, $m'_{ij} \le l_i + l_j + m_{ij}$.
\end{proof}

\begin{lemma}\label{L:triangles}
Let $i,j,k$ be links and $c>0$ be a number, such that $c\cdot \s_i \le \s_j\le \s_k$, $f(x)\le x/22$ for all $x\ge c$, and $i$ is adjacent to  both $j$ and $k$ in $\ghi$. Then,
$
d_{jk}d_{kj} < 3\s_i\s_kf(\s_k/\s_i) + (2/3)\s_j\s_k.
$
\end{lemma}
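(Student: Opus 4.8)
The goal is to bound $d_{jk}d_{kj}$ — the product that governs adjacency of $j$ and $k$ — using only the adjacencies of $i$ with $j$ and with $k$, together with the size separation $c\s_i \le \s_j \le \s_k$. The natural route is the triangle inequality through the common neighbor $i$: the nodes of link $i$ act as a ``hub'' connecting $j$ and $k$. I would start by writing, say, $d_{jk} \le d_{ji} + l_i + d_{ik}$ and $d_{kj} \le d_{ki} + l_i + d_{ij}$ (routing through sender and receiver of $i$), and then invoke Lemma~\ref{L:adjacent} on the pairs $(i,j)$ and $(i,k)$ to control $d_{ij}, d_{ji}, d_{ik}, d_{ki}$ in terms of $l_i, l_j, \sqrt{\s_i\s_j f(\s_j/\s_i)}$ and $l_k, \sqrt{\s_i\s_k f(\s_k/\s_i)}$ respectively. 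The constraint $\s_i \ge 4 l_i$ (and similarly for the other links) lets me absorb the $l$-terms into the $\s$-terms; in particular $l_i \le \s_i/4 \le \sqrt{\s_i\s_k}/4$ etc.

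The first simplification is that because $f$ is non-decreasing and $\s_i \le \s_j \le \s_k$, the term $\sqrt{\s_i\s_k f(\s_k/\s_i)}$ dominates $\sqrt{\s_i\s_j f(\s_j/\s_i)}$, so after the triangle-inequality step every distance is bounded by something of the order $\sqrt{\s_i\s_k f(\s_k/\s_i)} + \text{(lower-order }\s\text{ terms)}$. Multiplying the two bounds for $d_{jk}$ and $d_{kj}$ produces a sum of cross terms; the ``main'' term is a constant multiple of $\s_i\s_k f(\s_k/\s_i)$, and this is where the constant $3$ in the statement comes from. The delicate terms are those that carry a factor like $l_j l_k$ or $l_j\sqrt{\s_i\s_k f(\s_k/\s_i)}$ — these are the ones that must be charged against $(2/3)\s_j\s_k$ rather than against the $f$-term, since $l_j\le\s_j/4$, $l_k\le\s_k/4$ give $l_j l_k \le \s_j\s_k/16$, and the mixed term $l_j\sqrt{\s_i\s_k f(\s_k/\s_i)} \le (\s_j/4)\sqrt{\s_i\s_k f(\s_k/\s_i)}$ needs the hypothesis $f(\s_k/\s_i)\le \s_k/(22\s_i)$ to conclude $\sqrt{\s_i\s_k f(\s_k/\s_i)} \le \s_k/\sqrt{22}$, hence $l_j\sqrt{\s_i\s_k f(\s_k/\s_i)} \le \s_j\s_k/(4\sqrt{22})$, which is small. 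Adding up all such terms and checking that the accumulated coefficient stays below $2/3$ is the bookkeeping core of the proof; the constant $22$ is presumably tuned precisely so this works out.

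The main obstacle I anticipate is not conceptual but combinatorial accounting: after multiplying out $(d_{ji}+l_i+d_{ik})(d_{ki}+l_i+d_{ij})$ and substituting the Lemma~\ref{L:adjacent} bounds, one gets on the order of a dozen terms, and each must be classified as ``$f$-type'' (bounded by a multiple of $\s_i\s_k f(\s_k/\s_i)$) or ``$\s\s$-type'' (bounded by a multiple of $\s_j\s_k$), using $\s_i\le\s_j\le\s_k$, $4l_i\le\s_i$, $4l_j\le\s_j$, $4l_k\le\s_k$, monotonicity of $f$, and the hypothesis $f(x)\le x/22$ for $x\ge c$ (applied at $x=\s_k/\s_i$, valid since $\s_k/\s_i \ge \s_j/\s_i \ge c$). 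The only genuinely mixed terms are $l_i\cdot d_{ij}$-style products; here I would use $l_i\le \sqrt{\s_i\s_k}/4$ and split the result so that part goes to the $f$-bucket and part — after using $f(\s_k/\s_i)\le\s_k/(22\s_i)$ once more — goes to the $\s\s$-bucket. I would present this as a short chain of inequalities with the terms grouped, rather than a term-by-term table, trusting the reader to verify the constants; the statement's explicit constants $3$ and $2/3$ tell us exactly what the grouping has to achieve.
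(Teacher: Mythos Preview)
Your overall plan — route through the nodes of link $i$, then classify the resulting terms into an ``$f$-bucket'' and an ``$\s\s$-bucket'' using $4l_t\le\s_t$ and $f(x)\le x/22$ — is exactly the right shape. But the specific routing you wrote down has a real gap, not just an accounting nuisance.

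With your single routing $d_{jk}\le d_{ji}+l_i+d_{ik}$ and $d_{kj}\le d_{ki}+l_i+d_{ij}$, you propose to control $d_{ij},d_{ji},d_{ik},d_{ki}$ individually via Lemma~\ref{L:adjacent}. But Lemma~\ref{L:adjacent} only bounds the \emph{sum}, so the best individual bound you get is $d_{ik},d_{ki}\le l_i+l_k+2\sqrt{\s_i\s_kf(\s_k/\s_i)}$. This injects an $l_k$ term into \emph{each} factor, and the product then contains $l_k^2\le\s_k^2/16$, which cannot be absorbed into either $3\s_i\s_kf(\s_k/\s_i)$ or $(2/3)\s_j\s_k$ (nothing bounds $\s_k$ by a constant times $\s_j$). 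Relatedly, the cross terms $d_{ji}\cdot d_{ki}$ and $d_{ij}\cdot d_{ik}$ --- which you do not mention --- are the genuinely hard ones; your remark that ``the only genuinely mixed terms are $l_i\cdot d_{ij}$-style products'' overlooks them.

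The paper avoids this with two tricks you are missing. First, on the $k$-side it does \emph{not} bound $d_{ik}$ and $d_{ki}$ separately: it keeps them intact through the multiplication, so the product term $d_{ik}d_{ki}\le \s_i\s_kf(\s_k/\s_i)$ is used directly, and only the \emph{sum} $d_{ik}+d_{ki}$ (bounded once by Lemma~\ref{L:adjacent}) appears in the cross terms --- hence $l_k$ enters at most once. Second, on the $j$-side it routes through \emph{both} $l_i$ and $l_j$ and takes the minimum, obtaining $d_{jk}\le \s_j/4+m_{ij}+d_{ik}$ and $d_{kj}\le \s_j/4+m_{ij}+d_{ki}$ with $m_{ij}=\min(d_{ij},d_{ji})\le\sqrt{\s_i\s_jf(\s_j/\s_i)}$. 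With these two refinements the bookkeeping you describe goes through and lands exactly on the constants $3$ and $2/3$; without them it does not.
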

\begin{proof}
The triangle inequality and the assumptions $\s_t\ge 4l_t$ (for all $t$) and $\s_j\ge \s_i$ imply that 
\begin{align*}
d_{jk}&\le \min(d_{ik} + l_i+d_{ji}, d_{ik} + l_j+d_{ij})\le \s_j/4  + m_{ij}+ d_{ik},\\
d_{kj}&\le \min(d_{ki} + l_i+d_{ij}, d_{ki} + l_j+d_{ji}) \le \s_j/4  + m_{ij} + d_{ki},
\end{align*}
where we denote $m_{ij}=\min(d_{ji}, d_{ij})$. Multiplying these inequalities gives us:
\begin{equation}\label{E:trianglesmain}
d_{jk}d_{kj} \le \s_j^2/16 + m_{ij}^2 + \s_{j}m_{ij}/2 + (\s_j/4 + m_{ij})(d_{ik} + d_{ki}) + d_{ik}d_{ki}.
\end{equation}
Denote $g_{j} = \s_i \s_j f(\s_j/\s_i)$ and $g_k=\s_i \s_k f(\s_k\s_i)$. 
Since links $j,k$ are adjacent to link $i$, we have that $m_{ij}\le \sqrt{d_{ij} d_{ji}} \le \sqrt{g_j}$ and $d_{ik}d_{ki}\le g_k$, and using Lemma~\ref{L:adjacent}, we also have that $d_{ik} + d_{ki}\le \s_i/4 + \s_k/4 + 2\sqrt{g_k}$. By plugging these bounds in (\ref{E:trianglesmain}) and simplifying, we obtain
\begin{align*}
d_{jk}d_{kj} &< \frac{3\s_j\s_k}{16} + g_j + \frac{\s_j}{2}\sqrt{g_j} 
+ \s_j\sqrt{g_k} + \frac{\s_k}{2}\sqrt{g_j} + 2\sqrt{g_jg_k} + g_k\\
&\le \frac{3\s_j\s_k}{16} + \frac{\s_j^2}{22} + \frac{\s_j^2}{2\sqrt{22}} + \frac{\s_j\s_k}{\sqrt{22}} + \frac{\s_j\s_k}{2\sqrt{22}} 
+ 2g_k + g_k\\
&< 3g_k + (2/3)\s_j\s_k,
\end{align*}
where to obtain the second inequality, we used the assumption that $f(\s_j/\s_i)\le \s_j/(22\s_i)$ and $f(\s_k/\s_i)\le \s_k/(22\s_i)$, and the inequality $g_j\le g_k$, which follows from the assumption that $f$ is non-decreasing and that $\s_k\ge \s_j$. 
\end{proof}

\begin{lemma}\label{P:simpleset} 
Let $c>0$ be a constant, $i$ be a link, and $S$ be a set of neighbors of $i$ in $\ghi$ such that $\s_i \le \s_j\le c\cdot \s_i$ holds for all $j\in S$. Then, $S$ can be partitioned into $O(1)$ cliques in $\glo$.
\end{lemma}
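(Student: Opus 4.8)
The plan is to show that the neighborhood $S$ of the link $i$, restricted to links $j$ with $\s_i \le \s_j \le c\,\s_i$, lives in a bounded region of the metric space, and then to use the doubling dimension to partition $S$ into $O(1)$ groups, each of which is a clique in $\glo = G_1$. First I would observe that since every $j \in S$ is adjacent to $i$ in $\ghi = G_f$, Lemma~\ref{L:adjacent} gives $d_{ij} + d_{ji} \le l_i + l_j + 2\sqrt{\s_i\s_j f(\s_j/\s_i)}$; combined with $\s_i \le \s_j \le c\,\s_i$, $\s_t \ge 4 l_t$, and the fact that $f$ is sub-linear so that $f(x) = O(1)$ for $x \in [1,c]$ (here $c$ is constant, so $f(\s_j/\s_i) \le \max_{1 \le x \le c} f(x) = O(1)$), this bounds $d_{ij}, d_{ji}$ by $O(\s_i)$. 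Hence the sender and receiver of every link $j \in S$ lie within distance $O(\s_i)$ of, say, $s_i$, so all nodes of links in $S$ lie in a ball of radius $R = O(\s_i)$ centered at $s_i$.

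Next I would set up the clique-cover via a packing argument. Two links $j, k \in S$ are \emph{non}-adjacent in $\glo$ precisely when $d_{jk} d_{kj} \ge \s_j \s_k$, which (since $\s_j, \s_k \ge \s_i$) forces $\max(d_{jk}, d_{kj}) \ge \s_i$, and in particular $d(j,k) \ge \s_i - l_j - l_k \cdot$(something) — more carefully, I would argue that non-adjacency in $\glo$ together with $l_j, l_k \le c\,\s_i/4$ implies the minimum distance $d(j,k)$ between the node sets of $j$ and $k$ is at least $\Omega(\s_i)$ (choosing constants so the link lengths, which are $O(\s_i)$, do not swamp the $\s_i$ separation; if necessary I would instead compare the midpoints or the senders and absorb the $O(\s_i)$ link lengths into the constant). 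Consequently, if $T \subseteq S$ is a set of pairwise $\glo$-non-adjacent links, their senders form a set of points in the $R$-ball around $s_i$ with pairwise distance $\Omega(\s_i) = \Omega(R)$. By the doubling-dimension bound (with $\epsilon = \Omega(1)$ and radius $R$), any such $T$ has size at most a constant depending only on $c$ and the doubling dimension $m$. In other words, the independence number of $\glo[S]$ is $O(1)$.

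To finish, I would invoke the fact (used throughout the paper, and established in the graph-algorithms section) that $\glo$ is itself $O(1)$-simplicial, or more simply argue directly: a graph on vertex set $S$ whose complement has bounded clique number has bounded independence number, but I actually want a clique \emph{cover} of $S$ in $\glo$. Here I would use that $\glo$ restricted to $S$ has independence number $\alpha = O(1)$, and that $\glo$ is a conflict graph of the kind that is $\chi$-bounded — specifically, by the same doubling/packing reasoning applied locally, $\glo[S]$ has bounded inductive independence, hence bounded chromatic number of its complement, i.e. $S$ is covered by $O(1)$ cliques of $\glo$. The cleanest route is: cover $S$ greedily, repeatedly pulling out a maximal clique of $\glo[S]$; since $\glo[S]$ has independence number $O(1)$ and is (as shown elsewhere in the paper) perfect-like enough — or, failing that, just note that the senders of $S$ sit in a bounded-radius ball and partition that ball into $O(1)$ sub-regions of diameter $< \s_i/2$, so that any two links with senders in the same sub-region and of length $O(\s_i)$ are adjacent in $\glo$ after a final constant-factor adjustment of the geometric partition; this yields the $O(1)$-clique partition directly.

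The main obstacle I anticipate is the last step's geometry: a small-diameter region of senders does not immediately make the corresponding links a $\glo$-clique, because $\glo$-adjacency is governed by the product $d_{jk}d_{kj} < \s_j\s_k$ of the two \emph{cross}-distances, and link lengths are only bounded by $O(\s_i)$, comparable to the separation $\s_i$ we are trying to exploit. Getting the constants to line up — making the geometric partition fine enough that same-cell links are provably $\glo$-adjacent while still using only $O(1)$ cells — is where the real care is needed; the standard fix is to partition not the senders but (sender, receiver)-pairs, or equivalently to note $d_{jk} \le d(s_j, s_k) + l_k$ and $d_{kj} \le d(s_j,s_k) + l_j$, so $d_{jk}d_{kj} \le (d(s_j,s_k) + O(\s_i))^2 < \s_j\s_k$ whenever $d(s_j,s_k)$ is a small enough constant fraction of $\s_i$, which does hold after choosing the cell diameter appropriately, giving the desired $O(1)$ bound on the number of cells from the doubling dimension.
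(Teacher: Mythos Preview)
Your approach is essentially the same as the paper's: bound all of $S$ inside a ball of radius $O(\s_i)$ (the paper does this directly from the adjacency condition $d_{ij}d_{ji}\le cf(c)\s_i^2$ rather than via Lemma~\ref{L:adjacent}), then partition that ball into $O(1)$ cells of small diameter so that same-cell links form a $\glo$-clique, and count cells by the doubling property. The paper uses receiver--receiver distance where you use sender--sender distance, which is symmetric and immaterial. The detour through bounding the independence number of $\glo[S]$ is unnecessary---you correctly abandon it in favor of the direct geometric partition, which is exactly what the paper does.

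There is one genuine slip in your final inequality. You write $d_{jk}d_{kj}\le (d(s_j,s_k)+O(\s_i))^2$ and claim this is $<\s_j\s_k$ once $d(s_j,s_k)$ is a small fraction of $\s_i$. But the $O(\s_i)$ term here is $\max(l_j,l_k)$, which can be as large as $c\s_i/4$; when $c$ is large (and in the application $c\approx 22$) you cannot make $(c\s_i/4)^2<\s_j\s_k$ in general, since $\s_j\s_k$ may be as small as $\s_i^2$ up to $c\s_i^2$. The fix is to keep the two factors separate and use $l_k\le\s_k/4$, $l_j\le\s_j/4$ together with $\s_i\le\s_j,\s_k$: if $d(s_j,s_k)\le\epsilon\s_i$ then
\[
d_{jk}\le \epsilon\s_i+l_k\le (\epsilon+\tfrac14)\s_k,\qquad d_{kj}\le (\epsilon+\tfrac14)\s_j,
\]
so $d_{jk}d_{kj}\le(\epsilon+\tfrac14)^2\s_j\s_k<\s_j\s_k$ for any $\epsilon<3/4$. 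The paper does exactly this (with receivers and threshold $3\s_i/4$). With that correction your argument goes through.
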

\begin{proof}
 Observe that any two links, whose receivers are within distance $3\s_i/4$, are adjacent in $\glo$. Indeed, if $j,k$ are such that $d(r_j,r_k)\le 3\s_i/4$, then by the triangle inequality, $d_{jk}\le l_j+3\s_i/4 \le \s_j$ (recall that $\s_j\ge 4l_j$), and similarly, $d_{kj}\le \s_k$, implying that $d_{jk}d_{kj}\le\s_j\s_k$. Now, consider the following partitioning of $S$ into cliques: 1. Start with an arbitrary link $i$, and let $K_i$ be the set of all links $j\in S$ with $d(r_j,r_i)\le 3\s_i/8$. 2. Remove $K_i$ from $S$ and repeat, until $S$ is empty. 

This procedure partitions $S$ into subsets $K_{i_t}$ indexed by links $i_t$. By the discussion above, each subset $K_{i_t}$ is a clique in $\glo$. Moreover, by construction, the sender nodes of $i_t$ and $i_{t'}$ for $t\neq t'$ are at mutual distance at least $3\s_i/8$. On the other hand, as assumed, each link $j$ in $S$ satisfies $d_{ji} d_{ij} \le \s_i \s_j f(\s_j/\s_i) \le c f(c)\s_i^2 $, so $d(i,j) \le \sqrt{c \cdot f(c)} \s_i$.
Then, it is easy to see that all sender nodes $r_j$ of links $j\in S$ are located within a ball of radius $c' \s_i$, with $r_i$ as center, where $c'=c+1+\sqrt{cf(c)}$.  Hence, the number of cliques obtained is at most $(8c'/3)^m=O(1)$, where $m$ is the doubling dimension of the space.
\end{proof}

\begin{proof}[Proof of Theorem~\ref{T:sandwich}]
Recall that a graph is $d$-inductive (or $d$-degenerate) if there is an ordering of the vertices so that each vertex has at most $d$ post-neighbors. It is well known that a greedy algorithm uses at most $d+1$ colors on $d$-inductive graphs.
Thus, it suffices for us to show that each independent set $S$ in $\glo$ induces a $O(f^*(\Delta))$-\emph{inductive} subgraph in $\ghi$, with respect to a non-decreasing order of links by sensitivity. Namely, for every link $i\in S$, the number of links $j\in S$ with $\s_j\ge \s_i$ that are adjacent to $i$ is in $O(f^*(\Delta(S)))$. 

To show inductiveness, consider a link $i\in S$, and let $S^+$ denote the set of links $j\in S$ with $\s_j\ge \s_i$ that are adjacent to $i$ in $\ghi$. Further, partition $S^+$ into subsets $S_{\ge c}^+$ and $S_{<c}^+$, containing the links $j$ with sensitivity greater (respectively, less) than $c\s_i$, where $c>0$ is a constant described in Lemma~\ref{L:triangles}. 
	Lemma~\ref{P:simpleset} implies that $|S_{<c}^+|=O(1)$, so it remains to show that $|S_{\ge c}^+|=O(f^*(\Delta(S)))$.
	
	Let $j,k\in S^+_{\ge c}$. Lemma~\ref{L:triangles} implies that $d_{jk}d_{kj} < 3\s_i\s_kf(\s_k/\s_i) + (2/3)\s_j\s_k$. On the other hand, we have $d_{jk}d_{kj}>\s_j\s_k$, since $j$ and $k$ are not adjacent in $\glo$. Combining, we obtain that $(1/3) \s_i \s_k < 3\s_i \s_k f(\s_k/\s_i)$, which leads to 
$\lambda_j < 9f(\lambda_k)$, using notation $\lambda_t=\s_t/\s_i$. Assume, w.l.o.g., that $S^+_{\ge c}=\{1,2,\dots,h\}$, and $\s_j\le \s_k$ for $j<k$. Then, denoting $g(x)\equiv 9f(x)$ we have:
	\[
	c\le \lambda_1 <g(\lambda_2)<g(g(\lambda_3))<\dots<g^{(h-1)}(\lambda_h),
	\]
	which, together with the assumption that $f(x)<x$ for all $x\ge c$ (see Lemma~\ref{L:triangles}) implies that $h-1\le g^*(\lambda_h)=O(f^*(\Delta(S)))$.
\end{proof}

\subsection{Algorithmic Properties of the Conflict Graphs}\label{s:graphalgo}

Computability of our conflict graph construction is demonstrated through the notion of \emph{$k$-simplicial graphs}, which generalize chordal graphs. In particular, we show that every conflict graph $G_f$ has a constant-simplicial elimination ordering, where the set of post-neighbors of every vertex can be covered with $O(1)$ cliques.

A function $f$ is \emph{strongly sublinear} if for each constant $c\ge 1$, there is a constant $c'$ such that $cf(x)/x\le f(y)/y$ for all $x,y\ge 1$ with $x\ge c'y$. 
For example, the functions $f(x)=x^{\delta}$, $\delta<1$, and $f(x)=\log{x}$ are strongly sublinear.

\begin{theorem}\label{T:inductiveindep}
Let $f$ be a strongly sublinear function with $f(x)\ge 1$ for all $x\ge 1$. For every set $L$, the graph $G_f(L)$  is $O(1)$-simplicial.
The corresponding ordering is given by non-decreasing sensitivity.
\end{theorem}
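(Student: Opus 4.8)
The plan is to fix the ordering of links by non-decreasing sensitivity and show that for each link $i$, its post-neighbors (links $j$ with $\s_j \ge \s_i$ adjacent to $i$ in $G_f$) can be covered by $O(1)$ cliques of $G_f$. This mirrors the structure of the tightness proof, but now we must cover with cliques of $\ghi = G_f$ rather than merely bound chromatic number, so the argument has to be uniform over the whole post-neighborhood rather than peeling off a doubly-logarithmic chain. As in the proof of Theorem~\ref{T:sandwich}, I would split the post-neighborhood $S^+$ of $i$ into the links with $\s_j \le c\s_i$ and the links with $\s_j > c\s_i$, for a suitable constant $c$ (the one from Lemma~\ref{L:triangles}, chosen so that strong sublinearity makes $f(x)\le x/22$ beyond $c$).

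For the bounded-ratio part $S^+_{<c}$, Lemma~\ref{P:simpleset} already gives a partition into $O(1)$ cliques of $\glo$, and since $\glo = G_1$ is a subgraph of $G_f$ (as $f \ge 1$), these are also cliques of $G_f$; so that part is done. The work is in $S^+_{\ge c}$. Here I would argue that this set has \emph{bounded size} $O(1)$ — then trivially it splits into $O(1)$ singleton cliques. The key is Lemma~\ref{L:triangles} together with strong sublinearity: for $j,k \in S^+_{\ge c}$ with $\s_j \le \s_k$, non-adjacency in $\glo$ would give $d_{jk}d_{kj} > \s_j\s_k$, while adjacency of $i$ to both and Lemma~\ref{L:triangles} give $d_{jk}d_{kj} < 3\s_i\s_k f(\s_k/\s_i) + (2/3)\s_j\s_k$, forcing $\s_j < 9\s_i f(\s_k/\s_i)$. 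Unlike in the tightness proof, I want to exploit strong sublinearity: since $\s_k \ge \s_j \ge c\s_i$, strong sublinearity bounds $f(\s_k/\s_i)$ in terms of $f(\s_j/\s_i)\cdot(\s_k/\s_j)$ up to a constant, and feeding this back should show that in fact \emph{all} pairs in $S^+_{\ge c}$ that are non-adjacent in $\glo$ must have sensitivities within a constant factor of each other — contradicting non-adjacency via Lemma~\ref{P:simpleset} unless the set is small. More directly: if $S^+_{\ge c}$ were large, it would contain two links non-adjacent in $\glo$ with a large sensitivity gap, and chaining the inequality $\s_j < 9\s_i f(\s_k/\s_i)$ along a sorted subsequence (exactly as in Theorem~\ref{T:sandwich}) shows the gap between consecutive sensitivities can only be absorbed $O(f^*)$ times — but strong sublinearity upgrades this: one can show the sensitivities in any $\glo$-independent subset of $S^+_{\ge c}$ span only a constant-factor range, so such a subset has $O(1)$ size by a packing argument like Lemma~\ref{P:simpleset}, and hence $S^+_{\ge c}$ itself has $\glo$-independence number $O(1)$; then by Lemma~\ref{P:simpleset}-type ball-packing it decomposes into $O(1)$ cliques.

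The main obstacle I anticipate is making the last step fully rigorous: showing that $S^+_{\ge c}$ has bounded \emph{independence number in $\glo$} (so it can be covered by $O(1)$ $\glo$-cliques, hence $G_f$-cliques), rather than just bounded chromatic number in $G_f$. The inequality $\s_j < 9\s_i f(\s_k/\s_i)$ from Lemma~\ref{L:triangles} applies to any pair, but to get a constant bound on an antichain (in the $\glo$-adjacency sense) I need strong sublinearity to collapse the admissible sensitivity range to a constant ratio; then a doubling-dimension packing bound on the receiver (or sender) nodes, as in Lemma~\ref{P:simpleset}, finishes it. I would carry out the steps in this order: (1) fix $c$ via strong sublinearity; (2) dispatch $S^+_{<c}$ by Lemma~\ref{P:simpleset} and $\glo \subseteq G_f$; (3) apply Lemma~\ref{L:triangles} to pairs in $S^+_{\ge c}$ to derive the sensitivity inequality; (4) use strong sublinearity to show any $\glo$-independent subset of $S^+_{\ge c}$ has constant sensitivity diversity; (5) apply a ball-packing argument to bound its size by $O(1)$, hence cover $S^+_{\ge c}$ by $O(1)$ cliques; (6) combine to get an $O(1)$-simplicial elimination order.
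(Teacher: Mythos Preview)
Your decomposition into $S^+_{<c}$ and $S^+_{\ge c}$ and your treatment of $S^+_{<c}$ via Lemma~\ref{P:simpleset} are correct and match the paper. But your handling of $S^+_{\ge c}$ is both overcomplicated and contains a real gap. You propose to show either that $S^+_{\ge c}$ has bounded size (false: it can be arbitrarily large) or that it has bounded $\glo$-independence number and then cover it by $O(1)$ $\glo$-cliques via a ``Lemma~\ref{P:simpleset}-type ball-packing''. The latter does not go through: Lemma~\ref{P:simpleset} crucially needs all sensitivities within a constant factor of $\s_i$, which is precisely what fails in $S^+_{\ge c}$; and your step (4), deducing constant sensitivity diversity of a $\glo$-independent subset from $\s_j < 9\s_i f(\s_k/\s_i)$, points the wrong way --- strong sublinearity makes $f(\s_k/\s_i)/(\s_k/\s_i)$ \emph{small} when $\s_k$ is large, so this inequality allows $\s_k/\s_j$ to be arbitrarily large, not bounded.

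The paper's argument is much more direct: it shows that $S^+_{\ge c}$ is a \emph{single clique in $G_f$}. Take any $j,k \in S^+_{\ge c}$ with $\s_j \le \s_k$. Lemma~\ref{L:triangles} gives $d_{jk}d_{kj} \le 3\s_i\s_k f(\s_k/\s_i) + (2/3)\s_j\s_k$. Now apply strong sublinearity directly to the first term: choose $c$ so that $f(x)/x \le (1/9)f(y)/y$ whenever $x \ge c y$; with $x=\s_k/\s_i$ and $y=\s_k/\s_j$ (so $x/y = \s_j/\s_i \ge c$) this yields $\s_i f(\s_k/\s_i) \le (1/9)\s_j f(\s_k/\s_j)$. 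Hence $d_{jk}d_{kj} \le (1/3)\s_j\s_k f(\s_k/\s_j) + (2/3)\s_j\s_k \le \s_j\s_k f(\s_k/\s_j)$, using $f \ge 1$. So $j$ and $k$ are adjacent in $G_f$, and $S^+_{\ge c}$ is one clique. You had all the ingredients (Lemma~\ref{L:triangles} plus strong sublinearity) but applied strong sublinearity to the wrong pair of arguments; comparing $f$ at $\s_k/\s_i$ versus $\s_k/\s_j$ --- rather than chasing $\glo$-independence --- is the move that closes the argument.
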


\begin{proof}
Order the links in non-decreasing sensitivity (ties broken arbitrarily).
Fix  a link $i\in L$, and let $T$ be its neighbors in $G_f$ with $\s_j\ge\s_i$, for all $j\in T$. 
We shall show that the links in $T$ of sensitivity greater than $c \s_i$ form a single clique, for some constant $c=c_f$ (specified below).
The links $j\in T$ of sensitivity $\s_j\le c \s_i$ can be covered with constant number of cliques, by Lemma~\ref{P:simpleset}.

Let $j,k$ be two links in $T$, with $\s_k\ge \s_j\ge c\s_i$. It suffices to show that $j$ and $k$ are adjacent in $G_f$.
By Lemma~\ref{L:triangles}, we have that
$
d_{jk}d_{kj} \le 3\s_i\s_kf(\s_k/\s_i) + (2/3)\s_j\s_k.
$
Since $f(x)$ is strongly sublinear we can choose constant $c$ such that $f(x)/x\le (1/9)f(y)/y$ for all $x,y$ with $x\ge c y$. Hence, $f(\s_k/\s_i)/(\s_k/\s_i) \le (1/9)f(\s_k/\s_j)/(\s_k/\s_j)$, provided that $\s_k/\s_i\ge c\s_k/\s_j$, i.e., $\s_j\ge c\s_i$. Thus, 
\[
d_{jk}d_{kj} \le (1/3)\s_j\s_kf(\s_k/\s_j) + (2/3)\s_j\s_k\le \s_j\s_kf(\s_k/\s_j),
\]
since $f(\s_k/\s_j)\ge 1$. Hence $j$ and $k$ are adjacent in $G_f$, as claimed.
\end{proof}

\section{Feasibility in the Physical Model}
\label{s:feas}

In this section, we derive the graphs $\ghi$ that achieve the sandwiching property for the physical model. The discussion is split into two parts: the case of general thresholds, and the special case of uniform thresholds.
The threshold $\beta_i$ of link $i$ (formally defined below) is the factor by which the interference must be smaller than the signal, in order to have successful transmission in $i$, and 'uniform' refers to the assumption that all links have equal thresholds.

\begin{theorem} The graph $\ghi=G_{\gamma x^\delta}$, for appropriate constants $\gamma>1$ and $\delta\in (0,1)$, together with $\glo$ defined in Sec.~\ref{s:conflict}, gives an $O(\log\log\Ds)$-sandwich of the physical model with general thresholds.
\end{theorem}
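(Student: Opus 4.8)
The plan is to instantiate the general sandwiching machinery of Theorem~\ref{T:sandwich} with the specific power $f(x) = \gamma x^\delta$ and to show that this $f$ is both (i) large enough that independence in $\ghi = G_f$ genuinely implies feasibility in the physical model, and (ii) small enough (sublinear) that the tightness bound kicks in. The tightness half is essentially free: $f(x)=\gamma x^\delta$ is strongly sublinear and non-decreasing, so Theorem~\ref{T:inductiveindep} gives that $G_f$ is $O(1)$-simplicial, and Theorem~\ref{T:sandwich} gives $\chi(\ghi[S]) = O(f^*(\Ds(S)))$; as observed right after the definition of $f^*$, for $f(x)=\gamma x^\delta$ we have $f^*(\Ds) = \Theta(\log\log\Ds)$. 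So the whole content of the theorem is the \emph{upper} sandwiching inclusion $\cI_{\ghi}\subseteq \cI_{\cF}$, i.e.\ that every set of links that is independent in $G_{\gamma x^\delta}$ is a feasible set in the physical model with general thresholds.

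To prove $\cI_{\ghi}\subseteq\cI_\cF$, I would fix a set $S$ independent in $\ghi$, fix a link $i\in S$, and bound the total interference received at $r_i$ from the other links of $S$, showing it is at most $\s_i/\beta_i$ (appropriately normalized) so that $i$'s SINR exceeds $\beta_i$. The standard approach here, given the oblivious power assignment that the introduction promises, is to (a) use the edge-nonexistence condition $d_{ij}d_{ji} \ge \s_i\s_j f(\s_{max}/\s_{min})$ for each $j\in S\setminus\{i\}$ to get a good lower bound on the distance from $j$ to $i$ relative to the sensitivities, (b) group the interfering links into \emph{sensitivity classes} (links $j$ with $\s_j$ in a dyadic range, or more precisely in a range $[c^t\s_i, c^{t+1}\s_i)$ and symmetrically below), and (c) within each class use the doubling-dimension packing bound — exactly as in Lemma~\ref{P:simpleset} — to show only $O(1)$ links of each class can be packed near $r_i$, and that their interference contributions form a geometrically decaying series across classes. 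The factor $f(x)=\gamma x^\delta$ with $\delta<1$ is what makes the cross-class sum converge: the separation guaranteed by the graph definition grows like a power of the sensitivity ratio, which beats the polynomial growth in the number of links that the doubling dimension allows. The constant $\gamma$ is then chosen large enough to absorb the resulting geometric series and the threshold $\beta_i$; the constant $\delta$ is chosen in terms of the doubling dimension $m$ (one needs $\delta$ bounded away from $1$, and small enough that $x^{\delta}$ dominates the $m$-dimensional packing count — roughly $\delta > $ something like $1 - 1/(2m)$ won't do; rather one wants the exponent gap to give summability, so $\delta$ close to $1$ is fine and in fact needed so that $f$ is not too aggressive, but it must stay $<1$). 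Since the physical model itself is only introduced in Section~\ref{s:model}, the detailed SINR computation belongs there; here I would cite the model's basic properties (in particular that $\glo$ is exactly the $2$-edge set, already noted in the excerpt) and defer the full interference summation.

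The main obstacle is step (b)--(c): carrying out the interference sum cleanly for \emph{general} (non-uniform) thresholds, where the sensitivities $\s_i$ vary and the graph definition~\eqref{eq:gengraphdef} uses the product $d_{ij}d_{ji}$ rather than a single distance. One has to handle the asymmetry $d_{ij}\ne d_{ji}$ (the signal strength uses $l_i$, the interference uses $d_{ij}$), relate $d(i,j)$ to $\sqrt{d_{ij}d_{ji}}$ via the triangle inequality and $\s_i\ge 4l_i$ (as is done repeatedly in Lemmas~\ref{L:adjacent} and~\ref{L:triangles}), and make sure the oblivious power choice — presumably power proportional to $\s_i$ times a suitable function of $l_i$ — makes the received-signal and received-interference bookkeeping symmetric enough for the class-by-class geometric argument to close. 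A secondary point to get right is that $f(x)=\gamma x^\delta$ must simultaneously satisfy the hypotheses used in the tightness lemmas — non-decreasing, sublinear, and $f(x)\le x/22$ for $x\ge c$ (Lemma~\ref{L:triangles}), and $f(x)\ge 1$ for $x\ge 1$ (Theorem~\ref{T:inductiveindep}) — which constrains $\gamma$ and $\delta$ from the other side; one checks these are mutually compatible (e.g.\ $\gamma\ge 1$, $\delta\in(0,1)$, and then enlarge $\gamma$ for the feasibility direction while noting $f(x)\le x/22$ only needs $x$ beyond a constant, which is absorbed into the constant $c$ and the $O(1)$ term from Lemma~\ref{P:simpleset}). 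Once both directions are in hand, the $O(\log\log\Ds)$-sandwich claim follows by combining $\chi(\ghi[S]) = O(\log\log\Ds(S))$ with the definition of $\rho$-sandwich.
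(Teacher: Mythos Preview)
Your high-level plan is exactly the paper's: tightness is immediate from Theorem~\ref{T:sandwich} (since $(\gamma x^\delta)^* = \Theta(\log\log)$), and the real work is the feasibility inclusion $\cI_{\ghi}\subseteq\cI_\cF$, proved by bounding the total interference at a link $i$ from an independent set $S$, after grouping $S$ into equilength (dyadic-sensitivity) subsets and summing a geometric series across groups.

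Two points where your sketch diverges from what the paper actually does, and where your argument would need sharpening:

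\textbf{The power parameter, not just $\gamma$ and $\delta$, is the crux.} You leave the oblivious power as ``presumably power proportional to $\s_i$ times a suitable function of $l_i$''. The paper uses $P_\tau(i)\sim\s_i^{\tau\alpha}$ for a carefully chosen $\tau\in(0,1)$, and the whole feasibility argument hinges on $\tau$. Links \emph{less} sensitive than $i$ (Lemma~\ref{P:mainlemma1}) force $\tau > 1-\delta(\alpha-m)/\alpha$; links \emph{more} sensitive than $i$ (Lemma~\ref{P:mainlemma2}) force $\tau < 1-(1-\delta)(\alpha-m+1)/\alpha$. These two constraints are compatible only when $\delta > \delta_0 = \frac{\alpha-m+1}{2(\alpha-m)+1}$, which is where the admissible range for $\delta$ actually comes from. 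Your discussion of the $\delta$ constraint (``small enough that $x^\delta$ dominates the packing count \ldots\ so $\delta$ close to $1$ is fine'') has the mechanism backwards: the lower bound on $\delta$ arises from the \emph{tension} between the two directions, not from a one-sided packing-vs-growth comparison.

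\textbf{The within-class bound is not Lemma~\ref{P:simpleset}.} You say that within each sensitivity class one shows ``only $O(1)$ links can be packed near $r_i$'' via Lemma~\ref{P:simpleset}. That lemma is a clique-covering statement for a bounded sensitivity range and is used only for the tightness/simplicial arguments. The feasibility bound within a class is instead the concentric-annuli argument of Lemma~\ref{P:oblcore} (with the summation carried out in Lemma~\ref{L:summation}): there can be arbitrarily many links in a class, but their contributions decay by distance ring, and the doubling property controls how many can sit in each ring. The conversion from the product condition $d_{ij}d_{ji}\ge \s_i\s_j f(\cdot)$ to a usable lower bound on $d(i,j)$ is done via Lemma~\ref{L:distreduction}, which is the step you flag as the ``main obstacle''.
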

As a bonus, this approach uses only \emph{oblivious power assignments}, namely assignments where the power of every link depends only on its length (and global parameters, such as maximum link length). Moreover, the selected graph $\ghi$ guarantees \emph{bi-directional} feasibility, meaning that the feasible sets remain so even after one reverses the directions of a subset of links.

For uniform thresholds, we consider the graph $\ghi=G_{\gamma\tlog}$, for a constant $\gamma>1$, where $\tlog(x)=\max(\log^{t}(x), 1)$, for a constant $t>1$ (specified in Sec.~\ref{s:unithresholds}).
\begin{theorem}
The graph $G_{\gamma\tlog}$, for an appropriate constant $\gamma>1$, together with $\glo$ defined in Sec.~\ref{s:conflict}, gives an $O(\log^*\Ds)$-sandwich of the physical model with uniform thresholds.
\end{theorem}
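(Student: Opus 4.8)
The plan is to split the two requirements of a $\rho$-sandwich and to observe that one of them is essentially free. Recall we must exhibit (i) the containments $\cI_{\ghi}\subseteq\cI_\cF\subseteq\cI_{\glo}$, and (ii) the tightness bound $\chi(\ghi[S])\le\rho\cdot\chi(\glo[S])$ for every $S\subseteq L$, with $\rho=O(\log^*\Ds)$. For (ii): the function $f=\gamma\tlog$ is non-decreasing and sub-linear (since $\gamma\log^{t}x=o(x)$), and $f^*$ coincides with $(\gamma\log^{t})^*$ on large arguments, so $f^*(\Ds)=\Theta(\log^*\Ds)$ by the estimate recorded just before Theorem~\ref{T:sandwich}. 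Hence, given any $S$, a $\chi(\glo[S])$-coloring of $\glo[S]$ partitions $S$ into $\chi(\glo[S])$ sets that are independent in $\glo$; applying Theorem~\ref{T:sandwich} to each and using disjoint color palettes gives $\chi(\ghi[S])=O\big(\chi(\glo[S])\cdot\log^*\Ds(S)\big)$, i.e. $\rho=O(\log^*\Ds)$. The right containment $\cI_\cF\subseteq\cI_{\glo}$ is precisely the fact, noted in Sec.~\ref{s:conflict}, that $\glo=G_1$ is the set of $2$-edges of the physical-model hypergraph $\cF$. So everything reduces to the left containment: for an appropriate constant $\gamma$ (and the constant $t>1$ fixed in Sec.~\ref{s:unithresholds}), every set $S$ that is independent in $\ghi=G_{\gamma\tlog}$ is feasible in the physical model with uniform thresholds $\beta$ --- i.e. some power assignment makes every link of $S$ attain SIR at least $\beta$.

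For this feasibility claim I would proceed by a scale decomposition together with an SIR accounting. Since the thresholds are uniform we have $\s_i$ proportional to $l_i$, so by (\ref{eq:unifgraphdef}) the independence of $S$ says that every pair $i,j\in S$ satisfies $d(i,j)\ge\gamma\,\min(l_i,l_j)\,\tlog\!\big(\max(l_i,l_j)/\min(l_i,l_j)\big)$. Partition $S$ into length classes $L_k=\{i\in S:2^k\le l_i<2^{k+1}\}$, over the $O(\log\Ds)$ relevant values of $k$, pick a length-monotone power assignment (this is where we forgo oblivious power: the assignment may be tuned to $S$, which is what permits the separation function to be merely polylogarithmic), and for a fixed receiver $r_i$ with $i\in L_k$ bound the total interference it receives as $\sum_{k'}I_{k'}(i)$, where $I_{k'}(i)$ is the interference from $S\cap L_{k'}$. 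Two regimes arise. When $|k-k'|=O(1)$, all interferers sit at distance at least a constant times $l_i$ from $r_i$; ordering them by distance, the $q$-th nearest lies at distance at least (a constant times) $l_i\,q^{1/m}$ by the same doubling-dimension packing argument used in Lemma~\ref{P:simpleset}, and since the path-loss exponent $\alpha$ exceeds the doubling dimension $m$, the series $\sum_q q^{-\alpha/m}$ converges, so $I_{k'}(i)$ is a negative power of the constant $\gamma$ times the signal. When $|k-k'|=s$ is large, the length ratio is $\approx 2^{s}$, so the separation is at least $\gamma\,2^{\min(k,k')}\,\tlog(2^{s})=\gamma\,2^{\min(k,k')}\,s^{t}$; feeding this into the same packing estimate and cancelling the $2^{s}$ factors against the length-monotone power yields $I_{k'}(i)=O\big((\gamma s^{t})^{-(\alpha-m)}\big)$ times the signal.

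Summing over all scales then bounds the total interference at $r_i$ by $O\big(\gamma^{-(\alpha-m)}\sum_{s\ge1}s^{-t(\alpha-m)}\big)$ times the signal, and this series converges exactly because $t(\alpha-m)>1$ --- which is why $t$ is taken to be a constant larger than $\max\{1,1/(\alpha-m)\}$. Choosing $\gamma$ large then drives this bound below $1/\beta$, so every link of $S$ clears its threshold and $S$ is feasible; since power control is unrestricted there is no power-budget constraint to check, and the same estimate gives bi-directional feasibility at no extra cost should one want it.

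The step I expect to be the main obstacle is the cross-scale estimate: one must make $I_{k'}(i)$ decay fast enough in $s$ that its sum over the full $\Theta(\log\Ds)$-wide range of length scales is merely a constant, while only having a separation factor that is polylogarithmic ($\log^{t}$) in the length ratio --- not the $\Omega(\log\Ds)$ factor a naive union bound over scales would demand, which is the source of the previously known logarithmic approximations. Getting this to work forces one to exploit simultaneously (a) the polynomial growth of interferer distances within a scale, which is exactly where $\alpha>m$ enters, and (b) the $s^{t}$ growth of the separation across $s$ scales, and to combine them so that the exponent of $\gamma s^{t}$ in the final per-scale bound exceeds $1/t$ (equivalently $\alpha-m>1/t$), so that $\sum_s s^{-t(\alpha-m)}$ converges. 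A secondary technical nuisance is choosing the power assignment so that, within a scale, a link's own received signal dominates while the interference from the potentially dense set of much shorter links remains controlled; this is the concrete place where abandoning oblivious power assignments is used.
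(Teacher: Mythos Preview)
Your reduction of the theorem to the three pieces is correct, and your treatment of tightness (via Theorem~\ref{T:sandwich}) and of the right containment $\cI_\cF\subseteq\cI_{\glo}$ matches the paper exactly. For the feasibility direction your scale decomposition, packing-within-a-scale, and convergent-series-over-scales outline is also the same skeleton the paper uses (Lemma~\ref{L:globalmain} together with Lemma~\ref{L:pcequilength} and Lemma~\ref{L:summation}); in particular your condition $t(\alpha-m)>1$ for the series $\sum_s s^{-t(\alpha-m)}$ to converge is exactly why the paper sets $t=2/(\alpha-m)$.

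The substantive gap is the power assignment. You write ``pick a length-monotone power assignment'' and later call its construction a ``secondary technical nuisance'', but this is in fact the crux, and a generic length-monotone choice does not work. Concretely: with any oblivious assignment $P_\tau(i)\sim l_i^{\tau\alpha}$, the interference-to-signal ratio from a link $j$ that is $s$ scales \emph{shorter} than $i$ is of order $2^{s(1-\tau)\alpha}/(\gamma s^t)^\alpha$, while from a link $s$ scales \emph{longer} it is of order $2^{s\tau\alpha}/(\gamma s^t)^\alpha$; since the separation is only polylogarithmic in $2^s$, no single $\tau\in[0,1]$ makes both sums over $s$ converge. So ``cancelling the $2^s$ factors against the length-monotone power'' cannot be made to work as stated.

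The paper sidesteps this entirely by invoking Kesselheim's sufficient condition (Theorem~\ref{T:kesselheimconstant}): it suffices to bound the symmetric, power-free operator $I(j,i)=\s_j^\alpha/d(i,j)^\alpha$ summed only over \emph{shorter} links $j$, and then a (non-oblivious, set-dependent) power assignment whose existence is guaranteed by that theorem delivers feasibility. Once you adopt this reduction, only the ``shorter links'' half of your cross-scale estimate is needed, and it goes through exactly as you sketch. So your outline becomes a complete proof after replacing the unspecified power choice with an appeal to Theorem~\ref{T:kesselheimconstant}; but without it, the step you flagged as secondary is an actual hole.
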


In this case too, independent sets in $\ghi$ are bidirectionally feasible (potentially using different power assignments for different orientations of links).

Note that by the results of Sec.~\ref{s:conflict}, the tightness provided by the graph $G_{\gamma x^\delta}$ is $O(\log\log\Ds)$, while the tightness provided by $G_{\gamma \tlog}$ is $O(\log^*\Delta)$.
Therefore, the main goal towards the proof of the theorems above is to choose the parameters of the proposed conflict graphs, so as to guarantee feasibility. Thms.~\ref{T:obliviouspowers} and~\ref{T:globalmain} provide these results. Before going on to proofs, we give the formal definitions of the physical model.

\subsection{Model}
\label{s:model}

\textbf{Feasibility.}
The nodes have adjustable transmission power levels.
A \emph{power assignment} for the set $L$ is a function $P:L\rightarrow \mathbb{R}_+$. For each link $i$, $P(i)$\label{G:power} defines the power level used by the sender node $s_i$.
In the \emph{physical model} of communication, when using a power assignment $P$, a transmission of a link $i$ is successful if and only if
\begin{equation}\label{E:sinr}
SIR(S,i)=\frac{P(i)/l_i^{\alpha}}{\sum_{j\in S\setminus \{i\}} P(j)/d_{ji}^{\alpha}}> \beta_i,
\end{equation}
where  $\beta_i\ge 1$\label{G:beta} denotes the minimum signal to noise ratio required for link $i$, $\alpha$\label{G:alpha} is the so-called path loss exponent and $S$ is the set of links transmitting concurrently with link $i$. Here, $P(i)/d^\alpha$ is the power of the sender node $s_i$ received at a distance $d$ from it; hence, the left-hand side of (\ref{E:sinr}) is in fact the ratio of the intended signal over the accumulated interference at the receiver $r_i$.

Note that we omit the noise term in the formula above, since we focus on interference-limited networks. This can be justified by the fact that one can simply slightly decrease the data rates to make the effect of the noise negligible, then restore the rates by paying only constant factors in  approximation. This is further elaborated in the last paragraph below.

A set $S$ of links is called $P$-\emph{feasible} if the condition~(\ref{E:sinr}) holds for each link $i\in S$ when using power assignment $P$. We say $S$ is \emph{feasible} if there exists a power assignment $P$ for which $S$ is $P$-feasible. 

We assume that $\alpha>m$, where $m$ is the doubling dimension of the space; this corresponds to the standard assumption $\alpha >2$ in the Euclidean plane which is necessary to ensure a degree of locality to the communications.

\textbf{Sensitivity.}
We define the sensitivity of link $i$ as $\s_i=\beta_i^{1/\alpha} l_i$. 
We call a set $S$ of links \emph{equilength} if for every two links $i,j\in S$, $\s_i \le 2\s_j$, i.e., $\Ds(S) \le 2$.
Note that with the introduction of sensitivity, the feasibility constraint~(\ref{E:sinr}) becomes: 
\[ \frac{P(i)}{\s_i^\alpha} \ge \sum_{j\in S\setminus \{i\}}\frac{P(j)}{d_{ji}^{\alpha}}\ . \]

\textbf{Power Control.}
Different power control regimes give different notions of feasibility. Guaranteeing just feasibility might require \emph{global power control}, i.e., optimizing the power assignment based on the whole network state. Another option is \emph{oblivious power assignments}, where the power level of a link depends only on local parameters. We will work with a family of oblivious power assignments $P_\tau$ parameterized with $\tau\in (0,1)$, where $P_\tau(i)\sim \s_i^{\tau\alpha}$ for each link $i$. Oblivious power assignments are preferable because they are simple and robust to link churn, but they may give worse performance than global power control.

\textbf{Bi-directional Feasibility.}
Our positive results  hold with a stronger notion of \emph{bi-directional feasibility}, where a set $S$ of links is called bi-directionally feasible if it is feasible and remains so even if we reverse the directions of a subset of links in $S$ (i.e., switch the roles of senders and receivers). 
Note that this might require different power assignments for different orientations. For an oblivious power assignment $P_\tau$, we say a set $S$ is bi-directionally $P_\tau$-feasible if it is $P_\tau$-feasible with every orientation of links.
Note also that bi-directional feasibility is merely a ``bonus'' from our approach; we still compete with the optima of the original model.

\textbf{Ambient Noise and Sensitivity.}
The complete condition for signal reception in the physical model is as follows: $P(i)/\s_i^{\alpha} > \sum_{j\in S\setminus \{i\}} P(j)/d_{ji}^{\alpha} + N_i$, where $N_i\ge 0$ is the \emph{ambient noise} at the receiver $r_i$. We assume, as done in the majority of the related work, that $P(i)\ge cN_i\s_i^\alpha$, holds for a constant $c>1$, and every link $i$. The rationale behind this assumption is to exclude \emph{weak} or \emph{noise-limited} links from consideration, namely links that can tolerate very little interference. Note that $P(i)\ge N_i\s_i^\alpha$ is necessary for the link to be usable even without interference. Scheduling weak links can be considered a separate problem, and is further discussed in Sec.~\ref{ss:weaklb}. 

On the other hand, the assumption $P(i)\ge cN_i\s_i^\alpha$ can be used to suppress the noise term, as follows. Given power assignment $P$ and a number $t>1$, let us call a set $S$ of links \emph{$t$-strong} if $P(i)/\s_i^{\alpha} > t\cdot \sum_{j\in S\setminus \{i\}} P(j)/d_{ji}^{\alpha}+N_i$. Note that a $1/(1-1/c)$-strong set with zero noise ($N_i=0$) is feasible with non-zero noise: $$P(i)/\s_i^{\alpha} -N_i\ge (1-1/c)P(i)/\s_i^{\alpha}>  \sum_{j\in S\setminus \{i\}} P(j)/d_{ji}^{\alpha}.$$
Hence, instead of working with non-zero noise term, we can work with $N_i=0$ and $\s_i'=\s_i\cdot 1/(1-1/c)^{1/\alpha}$. The following result (reformulation of \cite[Cor. 2]{HB15}) shows that this only affects the constant factors in our approximations. It also justifies our assumption in Sec.~\ref{s:conflict} that $\s_i\ge 4l_i$: if the latter does not hold, we simply scale $\s_i$ by a factor of $4$, for all $i$.
 \begin{theorem}\cite{HB15}\label{T:signalstrengthening}
Every $1$-strong set can be partitioned into $\left\lceil 2t\right\rceil$ sets that are $t$-strong.
 \end{theorem}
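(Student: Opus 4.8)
The plan is to read Theorem~\ref{T:signalstrengthening} off from \cite[Cor.~2]{HB15}, the general ``signal-strengthening'' statement for the physical model, after a change of variables that strips away everything but the combinatorial core. Fix the power assignment $P$ and, for each link $i$, introduce the positive constant $c_i := P(i)/\s_i^\alpha - N_i$ (positive on any $1$-strong set) and the interference functional $\mathrm{Int}_S(i) := \sum_{j\in S\setminus\{i\}}P(j)/d_{ji}^\alpha$. Then ``$S$ is $1$-strong'' says exactly $\mathrm{Int}_S(i) < c_i$ for all $i\in S$, while ``$S$ is $t$-strong'' says $\mathrm{Int}_S(i) < c_i/t$ for all $i\in S$ — that is, $t$-strength is nothing but $1$-strength with each link's interference budget shrunk by the factor $t$. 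In this language the theorem reads: a set with $\mathrm{Int}_S(i)<c_i$ for all $i$ partitions into $\lceil 2t\rceil$ parts each with $\mathrm{Int}_{S'}(i)<c_i/t$ for all of its links, which is precisely the form of \cite[Cor.~2]{HB15} for arbitrary positive constants $c_i$.

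The remaining work is pure bookkeeping: one has to check that the noise-suppression performed in the paragraph preceding the theorem (passing to $N_i=0$ and $\s_i'=\s_i/(1-1/c)^{1/\alpha}$) leaves the hypotheses of \cite[Cor.~2]{HB15} intact, so the corollary applies to the noiseless instance and the conclusion pulls back; and then track the leading constant through that corollary to land on exactly $\lceil 2t\rceil$ parts. None of this is delicate — it is a matter of matching conventions.

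For a from-scratch proof one would work with the relative-affectance matrix $a_{ij} := (P(j)/d_{ji}^\alpha)/c_i$, under which $1$-strength is $\sum_{j\in S}a_{ij}<1$ for every $i$ and $t$-strength is $\sum_{j\in V_c}a_{ij}<1/t$ for every $i\in V_c$, and try to $\lceil 2t\rceil$-colour $S$ so that every link's \emph{incoming} affectance inside its colour class is below $1/t$. A first-fit colouring, processing links in some order, secures the ``backward'' half of this budget immediately by one pigeonhole step ($\sum_j a_{ij}<1$ spread over $\lceil 2t\rceil$ classes produces a class into which $i$'s incoming affectance from the already-coloured links is below $1/(2t)$), and the leftover $1/(2t)$ of slack is exactly what must cover the affectance that the not-yet-coloured links subsequently deposit on $i$. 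The crux — and the reason the factor is $2t$ and not $t$ — is that $a_{ij}$ is \emph{not} symmetric (the power assignment is arbitrary, so $a_{ij}$ and $a_{ji}$ can differ by an unbounded factor), so this ``forward'' deposit cannot be charged link by link and instead needs an amortised, global accounting over all classes at once; I expect this amortisation of the asymmetric backward affectance to be the main obstacle, and it is the technical heart of \cite[Cor.~2]{HB15}.
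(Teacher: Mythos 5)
The paper does not prove this theorem: it is stated as a ``reformulation of \cite[Cor.~2]{HB15}'' and cited, with no internal argument to compare against. Your first paragraph is therefore a correct account of what the paper does --- the change of variables $c_i = P(i)/\s_i^\alpha - N_i$, $\mathrm{Int}_S(i) = \sum_{j\ne i} P(j)/d_{ji}^\alpha$ exactly translates ``$t$-strong'' into ``$\mathrm{Int}_S(i) < c_i/t$ for all $i$,'' and the remaining matching of constants against the cited corollary is indeed routine. To that extent the proposal is faithful to the paper.

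Your from-scratch sketch, however, is not a proof, and you are right to flag the gap rather than claim it closed; but the obstacle is more fundamental than the sketch suggests, and the first-fit framing is the wrong one even in the \emph{symmetric} special case. Pigeonhole on backward affectance only guarantees that when $i$ is placed, the links already coloured in its class contribute $< 1/\lceil 2t\rceil$; nothing in a one-pass greedy controls what links placed later will add, and that future contribution can itself approach $1$. For symmetric $a_{ij}$ the natural tool is a potential-function/local-search argument: minimise the total within-class affectance $\Phi = \sum_{c(i)=c(j)} a_{ij}$; at a local optimum for single-vertex moves, each $i$ sits in the class minimising its in-weight, which by pigeonhole is $< 1/k$. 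That is a different argument, not an amortised first-fit. In the asymmetric case the analogous move changes $\Phi$ by $\bigl[\mathrm{in}_i(\text{new}) + \mathrm{out}_i(\text{new})\bigr] - \bigl[\mathrm{in}_i(\text{old}) + \mathrm{out}_i(\text{old})\bigr]$, and the local-optimum bound now involves the total \emph{out}-affectance $\sum_j a_{ji}$, which the $1$-strong hypothesis does not bound. So the statement cannot be proved from ``row sums $< 1$'' plus first-fit plus pigeonhole; either one must exploit further structure that feasibility imposes on the matrix (e.g.\ the reciprocity $a_{ij}a_{ji}<1$ that the SINR inequalities for $i$ and $j$ jointly imply, or the geometric consequence used in the paper's Thm.~\ref{T:necessary}), or one needs a genuinely global argument. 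Deferring ``the technical heart'' to \cite{HB15} thus defers essentially the whole argument, not just an amortisation step.
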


\subsection{Feasibility of Independent Sets: General Thresholds}

% Outline of arguments
The main technical task is showing feasibility: that each independent set $S$ in $G_f$ corresponds to a feasible set of links. We break the task into bounding the interference of a given link $i$ in $S$ from links in $S$ with more (less) 
sensitivity than $i$ in Lemma~\ref{P:mainlemma1} (Lemma~\ref{P:mainlemma2}), respectively. Both of those lemmas are based on splitting $S$ into sets of roughly equal lengths and bounding the resulting interference as a geometric sum. The bound for the interference from an equilength set is given in Lemma~\ref{P:oblcore}, which itself is a geometric sum, here in terms of the contributions of links of different distances from the link $i$. The actual bound of that inner geometric sum is given in Lemma \ref{L:summation}, which is a variation of a frequently given argument in terms of concentric annuli around the link $i$.
Additionally, we necessarily bound in Lemma \ref{L:distreduction} from below the minimum distance of links that are non-adjacent in $G_f$.
The argument for uniform thresholds (Sec.~\ref{s:unithresholds}) follows the same pattern, but is somewhat simpler.

Now to the formal arguments.
Our goal is to identify constants $\gamma,\delta$ such that
each independent set in $\ghi=G_{\gamma x^\delta}$ is feasible.
We show that this can be achieved by using an oblivious power assignment $P_\tau$, for an appropriate $\tau \in (0,1)$. Moreover, this even holds for bidirectional feasibility.

\begin{theorem}\label{T:obliviouspowers}
Let $\delta_0=\frac{\alpha-m+1}{2(\alpha-m) + 1}$. If $\delta\in (\delta_0,1)$ and the constant $\gamma>1$ is large enough, there is a value $\tau \in (0,1)$ such that each independent set in $G_{\gamma x^\delta}$ is bidirectionally $P_{\tau}$-feasible.
\end{theorem}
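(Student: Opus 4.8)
The plan is to show that every independent set $S$ in $G_{\gamma x^\delta}$ is $P_\tau$-feasible (and in fact $1$-strong) for a suitable $\tau$, by bounding, for each link $i\in S$, the total interference at $r_i$ from the rest of $S$ by a constant fraction of the signal $P_\tau(i)/\s_i^\alpha$. Since the independence property in $\ghi$ is symmetric under reversing link orientations (the defining inequality $d_{ij}d_{ji} < \s_i\s_j f(\s_{max}/\s_{min})$ is symmetric in the two endpoints), the same argument applied to any orientation yields bidirectional feasibility with the corresponding oblivious assignment. So I will fix an orientation and estimate $\sum_{j\in S\setminus\{i\}} P_\tau(j)/d_{ji}^\alpha$ with $P_\tau(j)\sim \s_j^{\tau\alpha}$.

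The core of the estimate, following the paragraph preceding the statement, is to partition $S\setminus\{i\}$ into \emph{equilength classes} — groups of links whose sensitivities lie in a fixed dyadic range $[2^t\s_i, 2^{t+1}\s_i)$ or $[2^{-t-1}\s_i, 2^{-t}\s_i)$ — and sum the interference class by class. First I would prove a distance lower bound (the analogue of ``Lemma~\ref{L:distreduction}'' referenced in the text): if $i,j$ are non-adjacent in $G_{\gamma x^\delta}$ then $d_{ji} \gtrsim \sqrt{\s_i\s_j \gamma (\s_{\max}/\s_{\min})^\delta}$, i.e. $d_{ji}$ is polynomially larger than the geometric-mean length, with the gap governed by $\gamma$ and $\delta$. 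Second, for a single equilength class I would bound the interference by a packing/annuli argument in the doubling metric (the analogue of Lemmas \ref{P:oblcore} and \ref{L:summation}): using that receivers/senders within a small ball would be forced adjacent in $\glo\subseteq\ghi$, the links of one class that contribute interference of a given magnitude to $r_i$ are spread out, so their number in an annulus at distance $\sim r$ is $O((r/\s_i)^m)$; since $\alpha>m$, summing $P_\tau(j)/r^\alpha$ over annuli converges to a geometric series dominated by its innermost term. Third, I would sum the resulting per-class bounds over $t\ge 1$; each class of ``heavier'' links ($\s_j\sim 2^t\s_i$) contributes something like $\gamma^{-c}\,2^{t(\tau\alpha - \text{something}(\delta,m))}\cdot \s_i^{\tau\alpha}/\s_i^\alpha$, and similarly for ``lighter'' links, and I need both sums to be geometric and small.

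The crux — and the place the exact value $\delta_0=\frac{\alpha-m+1}{2(\alpha-m)+1}$ and the freedom to choose $\tau$ come in — is balancing two opposing constraints in that final sum. Interference from \emph{heavier} links ($\s_j>\s_i$) is worst when $\tau$ is \emph{large} (they transmit with more power), while interference from \emph{lighter} links is worst when $\tau$ is \emph{small} (then $i$'s own relative power is large but the signal term $P_\tau(i)/\s_i^\alpha\sim \s_i^{(\tau-1)\alpha}$ is comparatively weak, or, dually, the near light links contribute too much). Making both dyadic sums simultaneously converge forces $\tau$ into a nonempty interval, and chasing the exponents through the annuli bound shows this interval is nonempty precisely when $\delta>\delta_0$; once $\delta$ is fixed in $(\delta_0,1)$ one then takes $\gamma$ large enough to beat the accumulated constant factors (the number of ``small-ratio'' neighbors handled separately via Lemma~\ref{P:simpleset}, the convergent-geometric-series constants, and the doubling constant $C$). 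I expect the main obstacle to be exactly this exponent bookkeeping: writing the per-class interference bound with explicit dependence on $\tau$, $\delta$, $m$, $\alpha$ and verifying that the two convergence conditions intersect — the geometric, annuli, and distance-lower-bound lemmas are individually standard, but getting a single $\tau$ that works for both directions is the delicate point. A final minor step is invoking Theorem~\ref{T:signalstrengthening} to pass from the $1$-strong (or $1$-feasible) conclusion to genuine $P_\tau$-feasibility with the noise term, at the cost of only constant factors.
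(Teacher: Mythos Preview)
Your proposal is correct and follows essentially the same route as the paper: split the interferers into those more and those less sensitive than $i$ (Lemmas~\ref{P:mainlemma1} and~\ref{P:mainlemma2}), partition each side into equilength dyadic classes, bound each class via the distance lower bound (Lemma~\ref{L:distreduction}) together with the annuli/packing argument (Lemmas~\ref{P:oblcore} and~\ref{L:summation}), and then verify that the two resulting convergence conditions on $\tau$ --- a lower bound from the lighter links and an upper bound from the heavier ones --- intersect precisely when $\delta>\delta_0$, after which $\gamma$ is taken large enough to push the $O(\gamma^{-\alpha/2})$ total below~$1$. One small correction: Lemma~\ref{P:simpleset} is not used in this feasibility argument (it is for tightness and simpliciality); there is no separate treatment of small-ratio neighbors here, as the equilength decomposition and annuli bound already cover all classes uniformly.
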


First, we introduce a measure of interference under power $P_\tau$.
The interference of link $j$ on link $i$ is 
\[
I_{\tau}(j,i)= \frac{\s_j^{\tau\alpha}\s_i^{(1-\tau)\alpha}}{d_{ji}^\alpha}
 < 1,
\]
when $j\ne i$ and $I_{\tau}(i,i)=0$. 
The interference of the other links in a set $S$ on $i$ is
$I_{\tau}(S,i) = \sum_{j\in S\setminus \{i\}} I(j,i)$.
Showing feasibility of set $S$ under $P_\tau$ is equivalent to showing that
$I_{\tau}(S,i) \le 1$.

We bound this in Lemma~\ref{P:mainlemma1} (Lemma \ref{P:mainlemma2}) for links with less (more) sensitivity than $i$, respectively, and derive the choices for $\tau$ based on those bounds. At high level, the argument proceeds by splitting $S$ into groups of roughly equal length and equal distance from $i$ and bounding the size of these groups as well as their interference on $i$. This is then combined into a double geometric sum that converges to a constant that can be made smaller than one.

\begin{proof}[Proof of Thm.~\ref{T:obliviouspowers}]
Consider any $\delta\in (\delta_0,1)$. Lemmas~\ref{P:mainlemma1} and \ref{P:mainlemma2} bound the interference ($I_\tau$) from links with less and more (resp.) sensitivity than $i$, that form an independent set of links in $G_{\gamma x^\delta}$. 
 If $\gamma$ is sufficiently large, the two bounds add up to less than one. The lemmas require different constraints on $\tau$, but
it can be checked that when $\delta\in (\delta_0, 1)$,  $b:=1-\delta\cdot \frac{\alpha-m}{\alpha} < e:=1-(1-\delta)\cdot\frac{\alpha - m + 1}{\alpha}$, and hence $\tau$ can be chosen to be any point in the interval $(b,e)$.
\end{proof}

 \begin{lemma}\label{P:mainlemma1}
Let $\gamma>2$ and $ \tau > 1- \delta (\alpha-m)/\alpha$. If  $S$ is a set of links that is independent in $G_{\gamma x^\delta}$, and $i$ is a link in $S$ satisfying $\s_i\ge\max_{j\in S}\s_j$, then
 $
 I_{\tau} (S, i)= O\left(\gamma^{-\alpha/2}\right).
 $
 \end{lemma}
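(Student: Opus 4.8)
The plan is to bound $I_\tau(S,i) = \sum_{j \in S \setminus\{i\}} I_\tau(j,i)$, where $i$ is the most sensitive link in $S$, by splitting $S$ into \emph{equilength classes} $S_p = \{j \in S : \s_i/2^{p} < \s_j \le \s_i/2^{p-1}\}$ for $p = 1, 2, \dots$, and then, within each class, further partitioning by distance from $r_i$ into concentric annuli. For each class $S_p$, all links have sensitivity $\Theta(\s_i 2^{-p})$, so the interference term $I_\tau(j,i) = \s_j^{\tau\alpha}\s_i^{(1-\tau)\alpha}/d_{ji}^\alpha$ scales like $(\s_i 2^{-p})^{\tau\alpha}\s_i^{(1-\tau)\alpha}/d_{ji}^\alpha$. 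The two inputs I expect to need are: (i) a lower bound on $d_{ji}$ for $j \in S_p$ coming from non-adjacency in $G_{\gamma x^\delta}$ (this is the role flagged for Lemma~\ref{L:distreduction} — non-adjacency gives $d_{ij}d_{ji} \ge \gamma \s_i \s_j (\s_i/\s_j)^\delta$, hence $d_{ji} \gtrsim \sqrt{\gamma}\,\s_i^{(1+\delta)/2}\s_j^{(1-\delta)/2} \gtrsim \sqrt{\gamma}\,\s_i 2^{-p(1-\delta)/2}$ after using $d_{ij} \le d_{ji} + l_i + l_j \lesssim d_{ji} + \s_i$); and (ii) a packing/annulus bound (the role of Lemma~\ref{L:summation} together with Lemma~\ref{P:oblcore}) controlling how many links of $S_p$ can sit in an annulus at distance $\sim R$ from $r_i$ and summing their interference as a geometric series in $R$, using $\alpha > m$.

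Concretely, the steps I would carry out are: first, fix $p$ and bound $I_\tau(S_p, i)$ using Lemma~\ref{P:oblcore} (the equilength interference bound), which after plugging in the distance lower bound from (i) yields something of the form $C \cdot \gamma^{-\alpha/2} \cdot 2^{-p \cdot \theta}$ for an exponent $\theta = \theta(\tau,\delta,\alpha,m)$; the annulus summation converges precisely because $\alpha > m$, contributing only a constant. Second, check that the hypothesis $\tau > 1 - \delta(\alpha-m)/\alpha$ is exactly what makes $\theta > 0$, so that $\sum_{p \ge 1} 2^{-p\theta}$ converges to a constant. Third, assemble $I_\tau(S,i) = \sum_{p\ge 1} I_\tau(S_p,i) \le C' \gamma^{-\alpha/2} \sum_{p\ge 1} 2^{-p\theta} = O(\gamma^{-\alpha/2})$. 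The condition $\gamma > 2$ is needed to keep the $\sqrt\gamma$ in the denominator dominating the additive $\s_i$ slack from the triangle inequality, so that $d_{ji} \gtrsim \sqrt\gamma \cdot (\text{length scale})$ is legitimate.

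The main obstacle I anticipate is bookkeeping the \emph{two nested geometric sums} cleanly: the outer sum over equilength classes indexed by $p$ and the inner sum over distance annuli, making sure the exponent $\theta$ in the outer sum comes out positive under exactly the stated constraint on $\tau$, and that the bidirectional aspect (the bound must hold for every orientation of the links) doesn't break anything — but since reversing a link only swaps $d_{ji} \leftrightarrow d_{ij}$ and both are lower-bounded symmetrically by non-adjacency plus the triangle inequality, the same estimates go through with possibly worse constants. The genuinely delicate point is the interplay between the exponent $\delta$ (which governs how fast the distance lower bound degrades as links get shorter) and $\tau$ (which governs how fast the interference decays with sensitivity): the algebra has to confirm $1 - \delta(\alpha-m)/\alpha$ is the precise threshold, which is where I'd spend the most care rather than on the routine annulus estimate.
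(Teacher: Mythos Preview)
Your plan is essentially the paper's proof: partition $S$ into equilength classes, extract a distance lower bound from non-adjacency via Lemma~\ref{L:distreduction}, apply Lemma~\ref{P:oblcore} to each class, and sum the resulting geometric series, with the hypothesis on $\tau$ guaranteeing convergence. One correction to your sketch: the distance bound you wrote, $d_{ji}\gtrsim\sqrt{\gamma}\,\s_i^{(1+\delta)/2}\s_j^{(1-\delta)/2}$, does not follow from $d_{ij}d_{ji}\ge\gamma\s_i^{1+\delta}\s_j^{1-\delta}$ together with $d_{ij}\lesssim d_{ji}+\s_i$, because the additive $\s_i$ can dominate (try $\s_j\ll\s_i$). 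What Lemma~\ref{L:distreduction} actually delivers is $d(i,j)>\gamma'\s_j f(\s_i/\s_j)-l_j$, i.e.\ $d(i,j)\gtrsim\sqrt{\gamma}\,\s_i^{\delta}\s_j^{1-\delta}$, with exponents $\delta,1-\delta$ rather than $(1\pm\delta)/2$; this is exactly the form $d(i,j)>\gamma'\s_i^\mu\s_j^{1-\mu}$ with $\mu=\delta$ required by Lemma~\ref{P:oblcore}. With $\mu=\delta$ the per-class bound becomes $O(\gamma^{-\alpha/2}(\e_t/\s_i)^{\delta(\alpha-m)-(1-\tau)\alpha})$, and the exponent $\theta=\delta(\alpha-m)-(1-\tau)\alpha$ is positive precisely when $\tau>1-\delta(\alpha-m)/\alpha$, as stated. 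The rest of your outline is on target.
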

\begin{proof}
We partition $S$ into equilength subsets % $L_1, L_2,\dots$  with
$
L_t=\{j\in S: 2^{t-1}\s_0 \leq \s_j<2^t \s_0\},
$ $t=1,2,\ldots,$
 where $\s_0=\min_{j\in S}\{\s_j\}$, and bound each $L_t$ separately. 
The independence condition between $i$ and any other link $j\in S$ is $d_{ij}d_{ji} > \gamma \s_i^{1+\delta}\s_j^{1-\delta}$. Using Lemma~\ref{L:distreduction}, we obtain the more convenient bound $d(i,j)>\gamma' \s_i^{\delta}\s_j^{1-\delta}$, where $\gamma'=\frac{\gamma}{\sqrt{\gamma+1}+1}-1=\Theta(\sqrt{\gamma})$.
Fix a given set $L_t$ and let $\e_t=\min_{j\in L_t}\s_j$. 
We similarly observe that $d(j,k) > \gamma'\e_t$ for all $j,k\in L_t$. Hence, Lemma~\ref{P:oblcore} applies for $L_t$  with parameters $\gamma=\gamma'$ and $\mu=\delta$, giving us the bound
\[
{I_{\tau}(L_t,i)} = O\left(\gamma^{-\alpha/2}\left(\frac{\e_t}{\s_i}\right)^{\mu(\alpha-m) - (1-\tau)\alpha }\right).
\]
We combine the bounds for $L_t$ into a geometric series for $S$:
\[
{I_{\tau}(S,i)} = \sum_{t=1}^{\infty}{I_{\tau}(L_t,i)} 
\le \frac{O(\gamma^{-\alpha/2})}{\s_i^{\mu(\alpha-m)  - (1-\tau)\alpha}}\sum_{t=0}^{\lceil\log{\s_i/\s_0}\rceil}{(2^{t}\s_0)^{\mu(\alpha-m)  - (1-\tau)\alpha}}.
\]
Recall that we assumed $\tau > 1- \delta (1-m/\alpha)$; hence, $\mu(\alpha-m)  - (1-\tau)\alpha> 0$. Thus,
the last sum is bounded by $O(\s_i^{(1-\tau)\alpha - \mu(\alpha-m)})$, which implies the claim.
\end{proof}

 \begin{lemma}\label{P:mainlemma2}
Let $\gamma>2$ and $\tau < 1- (1-\delta)(\alpha - m + 1)/\alpha$. If  $S$ is a set of links that is independent in $G_{\gamma x^\delta}$, and  $i$ is a link in $S$ satisfying $\s_i = \min_{j\in S}\s_j$, then
 $
 I_{\tau} (S, i)= O\left(\gamma^{-\alpha/2}\right).
 $
 \end{lemma}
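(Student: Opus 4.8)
The proof will parallel that of Lemma~\ref{P:mainlemma1}, the only structural difference being that the outer geometric series now runs over arbitrarily large sensitivity scales rather than being truncated at $\s_i$. First I would partition the more‑sensitive links into equilength groups
\[
L_t=\{j\in S : 2^{t-1}\s_i\le \s_j<2^{t}\s_i\},\qquad t=1,2,\dots,
\]
so that $I_\tau(S,i)=\sum_{t\ge 1}I_\tau(L_t,i)$, and bound $I_\tau(L_t,i)$ for each $t$ separately. Writing $\e_t=\min_{j\in L_t}\s_j\approx 2^{t-1}\s_i$ for the scale of group $t$, the key qualitative point is that now $\e_t\ge\s_i$ and $\e_t\to\infty$, so the per‑group bound must \emph{decrease} in $t$ for the series to converge; this is the mirror image of Lemma~\ref{P:mainlemma1}, where $\e_t\le\s_i$ and the per‑group bound was allowed to grow toward the last (largest) group.

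For a fixed $t$ I would first turn the $G_{\gamma x^\delta}$‑independence of $S$ into distance lower bounds. For a link $j\in L_t$, the non‑adjacency condition $d_{ij}d_{ji}>\gamma\,\s_i^{1-\delta}\s_j^{1+\delta}$ together with Lemma~\ref{L:distreduction} gives a bound of the form $d(i,j)>\gamma'\,\s_i^{1-\delta}\s_j^{\delta}$ with $\gamma'=\Theta(\sqrt{\gamma})$; for two distinct links $j,k\in L_t$, which are equilength, the same reasoning yields $d(j,k)>\gamma'\e_t$, so all node‑pair distances between distinct links of $L_t$ exceed $\gamma'\e_t$ and the argument below is orientation‑independent. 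Feeding these bounds into Lemma~\ref{P:oblcore} (invoked with exponent parameter $\mu=1-\delta$), whose proof is the annulus sum of Lemma~\ref{L:summation}, produces a per‑group bound of the shape
\[
I_\tau(L_t,i)=O\!\left(\gamma^{-\alpha/2}\,(\e_t/\s_i)^{q}\right),
\]
where, once the extra geometric spread of the long links of $L_t$ is accounted for, the exponent $q$ is strictly negative precisely under the hypothesis $\tau<1-(1-\delta)\tfrac{\alpha-m+1}{\alpha}$. The remaining step is routine: since $\e_{t+1}/\e_t\approx 2$ and $q<0$, the sum $\sum_{t\ge 1}I_\tau(L_t,i)$ is a convergent geometric series equal to $O(\gamma^{-\alpha/2})$, and the bound holds bidirectionally because every orientation of every pair $(i,j)$ obeys the same two distance bounds above.

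The step I expect to be the main obstacle is extracting the right distance bound in the regime where a link $j\in L_t$ is much longer than $i$. There the non‑adjacency inequality controls only the \emph{product} $d_{ij}d_{ji}$, whereas the interference $I_\tau(j,i)=\s_j^{\tau\alpha}\s_i^{(1-\tau)\alpha}/d_{ji}^{\alpha}$ depends on the single distance $d_{ji}$: the product can be kept large by having one of the two cross‑distances be of order $l_j\sim\s_j$ (the physical extent of the long link) while the other — and hence $d(i,j)$ itself — is as small as the geometry allows. Consequently the smallest admissible value of $d_{ji}$ may lie \emph{below} the link length $\e_t$, so that the innermost distance shells around $r_i$ are shorter than the links they contain, and the count of links of $L_t$ in such a shell must be argued by hand (it is $O(1)$ there, but carries a large per‑link interference term). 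Tracking this loss carefully is exactly what weakens the admissible range of $\tau$ from the $\alpha-m$ of Lemma~\ref{P:mainlemma1} to $\alpha-m+1$ here; it also forces a separate, elementary treatment of the $O(\log\gamma)$ groups $L_t$ of bounded sensitivity ratio $\e_t/\s_i$, for which the clean bound $d(i,j)>\gamma'\s_i^{1-\delta}\s_j^{\delta}$ already suffices with no decay needed.
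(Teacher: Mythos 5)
Your proposal is correct and follows the same structure as the paper's proof: partition $S$ into the equilength groups $L_t$, use Lemma~\ref{L:distreduction} to obtain $d(i,j)>\gamma'\s_i^{1-\delta}\s_j^{\delta}$ and $d(j,k)>\gamma'\e_t$, invoke Lemma~\ref{P:oblcore} with $\mu=1-\delta$, and sum a geometric series whose ratio is controlled by $\eta=(1-\tau)\alpha-(1-\delta)(\alpha-m+1)>0$. Your diagnosis of where the extra $+1$ (i.e., $\alpha-m+1$ rather than $\alpha-m$) comes from is right in spirit; however, the ``innermost shell'' correction you anticipate needing to argue by hand is already fully absorbed in Lemma~\ref{P:oblcore} through the $\min\{1,\s_i/\s_0\}^{-\mu}$ factor (equivalently, the $\min(1,q)^{-1}$ term in Lemma~\ref{L:summation} coming from the $r=0$ term of the annulus sum). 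Consequently, no separate, elementary treatment of the bounded-ratio groups $L_t$ is required in the paper --- the geometric series converges from $t=1$ with the uniform bound $O(\gamma^{-\alpha/2}2^{-\eta(t-1)})$, and the first few terms are simply $O(\gamma^{-\alpha/2})$ each.
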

 \begin{proof}
We proceed as in Lemma~\ref{P:mainlemma1}. Let us split $S$ into equilength subsets $L_1, L_2,\dots$, where
$
L_t=\{j\in S: 2^{t-1}\s_i \leq \s_j<2^t \s_i\}.
$
Let $\e_t=\min_{j\in L_t}\s_j \ge  2^{t-1}\s_i$. Using independence and applying Lemma~\ref{L:distreduction}, we have $d(i,j) > \gamma'\s_j^{\delta}\s_i^{1-\delta}$ for each $j\in S$ (recall that $\s_i\le \s_j$), and $d(j,k)>\gamma' \e_t$ for all $j,k\in L_t$, where $\gamma'=\frac{\gamma}{\sqrt{\gamma+1}+1}-1=\Theta(\sqrt{\gamma})$. We apply Lemma~\ref{P:oblcore} with $\gamma = \gamma'$ and $\mu=1-\delta$ to the set $L_t$ and link $i$ to obtain:
\[ {I_{\tau}(L_t,i)} = O\left(\gamma^{-\alpha/2}\left(\frac{\s_i}{\e_t}\right)^{(1-\tau)\alpha - \mu(\alpha-m+1)}\right) =O\left(\gamma^{-\alpha/2}\left(\frac{1}{2^{t-1}}\right)^{(1-\tau)\alpha - \mu(\alpha-m+1)}\right). \]
Our assumption on $\tau$ implies that $\eta :=(1-\tau)\alpha - \mu(\alpha-m+1) > 0$.
Thus, we have:
$
{I_{\tau}(L,i)}= \sum_{1}^{\infty}{{I_{\tau}(L_t,i)}}
  = O\left(\gamma^{-\alpha/2}\right)\cdot \sum_{t=0}^{\infty}{\frac{1}{2^{\eta t}}} 
  = O\left(\gamma^{-\alpha/2}\right).
$
\end{proof}

The next lemma shows that when two links are independent in the conflict graph $G_f$, 
they must also be well separated in space.

\begin{lemma}\label{L:distreduction}
Let $f$ be a non-decreasing function, such that $f(x)\le \gamma x$, for a constant $\gamma>2$ and for all $x\ge 1$. Let links $i,j$ be independent in $G_f$, and such that $\s_i\ge \s_j$. Then $d(i,j)>\frac{1}{\sqrt{\gamma+1}+1}\s_j f(\s_i/\s_j)-l_j$.
\end{lemma}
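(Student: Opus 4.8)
The plan is to unfold the definition of independence in $G_f$ and convert the product bound $d_{ij}d_{ji} \ge \s_i\s_j f(\s_i/\s_j)$ into a lower bound on $\min(d_{ij},d_{ji})$, then pass from these sender-to-receiver distances to the node-to-node distance $d(i,j)$ via the triangle inequality. Concretely, set $m_{ij}=\min(d_{ij},d_{ji})$ and $m'_{ij}=\max(d_{ij},d_{ji})$. Since $i,j$ are \emph{non}-adjacent in $G_f$, we have $d_{ij}d_{ji}\ge \s_i\s_j f(\s_i/\s_j)$ (using $\s_i\ge\s_j$ so that $\s_{\max}/\s_{\min}=\s_i/\s_j$). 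This alone gives $m'_{ij}\ge\sqrt{\s_i\s_j f(\s_i/\s_j)}$, but we want a bound on $m_{ij}$, which is what actually controls $d(i,j)$.

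The first real step is to bound $m'_{ij}$ from above in terms of $m_{ij}$: by the triangle inequality, $m'_{ij}\le m_{ij}+l_i+l_j$, and since $\s_i\ge 4l_i\ge \s_j\ge 4l_j$ we can absorb $l_i+l_j$ into a term comparable to $\sqrt{\s_i\s_j f(\s_i/\s_j)}$ — here is where the hypothesis $f(x)\le\gamma x$ enters, since it guarantees $\s_i\s_j f(\s_i/\s_j)\le\gamma\s_i^2$, so $\sqrt{\s_i\s_j f(\s_i/\s_j)}\le\sqrt{\gamma}\,\s_i$ and the link lengths are small relative to this scale. Plugging $m'_{ij}\le m_{ij}+l_i+l_j$ into $m_{ij}m'_{ij}\ge\s_i\s_j f(\s_i/\s_j)$ yields a quadratic inequality in $m_{ij}$; solving it gives $m_{ij}\ge$ (something like) $\tfrac{1}{\sqrt{\gamma+1}+1}\sqrt{\s_i\s_j f(\s_i/\s_j)}-O(l_i)$. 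Finally, since $d(i,j)\ge m_{ij}-l_j$ (or one relates $d(i,j)$ directly to $m_{ij}$ through the triangle inequality on the closest pair of endpoints), and $\sqrt{\s_i\s_j f(\s_i/\s_j)}\ge\s_j f(\s_i/\s_j)$ when $\s_i\ge\s_j$ (wait — that needs $\s_i f(\s_i/\s_j)\ge \s_j f(\s_i/\s_j)$, trivially true; more precisely $\sqrt{\s_i\s_j}\ge\s_j$), we arrive at $d(i,j)>\tfrac{1}{\sqrt{\gamma+1}+1}\,\s_j f(\s_i/\s_j)-l_j$.

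The main obstacle — really the only delicate point — is managing the $l_i,l_j$ additive terms cleanly so that the constant comes out exactly as $\tfrac{1}{\sqrt{\gamma+1}+1}$ rather than some messier expression. This requires carefully choosing how to split $l_i+l_j$ against the $\sqrt{\gamma}\,\s_i$ scale and being slightly generous in the quadratic-inequality estimate (using $f(\s_i/\s_j)\le\gamma\,\s_i/\s_j$ to replace one factor of $\sqrt{\s_i\s_j f(\s_i/\s_j)}$ by $\sqrt{\gamma+1}$ times the other, which is precisely why the ``$+1$'' appears under the square root). Everything else is routine application of the triangle inequality and the standing assumptions $\s_t\ge 4l_t$; I would not expect any conceptual difficulty beyond bookkeeping.
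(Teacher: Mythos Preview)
Your plan has a genuine gap at the step where you pass from the quadratic bound to the target. Solving $m_{ij}(m_{ij}+l_i+l_j)\ge\s_i\s_j f(\s_i/\s_j)$ naturally produces a lower bound of the form $m_{ij}\ge\sqrt{\s_i\s_j f(\s_i/\s_j)}-(l_i+l_j)/2$, with leading term $\sqrt{\s_i\s_j f}$. You then want $\sqrt{\s_i\s_j f}\ge\s_j f$, which is equivalent to $f(\s_i/\s_j)\le\s_i/\s_j$. But the hypothesis only gives $f(x)\le\gamma x$ with $\gamma>2$, so this inequality can fail badly (e.g.\ $f(x)=2x$). Your parenthetical justification conflates $\sqrt{\s_i\s_j f}$ with $\sqrt{\s_i\s_j}\cdot f$, which are different quantities. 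The vague remark about ``replacing one factor of $\sqrt{\s_i\s_j f}$ by $\sqrt{\gamma+1}$ times the other'' does not resolve this; there is no way to extract $\s_j f$ from $\sqrt{\s_i\s_j f}$ with only a constant loss using $f(x)\le\gamma x$ alone.

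The paper sidesteps the quadratic entirely by a threshold split on $D=\max(d_{ij},d_{ji})$ against $z\s_i$ for a parameter $z>2$. If $D>z\s_i$, then $d(i,j)\ge D-l_i-l_j>(z-2)\s_i\ge\frac{z-2}{\gamma}\,\s_j f(\s_i/\s_j)$ directly (here $f(x)\le\gamma x$ is used). If $D\le z\s_i$, then from $d\cdot D\ge\s_i\s_j f$ one gets $d\ge\s_j f/z$ immediately, and $d(i,j)\ge d-l_j$. Choosing $z=\sqrt{\gamma+1}+1$ makes $(z-2)/\gamma=1/z$, yielding the stated constant. The point is that dividing $\s_i\s_j f$ by the upper bound $z\s_i$ on $D$ produces $\s_j f$ in the right form with no square root, whereas your quadratic route inevitably leaves you with $\sqrt{\s_i\s_j f}$, which is the wrong shape.
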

\begin{proof}
 Let $D=\max\{d_{ij},d_{ij}\}$ and $d=\min\{d_{ij},d_{ji}\}$. Let $z>2$ be a parameter. Consider the following two cases:
\begin{enumerate}
 \item $D > z\s_i$. The triangle inequality and the assumption $\beta_i,\beta_j\ge 1$ imply that 
\[d(i,j)\ge D - l_i - l_j > (z-2)\s_i \ge \frac{z-2}{\gamma}\cdot \s_j \cdot \gamma(\s_i/\s_j)\ge \frac{z-2}{\gamma}\cdot \s_j f(\s_i/\s_j).\]
\item $D\le z\s_i$. The independence condition implies that $d_{ij}d_{ji}>\s_i\s_j f(\s_i/\s_j)$. Hence, in this case, $d>\frac{1}{z}\cdot \s_j f(\s_i/\s_j)$. By the triangle inequality, $d(i,j)\ge d-l_j$.
\end{enumerate}
Choosing $z=\sqrt{\gamma+1}+1$ implies the claim.
\end{proof}

The next lemma is the common part of Lemmas~\ref{P:mainlemma1} and~\ref{P:mainlemma2}: It bounds the interference from a group of equilength links that are both well separated internally and sufficiently far from the link $i$, and covers both the cases of links more and less sensitive than link $i$. We separate its core technical part into Lemma~\ref{L:summation}, which will be later reused in Sec.~\ref{s:unithresholds}.

\begin{lemma}\label{P:oblcore}
Let $\mu, \tau \in (0,1)$ and $\gamma \ge 1$ be parameters, let $S$ be a set of equilength links such that for all $j,k\in S$, $d(j,k) > \gamma \s_0$, where $\s_0=\min_{j\in S}\s_j$, and let $i$ be a link in $S$ satisfying $d(i,j) > \gamma \s_i^\mu\s_j^{1-\mu}$ for all $j\in S$.
Then,
 \[
 \displaystyle I_{\tau}(S,i)=
 O\left(\gamma^{-\alpha} \left(\frac{\s_i}{\s_0}\right)^{(1-\tau)\alpha - \mu(\alpha-m)}\cdot \min\left\{1,\frac{\s_i}{\s_0}\right\}^{-\mu} \right).
 \]
\end{lemma}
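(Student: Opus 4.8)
The plan is to bound $I_\tau(S,i)$ by summing the interference $I_\tau(j,i) = \s_j^{\tau\alpha}\s_i^{(1-\tau)\alpha}/d_{ji}^\alpha$ over $j\in S$, grouping the links by their distance from $i$ into concentric annuli. Since $S$ is equilength, every $\s_j$ is within a factor $2$ of $\s_0$, so in the numerator we may replace $\s_j^{\tau\alpha}$ by $O(\s_0^{\tau\alpha})$ at the cost of a constant; similarly $d_{ji} \ge d(i,j) > \gamma\,\s_i^\mu\s_0^{1-\mu}$ (up to constants), so all interfering links lie outside a ball of radius $\Omega(\gamma\,\s_i^\mu\s_0^{1-\mu})$ around $r_i$. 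For $p\ge 0$ let $A_p$ be the set of links $j\in S$ with $d(i,j)$ in the dyadic range $[2^p\gamma\,\s_i^\mu\s_0^{1-\mu},\, 2^{p+1}\gamma\,\s_i^\mu\s_0^{1-\mu})$; then every $j\in A_p$ contributes $O\!\left(\s_0^{\tau\alpha}\s_i^{(1-\tau)\alpha} \big/ (2^p\gamma\,\s_i^\mu\s_0^{1-\mu})^\alpha\right)$.

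The second step is a packing bound on $|A_p|$: because the links in $S$ are mutually separated, $d(j,k) > \gamma\s_0$, the receivers $r_j$ (equivalently, the nodes of links in $A_p$) are $\Omega(\gamma\s_0)$-separated, and they all sit inside a ball of radius $O(2^p\gamma\,\s_i^\mu\s_0^{1-\mu} + \s_0)$ around $r_i$. By the doubling-dimension property, the number of points of mutual distance at least $\epsilon r$ in a ball of radius $r$ is $O(\epsilon^{-m})$; here the radius-to-separation ratio is $O\!\left(2^p (\s_i/\s_0)^\mu + 1/\gamma\right) = O\!\left(2^p(\s_i/\s_0)^\mu\right)$ (using $\gamma\ge 1$ and $\s_i\ge\s_0$ — note in the "less sensitive" application $\mu=1-\delta$ and $\s_i$ is the \emph{maximum}, while in the other application $\s_i$ is the minimum and we get the $\min\{1,\s_i/\s_0\}$ correction). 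Hence $|A_p| = O\!\left(2^{pm}(\s_i/\s_0)^{\mu m}\right)$, and in the degenerate regime where the inner radius is dominated by $\s_0$ rather than by $2^p\gamma\,\s_i^\mu\s_0^{1-\mu}$ one instead gets the factor $\min\{1,\s_i/\s_0\}^{-\mu m}$, which (since $\alpha > m$ makes the geometric series converge and the dominant term is the innermost annulus) is what produces the $\min\{1,\s_i/\s_0\}^{-\mu}$ in the statement, not $^{-\mu m}$ — I'd have to check the exact exponent bookkeeping here, deferring it to Lemma~\ref{L:summation}.

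The third step assembles the sum over $p$. Multiplying the per-link bound by $|A_p|$:
\[
I_\tau(A_p,i) = O\!\left( 2^{pm}\Big(\tfrac{\s_i}{\s_0}\Big)^{\mu m}\cdot \frac{\s_0^{\tau\alpha}\s_i^{(1-\tau)\alpha}}{2^{p\alpha}\gamma^\alpha \s_i^{\mu\alpha}\s_0^{(1-\mu)\alpha}}\right)
 = O\!\left(\gamma^{-\alpha}\, 2^{-p(\alpha-m)}\Big(\tfrac{\s_i}{\s_0}\Big)^{(1-\tau)\alpha - \mu(\alpha-m)}\right).
\]
Since $\alpha > m$, the series $\sum_{p\ge 0} 2^{-p(\alpha-m)}$ converges to $O(1)$, giving the leading term $O\!\left(\gamma^{-\alpha}(\s_i/\s_0)^{(1-\tau)\alpha-\mu(\alpha-m)}\right)$; the $\min\{1,\s_i/\s_0\}^{-\mu}$ factor is the correction that appears when $\s_i$ is the \emph{smallest} sensitivity so the "ball around $r_i$" is not actually the geometrically dominant scale. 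This is exactly where Lemma~\ref{L:summation} does the real work, so I would phrase the final step as: verify the hypotheses of Lemma~\ref{L:summation} (equilength, mutual separation $\gamma\s_0$, distance-from-$i$ lower bound $\gamma\s_i^\mu\s_j^{1-\mu}$, and $\alpha > m$) and quote its conclusion. The main obstacle is getting the exponent of the $\min\{1,\s_i/\s_0\}$ correction exactly right — tracking whether the dominating annulus is the innermost one (radius $\sim\gamma\s_0$, relevant when $\s_i\le\s_0$) or the $\s_i^\mu\s_0^{1-\mu}$ scale — and making sure the constant hidden in the packing bound does not secretly depend on $\gamma$ (it does not, because $\gamma\ge 1$ only helps). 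All other steps are routine geometric-series and doubling-dimension estimates.
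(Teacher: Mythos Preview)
Your approach is essentially the paper's: partition $S$ into concentric annuli around a node of $i$, bound the population of each annulus via the doubling property and the mutual separation $d(j,k)>\gamma\s_0$, bound the per-link interference by the annulus inner radius, and sum. Two points are worth noting.

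First, the paper uses \emph{arithmetic} annuli of width $\gamma\s_0$ (the sets $S_r=\{j:d(p_j,r_i)\le \gamma(q+r-1)\s_0\}$ with $q=(\s_i/\s_0)^\mu$), not dyadic ones, precisely because Lemma~\ref{L:summation} is stated for that setup. Your dyadic computation recovers the main term $\gamma^{-\alpha}(\s_i/\s_0)^{(1-\tau)\alpha-\mu(\alpha-m)}$ cleanly, but as you yourself notice, it does not hand you the $\min\{1,\s_i/\s_0\}^{-\mu}$ correction without further work; the arithmetic annuli feed directly into Lemma~\ref{L:summation}, whose $q^{m-\alpha}\min(1,q)^{-1}$ conclusion is exactly that correction once $q=(\s_i/\s_0)^\mu$ is substituted.

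Second, you gloss over one genuine nuisance that the paper handles explicitly: centering the ball at $r_i$ only works for links whose nearest node to $i$ is on the $r_i$ side. The paper therefore splits $S$ into $S'$ (links closer to $r_i$) and $S\setminus S'$ (closer to $s_i$), takes $p_j$ to be the endpoint of $j$ nearest the chosen node of $i$, and thereby gets $d(p_j,r_i)=d(j,i)$ exactly, avoiding any stray $l_i$ term in the containment radius. Your ``$+\,\s_0$'' in the ball radius handles $l_j$ but not $l_i$, and $l_i$ need not be $O(\s_0)$ in the application where $\s_i\gg\s_0$. This is easily fixed by the same split, but it is a real step you should include.
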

\begin{proof}
Consider first the subset $S' \subseteq S$ of links that are closer to $r_i$ than to $s_i$,
\[
S'=\{j\in S: \min\{d(s_j,r_i), d(r_j,r_i)\} \leq \min\{d(s_j,s_i), d(r_j,s_i)\}\}.
\]
 For each link $j\in S'$, let $p_j$ denote the  endpoint of $j$ that is closest to $i$'s receiver, $r_i$. Denote $q= (\s_i/\s_0)^{\mu}$.
Consider the nested subsets $S_1,S_2,\dots$ of $S'$, where
 \[
 S_r=\{j\in S': d(j,i)=d(p_j,r_i)\leq \gamma(q \s_0+(r-1)\s_0)\}.
 \]
 Note that $S_1=\emptyset$: for every $j\in S'$, $d(j,i)>\gamma q\s_0$. 

Fix $r>1$.
For every $j,k\in S_r$, we have that
 $d(p_j,p_k) \ge d(j,k) > \gamma\s_0$  and that $d(p_j,r_i)\leq \gamma (q \s_0+(r-1)\s_0)$ for each $j\in S_r$ (by the definition of $S_r$). By the doubling property of the metric space, 
\begin{equation}
|S_r|=|\{p_j\}_{j\in S_r}|\leq C\cdot \left(\frac{\gamma(q\s_0+(r-1)\s_0 )}{\gamma\s_0}\right)^{m} = C \left(q+r-1\right)^{m}.\label{E:strs}
\end{equation}
Note also that $\s_j \leq 2\s_0$ and $d(i,j) \ge \gamma(q\s_0+(r-2)\s_0)$ for every link $j\in S_r\setminus S_{r-1}$ with $r>1$; hence,
by the definition of $I_\tau$,
\begin{equation}
I_{\tau} (j, i) \le \frac{\s_j^{\tau\alpha} \s_i^{(1 - \tau)\alpha}}{d(i,j)^\alpha}
 \leq \left(\frac{\s_i}{\s_0}\right)^{(1-\tau)\alpha}\left(\frac{2\s_0}{\gamma(q\s_0+(r-2)\s_0)}\right)^\alpha =\frac{Z_i}{\left(q+r-2\right)^\alpha},\label{E:feqs}
\end{equation}
where $Z_i=(2/\gamma)^\alpha (\s_i/\s_0)^{(1-\tau)\alpha}$. Applying Lemma~\ref{L:summation} to the set $S'=\cup_{r\ge 2}S_r$, with the parameters $q,h=C$ and function $A=I_\tau(\cdot,i)/Z_i$, we get that ${I_{\tau}(S',i)} = O(Z_i)\cdot q^{m-\alpha}\cdot O(\min(1,q)^{-1})$.
which implies the desired bound for the set $S'$ by plugging the values of $q$ and $Z_i$.

The proof holds symmetrically for the set $S \setminus S'$ of links closer to the sender $s_i$ than to the receiver $r_i$. We can define the set $\{p_j\}_{j\in S\setminus S'}$ where $p_j$ is the endpoint of link $j$ that is closest to $r_i$, for each $j\in S\setminus S'$. The rest of the proof will be identical, by replacing $r_i$ with $s_i$ in the formulas.
\end{proof}

The following lemma is used to combine the interference contributions of groups of different distances from $i$ (after combining links of different lengths but within same distance of $i$), represented by the sets $S_r$. Intuitively, if the number of links within a given distance grows polynomially slower than the interference contribution of those links, then the total interference will converge as a geometric sum.

\begin{lemma}\label{L:summation}
Let $q,h>0$ be parameters.
Consider sets $\emptyset=S_1\subseteq S_2\subseteq S_3\subseteq\dots$ of links and let $S = \cup_{r\ge 1} S_r$. 
Suppose $|S_r| \le h\cdot (q+r-1)^m$ and 
assume a function $A:S\rightarrow \mathbb{R}_+$ such that $A(j)\le 1/(q+r-2)^\alpha$,
for all $r>1$ and $j\in S_r\setminus S_{r-1}$. 
Then, $\sum_{j\in S}A(j) = O(h)\cdot q^{m-\alpha}\cdot \min(1,q)^{-1}$.
\end{lemma}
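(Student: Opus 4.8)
The plan is to split the sum $\sum_{j\in S}A(j)$ into contributions from the "shells" $S_r\setminus S_{r-1}$ for $r\ge 2$ (recall $S_1=\emptyset$), and bound each shell by a product of its cardinality and the uniform pointwise bound on $A$. Concretely, first I would write
\[
\sum_{j\in S}A(j) = \sum_{r\ge 2}\sum_{j\in S_r\setminus S_{r-1}}A(j) \le \sum_{r\ge 2}\frac{|S_r\setminus S_{r-1}|}{(q+r-2)^\alpha} \le \sum_{r\ge 2}\frac{|S_r|}{(q+r-2)^\alpha} \le h\sum_{r\ge 2}\frac{(q+r-1)^m}{(q+r-2)^\alpha}.
\]
Reindexing with $s = r-2\ge 0$, the task reduces to bounding $h\sum_{s\ge 0}\frac{(q+s+1)^m}{(q+s)^\alpha}$. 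Since $q+s+1 \le 2(q+s)$ when $q+s\ge 1$ (and the $s=0$ term with $q$ small is handled separately), the numerator is $O((q+s)^m)$, so up to the constant $2^m$ we must bound $\sum_{s\ge 0}(q+s)^{m-\alpha}$.

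Next I would estimate $T(q):=\sum_{s\ge 0}(q+s)^{m-\alpha}$ using the standard integral comparison, exploiting $\alpha > m$ so that $m-\alpha < -1$ and the series converges. For $q\ge 1$, $T(q) \le (q)^{m-\alpha} + \int_0^\infty (q+x)^{m-\alpha}\,dx = q^{m-\alpha} + \frac{q^{m-\alpha+1}}{\alpha-m-1} = O(q^{m-\alpha+1}) = O(q^{m-\alpha})\cdot q = O(q^{m-\alpha})\cdot \min(1,q)^{-1}$ — wait, I need to be careful: for $q\ge 1$ we want the bound $O(q^{m-\alpha}\cdot 1)$, i.e.\ $\min(1,q)^{-1}=1$, so actually I need $T(q) = O(q^{m-\alpha})$, not $O(q^{m-\alpha+1})$. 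Let me recheck: $\int_0^\infty (q+x)^{m-\alpha}dx = \frac{q^{m-\alpha+1}}{\alpha-m-1}$, which is $q$ times too big. The resolution is that the claimed bound is $O(h)q^{m-\alpha}\min(1,q)^{-1}$, and for $q\ge 1$ this is $O(h)q^{m-\alpha}$ — so I genuinely need the sharper estimate. Looking again: actually $\sum_{s\ge 0}(q+s)^{m-\alpha}$ for large $q$ behaves like $\int$, giving $\Theta(q^{m-\alpha+1})$, which is larger. So the lemma as I've set it up would be false unless the exponent bound is used more cleverly — but in fact the pointwise bound $A(j)\le (q+r-2)^{-\alpha}$ combined with $|S_r|\le h(q+r-1)^m$ is exactly what gives $\Theta(q^{m-\alpha+1})$ in the worst case. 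Let me reconsider: the claim $O(h)q^{m-\alpha}\min(1,q)^{-1}$ — for $q\le 1$ this is $O(h)q^{m-\alpha-1}$. Hmm, $q^{m-\alpha}$ times $q^{-1}$. And $\sum_{s\ge 1}s^{m-\alpha} = O(1)$ plus the $s=0$ term $q^{m-\alpha}$ dominates when $q\le 1$: so for $q\le 1$, $T(q) = O(q^{m-\alpha})$, not $O(q^{m-\alpha-1})$ — the claimed bound is weaker than needed facts there, fine. For $q\ge 1$: $T(q)=\Theta(q^{m-\alpha+1})$, and the claim asks for $O(q^{m-\alpha})$ — too strong. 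So I must be misreading the pointwise hypothesis; it must effectively be that only $O(q)$-many shells near the "boundary" matter, or the sets $S_r$ stop growing. Actually re-reading: $|S_r|\le h(q+r-1)^m$ is cumulative, so $|S_r\setminus S_{r-1}| \le h(q+r-1)^m$ is a weak bound — but the \emph{new} elements number at most $|S_r| - |S_{r-1}|$... no, sets need not have $|S_{r-1}|$ achieving its bound. I think the intended reading uses $|S_r\setminus S_{r-1}| \le |S_r| \le h(q+r-1)^m$ together with $A \le (q+r-2)^{-\alpha}$ \emph{but} telescoping/Abel summation: $\sum A(j) = \sum_r |S_r\setminus S_{r-1}| \cdot (\text{bound at }r)$, and Abel summation against the decreasing sequence $(q+r-2)^{-\alpha}$ gives $\sum_r |S_r|\big((q+r-2)^{-\alpha}-(q+r-1)^{-\alpha}\big)$; now $(q+r-2)^{-\alpha}-(q+r-1)^{-\alpha} = O((q+r)^{-\alpha-1})$, so we get $h\sum_r (q+r)^{m-\alpha-1} = O(h)q^{m-\alpha}\min(1,q)^{-1}$ by the integral test, since $m-\alpha-1<-1$. \textbf{That} is the trick, and that is the main step.

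So the real plan is: (i) use Abel (summation by parts) to rewrite $\sum_{j\in S}A(j) \le \sum_{r\ge 2}|S_r|\,\big(a_r - a_{r+1}\big)$ where $a_r := (q+r-2)^{-\alpha}$ for $r\ge 2$ and $a_r:=0$ conceptually past the range, using that $A(j) \le a_r$ on $S_r\setminus S_{r-1}$ and the sequence $a_r$ is nonincreasing; (ii) bound the difference $a_r - a_{r+1} = (q+r-2)^{-\alpha} - (q+r-1)^{-\alpha} = O\big((q+r-2)^{-\alpha-1}\big)$ by the mean value theorem; (iii) plug in $|S_r| \le h(q+r-1)^m = O(h(q+r-2)^m)$ (valid once $q+r-2\ge 1$, with the initial term treated directly) to get $\sum_{j\in S}A(j) = O(h)\sum_{s\ge 0}(q+s)^{m-\alpha-1}$; (iv) finish by the integral comparison $\sum_{s\ge 0}(q+s)^{m-\alpha-1} = O(q^{m-\alpha-1}) + \int_0^\infty (q+x)^{m-\alpha-1}dx = O(q^{m-\alpha-1}) + \frac{q^{m-\alpha}}{\alpha-m}$, and observe this is $O(q^{m-\alpha}\min(1,q)^{-1})$: when $q\ge 1$ the integral term $q^{m-\alpha}$ dominates and $\min(1,q)^{-1}=1$; when $q<1$ the term $q^{m-\alpha-1} = q^{m-\alpha}\cdot q^{-1} = q^{m-\alpha}\min(1,q)^{-1}$ dominates. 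The main obstacle — and the one subtlety worth stating carefully — is the summation-by-parts step and making sure the telescoping is set up so that the \emph{cumulative} bound on $|S_r|$ (not a bound on the increments) is what gets used; this is exactly where the gain of one power in the denominator, turning $(q+r)^{m-\alpha}$ into the summable $(q+r)^{m-\alpha-1}$, comes from, and it relies on $\alpha>m+1$ — which holds since the paper assumes $\alpha>m$ and $\alpha$ is a path-loss exponent with $\alpha>m\ge 1$, so in fact we only need $\alpha>m$ for the final integral $\int(q+x)^{m-\alpha}dx$ to converge; here we even have the stronger $m-\alpha-1<-1$ automatically.
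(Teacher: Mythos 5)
Your final plan is the paper's own argument: Abel summation (summation by parts) to trade the shell increments $|S_r\setminus S_{r-1}|$ for cumulative sizes $|S_r|$ paired against the forward difference $(q+r-2)^{-\alpha}-(q+r-1)^{-\alpha}=O\bigl((q+r-2)^{-\alpha-1}\bigr)$, followed by the integral comparison $\sum_{r\ge 0}(q+r)^{m-\alpha-1}=O\bigl(q^{m-\alpha}\min(1,q)^{-1}\bigr)$; the paper cites convexity of $x^{-\alpha}$ where you cite the mean value theorem, which is the identical estimate, and the needed convergence condition is $\alpha>m$, not $\alpha>m+1$, as you correctly conclude. Your exploratory observation that without the telescoping the bound is off by a factor of $q$ for large $q$ is exactly why the Abel step is essential, so the detour was productive, but the end result matches the paper's proof.
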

\begin{proof}
First, using the assumptions and sum rearrangements we have (explanations below): 
\begin{align*}
\sum_{S}{A(j)} & = \sum_{S_1}A(j) + \sum_{r\geq 2}{\sum_{S_r\setminus S_{r-1}}{A(j)}} \nonumber \\
& \le \sum_{r\geq 2}\frac{|S_r\setminus S_{r-1}|}{(q+r-2)^\alpha} \nonumber \\
& = \sum_{r\geq 2} \frac{|S_r|- |S_{r-1}|}{(q+r-2)^\alpha} \nonumber \\
& = \sum_{r\geq 2}|S_r| \left(\frac{1}{(q+r-2)^\alpha} - \frac{1}{(q+r-1)^\alpha}\right) \nonumber,
\end{align*}
where the first inequality uses $S_1=\emptyset$ and the upper bounds on $A$, the following equality holds because $\{S_r\}$ is a nested sequence, and the last one is a result of a sum rearrangement and the assumption that $S_1=\emptyset$. 

The convexity of the function $f(x)=x^{-\alpha}$ implies\footnote{For every convex differentiable function $f$ and $x,y\in \textbf{dom}\ f$, $f(x)-f(y)\ge f'(y)(x-y)$.} that $1/(x-1)^\alpha - 1/x^\alpha \le \alpha/(x-1)^{\alpha+1}$. 
Thus, continuing and using the bound on $|S_r|$, 
\[ \sum_{S}{A(j)} \le \alpha \sum_{r\geq 2} \frac{|S_r|}{(q+r-2)^{\alpha+1}} 
 \le \alpha h\cdot \sum_{r\geq 0}{(q+r)^{m-\alpha-1}} \ . \]
To complete the proof, note that $m-\alpha-1<-1$, so the sum converges and can be bounded by an integral, as follows:
\[
\sum_{r\geq 0}{(q+r)^{m-\alpha-1}} \le q^{m-\alpha-1} + \int_{q}^\infty  x^{m-\alpha-1} \, dx 
= q^{m-\alpha-1} + \frac{q^{m-\alpha}}{\alpha-m} = q^{m-\alpha}\cdot O(\min(1,q)^{-1})\ .
\]
\end{proof}

\subsection{Feasibility of Independent Sets: Uniform Thresholds}\label{s:unithresholds}

Now let us consider the special case when all thresholds $\beta_i$ are fixed at a uniform value $\beta\ge 1$, which, however, may be arbitrary, i.e., it may depend on network size or link lengths. We show that in this case feasibility can be guaranteed with much smaller tightness, namely $O(\log^*\Delta)$. This is achieved by choosing a slow-growing function $f$ in the definition of the conflict graphs, and by using global power control.

For dealing with global power control, we use the convenient sufficient condition for feasibility due to Kesselheim \cite{kesselheimconstantfactor}, which defines a purely geometric constraint on the set of links that implies feasibility.
Consider an additive \emph{interference operator} $I$ defined as follows. For links $i,j$, let $I(i,j)=\frac{\beta l_i^\alpha}{d(i,j)^\alpha}=\frac{\s_i^\alpha}{d(i,j)^\alpha}$ and define $I(i,i)=0$ for simplicity of notation.  The operator $I$ is additively expanded: for a set $S$ of links and a link $i$, let $I(S,i)=\sum_{j\in S}I(j,i)$ and $I(i,S)=\sum_{j\in
  S}I(i,j)$.  We will use the notation $I(L)=\max_{i\in L}{I(\{j\in L: l_j\le l_i\},i)}$, which denotes the maximum influence on any link by shorter links in $L$.

\begin{theorem}\cite{kesselheimconstantfactor}\label{T:kesselheimconstant}
For any set $L$ of links  in a metric space, if $I(L)<\frac{1}{12\cdot 3^{\alpha}}$, then $L$ is feasible.
\end{theorem}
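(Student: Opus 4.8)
The plan is to exhibit a positive power assignment witnessing feasibility, obtained as the solution of a linear fixed-point equation, so that the entire difficulty is packaged into showing that a certain interference matrix has spectral radius below $1$. Since $d(i,j)\le d_{ji}$ for all $i,j$ (as $d(i,j)$ is the minimum distance between the nodes of the two links), it suffices to find $P>0$ with $P(i)/\s_i^\alpha>\sum_{j\in L\setminus\{i\}}P(j)/d(i,j)^\alpha$ for every $i$: this implies $P(i)/\s_i^\alpha>\sum_j P(j)/d_{ji}^\alpha$, which is $P$-feasibility in the noise-free model. Substituting $P(i)=\s_i^\alpha x_i$ turns the requirement into $x_i>\sum_{j\ne i}I(j,i)\,x_j$ for all $i$, i.e. $x>Cx$ coordinatewise, where $C$ is the nonnegative matrix indexed by $L$ with $C_{ij}=I(j,i)$ for $i\ne j$ and $C_{ii}=0$. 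By the standard theory of nonnegative matrices, such an $x>0$ exists iff the spectral radius $\lambda(C)$ is $<1$, and then $x=(I-C)^{-1}\mathbf{1}=\sum_{k\ge0}C^k\mathbf{1}\ge\mathbf{1}>0$ works. So the statement reduces to proving $\lambda(C)<1$ from $I(L)<\frac{1}{12\cdot3^\alpha}$.

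To bound $\lambda(C)$, first note that the hypothesis already forces separation: for the shorter of any two links $i\ne j$, say $i$, $I(i,j)=\s_i^\alpha/d(i,j)^\alpha$ is interference on $j$ from a shorter link, hence $\le I(L)<1$, so $d(i,j)>\s_i\ge l_i$ and all entries of $C$ are finite. Now fix a total order $\preceq$ refining the order by non-decreasing length and split $C=C^-+C^+$, where $C^-$ keeps the entries $C_{ij}=I(j,i)$ with $j\preceq i$ (interference on $i$ from shorter links) and $C^+$ keeps the rest. Every row sum of $C^-$ equals $I(\{j:j\prec i\},i)\le I(L)$, so $\|C^-\|_\infty$ is small. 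The technical core is bounding $C^+$ --- interference at a link from \emph{longer} links --- which $I(L)$ does not control directly: one ``reverses'' the interference by charging the aggregate influence $\sum_{i\prec j}I(j,i)$ of a fixed long link $j$ on the short links around it against the mutual interference among those short links; the charging costs a constant factor (of order $3^\alpha$) from triangle-inequality distance comparisons, and the contributions, grouped by distance scale and by length class, form a convergent geometric series because the metric is doubling and $\alpha>m$ --- this is exactly the annuli estimate of Lemma~\ref{L:summation}, used as in Lemma~\ref{P:oblcore}. This yields $\|C^+\|=O(3^\alpha\cdot I(L))$ in an appropriately weighted norm; picking the weight so that $C^-$ is simultaneously controlled in the same norm gives $\lambda(C)\le\|C^-\|+\|C^+\|<1$ once $I(L)<\frac{1}{12\cdot3^\alpha}$. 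Then $P(i)=\s_i^\alpha\big((I-C)^{-1}\mathbf{1}\big)_i>0$ satisfies the SINR inequality at every link, so $L$ is feasible.

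The main obstacle is the bound on $C^+$. One must legitimately trade a long link's aggregate influence on many short links for the short links' mutual influence --- this is where the doubling dimension and the assumption $\alpha>m$ are indispensable, since otherwise a single long link can meaningfully affect unboundedly many short links even when $I(L)$ is tiny --- and then one must tune the bookkeeping (the weighted norm, the geometric-series constants coming out of the annuli argument, and the split of the unit ``budget'' between $C^-$ and $C^+$) so that everything collapses into the stated threshold $\frac{1}{12\cdot3^\alpha}$.
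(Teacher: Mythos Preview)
The paper does not prove this theorem; it is quoted from \cite{kesselheimconstantfactor} and used as a black box. So there is no ``paper's proof'' to match, but your proposal can still be compared against Kesselheim's original argument, and there is a real gap.

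Your central claim that ``the doubling dimension and the assumption $\alpha>m$ are indispensable'' is false: Theorem~\ref{T:kesselheimconstant} holds in \emph{arbitrary} metric spaces, and the paper explicitly stresses this when introducing Theorem~\ref{T:necessary} (``this theorem does not hold in general metric spaces, as opposed to Thm.~\ref{T:kesselheimconstant}''). Invoking Lemmas~\ref{L:summation} and~\ref{P:oblcore} is therefore misguided, since both rely on the doubling packing bound. Worse, the concrete step you outline --- bounding, for a fixed long link $j$, the column sum $\sum_{i\prec j} I(j,i)$ by charging against the mutual interference of the short links --- simply fails, even in doubling metrics: place one link $j$ of length $\ell_j$ and, at distance $\Theta(\ell_j)$ from it, pack $N$ links of length $\epsilon$ with pairwise separation $\Theta(\epsilon)$. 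Then $I(L)$ stays below any prescribed bound (the short-to-long and short-to-short influences are $O((\epsilon/\ell_j)^{\alpha-m})$ and $O(1)$ respectively), yet $\sum_{i\prec j} I(j,i)=\Theta(N)$ is unbounded. So no column-sum or $\|\cdot\|_1$ bound on $C^+$ is available, and the ``appropriately weighted norm'' you allude to is exactly what has to be constructed --- it is not a bookkeeping detail but the whole proof.

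Kesselheim's actual argument bypasses this by building the power assignment (equivalently, the weights) \emph{recursively} in order of increasing length: set $P(k)$ just large enough to dominate the already-fixed interference from shorter links by a factor $\Theta(3^\alpha)$, and then prove, purely via the triangle inequality $d(k,j)\le d(k,i)+l_i+d(i,j)\le 3\max(d(k,i),d(i,j))$ for $l_k\le l_i\le l_j$, that the resulting powers do not blow up and that interference from longer links is automatically controlled. No annuli, no packing, no doubling. Your spectral framing and $C^-/C^+$ split are fine scaffolding, and your instinct that a $3^\alpha$ factor enters through triangle-inequality comparisons is correct; but the missing idea is the inductive construction of the weight vector itself, not an appeal to the geometry of the ambient space.
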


We consider the conflict graph $\ghi=G_f$ with $f(x)=\gamma\tlog(x)=\gamma \max(\log^{2/(\alpha - m)}(x), 1)$ for an appropriate constant $\gamma>0$.
We show that for a large enough constant $\gamma> 0$, independence in $G_{\gamma\tlog}$ implies feasibility. In particular, we show that if a set $S$ is independent then $I(S)= O(\gamma^{-\alpha/2})$.
An appropriate choice of $\gamma$ yields feasibility via Thm.\ \ref{T:kesselheimconstant}.

\begin{theorem}\label{T:globalmain}
There is a constant $\gamma > 2$ such that every independent set in $G_{\gamma\tlog}$ is feasible.
\end{theorem}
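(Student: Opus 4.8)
The plan is to reduce the feasibility claim to a bound on the interference operator $I$ and invoke Kesselheim's criterion (Thm.~\ref{T:kesselheimconstant}). Concretely, it suffices to show that for a large enough constant $\gamma$, every independent set $S$ in $G_{\gamma\tlog}$ satisfies $I(S) = O(\gamma^{-\alpha/2})$: then an appropriate choice of $\gamma$ forces $I(S) < \frac{1}{12\cdot 3^\alpha}$, so $S$ is feasible. The key structural simplification over the general-threshold case is that $I(S) = \max_{i\in S} I(\{j\in S : \s_j\le\s_i\},i)$ only counts interference on a link from links of \emph{no larger} sensitivity; hence there is no need for a separate treatment of more-sensitive interferers and no power exponent $\tau$ to tune — the single interference function $I(j,i) = \s_j^\alpha/d(i,j)^\alpha$ behaves like $I_\tau$ with $\tau=1$.

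Fix $i\in S$. I would bound $I(\{j\in S:\s_j\le\s_i\},i)$ by partitioning these links into equilength groups $L_t = \{j\in S : 2^{-t}\s_i \le \s_j < 2^{-t+1}\s_i\}$, $t=1,2,\dots$, and writing $\e_t = \min_{j\in L_t}\s_j$, so that $\s_j\in[\e_t,2\e_t)$ for $j\in L_t$ and $2^{t-1} < \s_i/\e_t \le 2^t$. Since $f = \gamma\tlog$ is at most linear (i.e.\ $\gamma\tlog(x)\le\gamma_0 x$ for all $x\ge 1$ with an absolute $\gamma_0=\Theta(\gamma)$), Lemma~\ref{L:distreduction} applies and yields, for $\gamma$ large, two spatial-separation facts with an absolute constant $c_1>0$: any $j,k\in L_t$ satisfy $d(j,k) > c_1\sqrt{\gamma}\,\e_t$ (here $f(\s_k/\s_j)=\gamma$, since the ratio lies in $[1,2)$), and any $j\in L_t$ satisfies $d(i,j) > c_1\sqrt{\gamma}\,\e_t\,\tlog(\s_i/\e_t)$. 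As in the proof of Lemma~\ref{P:oblcore} I would split $L_t$ according to whether the endpoint of $j$ closest to $i$ lies nearer $r_i$ or $s_i$, so that the doubling property can be applied to honest point sets, at the cost of a factor $2$.

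Next, organize each $L_t$ into annuli around $i$ exactly as in the proof of Lemma~\ref{P:oblcore}: with $q_t := \tlog(\s_i/\e_t) \ge 1$, set $S_r^{(t)} = \{j\in L_t : d(i,j)\le c_1\sqrt{\gamma}\,\e_t(q_t+r-1)\}$. Then $S_1^{(t)}=\emptyset$; the doubling property gives $|S_r^{(t)}|\le C(q_t+r-1)^m$; and for $j\in S_r^{(t)}\setminus S_{r-1}^{(t)}$ we have $I(j,i) = \s_j^\alpha/d(i,j)^\alpha \le (2/(c_1\sqrt{\gamma}))^\alpha (q_t+r-2)^{-\alpha}$. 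Plugging this into Lemma~\ref{L:summation} (parameters $q=q_t$, $h=C$, so $\min(1,q_t)^{-1}=1$, and the inner annulus sum converges because $\alpha>m$) gives $I(L_t,i) = O(\gamma^{-\alpha/2})\cdot q_t^{m-\alpha} = O(\gamma^{-\alpha/2})\cdot\tlog(\s_i/\e_t)^{m-\alpha}$. Now sum over $t$: for $t\ge 2$ we have $\tlog(\s_i/\e_t) \ge \tlog(2^{t-1}) = (t-1)^{2/(\alpha-m)}$, hence $\tlog(\s_i/\e_t)^{m-\alpha}\le (t-1)^{-2}$ — this is exactly what the exponent $2/(\alpha-m)$ in the definition of $\tlog$ is calibrated for — while the $t=1$ term is $O(\gamma^{-\alpha/2})$. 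Therefore $I(\{j\in S:\s_j\le\s_i\},i) \le O(\gamma^{-\alpha/2})\bigl(1+\sum_{s\ge1}s^{-2}\bigr) = O(\gamma^{-\alpha/2})$, and taking the maximum over $i\in S$ gives $I(S)=O(\gamma^{-\alpha/2})$, which completes the proof via Thm.~\ref{T:kesselheimconstant}.

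The main obstacle is the bookkeeping in the two nested geometric sums: the inner one over annuli (handled by Lemma~\ref{L:summation}), where one must check that the polynomial growth of $|S_r^{(t)}|$ in the annulus radius (degree $m$, forced by the doubling dimension) is dominated by the polynomial decay of per-link interference (degree $\alpha>m$); and the outer one over equilength groups, where the slow $\tlog$-growth with exponent $2/(\alpha-m)$ is just enough to make the sum converge. One must also track how the constant from Lemma~\ref{L:distreduction} degrades — it is $\Theta(1/\sqrt{\gamma})$, which is precisely what produces the $\gamma^{-\alpha/2}$ savings — and dispose of the $O(1)$ edge-case groups where $\e_t$ is comparable to $\s_i$; these cost only constant factors.
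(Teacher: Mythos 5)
Your proof is correct and follows essentially the same route as the paper: fix a link $i$, restrict to links of smaller sensitivity (length), partition into equilength dyadic groups, use Lemma~\ref{L:distreduction} to turn $G_{\gamma\tlog}$-independence into spatial separation on the order of $\sqrt{\gamma}$, bound each group's interference via the annulus/doubling argument of Lemma~\ref{L:summation}, and sum the resulting $\tlog(\s_i/\e_t)^{m-\alpha} \approx t^{-2}$ series. The only cosmetic difference is that the paper factors the per-group annulus computation into Lemma~\ref{L:pcequilength} (its uniform-threshold analogue of Lemma~\ref{P:oblcore}) and then invokes it from Lemma~\ref{L:globalmain}, whereas you re-derive that bound inline; you also index equilength groups by sensitivity rather than length, which is equivalent under uniform thresholds.
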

The idea behind the proof is similar to the one for general thresholds. The main difference is that the absence of arbitrary (non-geometric) thresholds $\beta_i$ allows us to obtain feasibility with much less independence than before.

Again, in order to bound the interference of an independent set of links on a (longer) link $i$, we split the whole set into equilength subsets, bound the interference of each equilength subset using Lemma~\ref{L:pcequilength}, and combine them into a series that converges when we choose $f(x)=\gamma\tlog(x)$. All this is done in the following lemma.

 \begin{lemma}\label{L:globalmain}
   Let $S$ be an independent set in $G_{\gamma\tlog}$ with $\gamma> 2$. 
Then
 $
 I(S) = O\left(\gamma^{-\alpha/2} \right).
 $
\end{lemma}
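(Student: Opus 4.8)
The plan is to mimic the general-threshold analysis (Lemmas~\ref{P:mainlemma1}--\ref{P:mainlemma2}), but using the fact that with a single threshold $\beta$ the sensitivity is proportional to the length, so that independence in $G_{\gamma\tlog}$ translates \emph{directly} into a geometric separation (no directed-distance product to untangle). Fix a link $i$ and set $S_i=\{j\in S:\s_j\le\s_i\}$. Since $\s_j\le\s_i\iff l_j\le l_i$ and $I(S)=\max_i I(\{j:l_j\le l_i\},i)$, it suffices to show $I(S_i,i)=O(\gamma^{-\alpha/2})$ for each such $i$. Partition $S_i$ into equilength blocks $L_t=\{j\in S_i: 2^{t-1}\s_0\le\s_j<2^t\s_0\}$, $t=1,2,\dots$, where $\s_0=\min_{j\in S}\s_j$; note $t$ only ranges up to $\lceil\log(\s_i/\s_0)\rceil+1$.

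For a fixed block $L_t$ put $\e_t=\min_{j\in L_t}\s_j$. Applying Lemma~\ref{L:distreduction} to pairs inside $L_t$ (sensitivity ratio at most $2$, and $f(x)=\gamma\tlog(x)\ge\gamma$) gives $d(j,k)=\Omega(\sqrt{\gamma}\,\e_t)$ for all $j,k\in L_t$, while applying it to a pair $i,j$ with $j\in L_t$ (so $\s_j\le\s_i$) gives $d(i,j)=\Omega\bigl(\sqrt{\gamma}\,\s_j\tlog(\s_i/\s_j)\bigr)=\Omega\bigl(\sqrt{\gamma}\,\e_t\,\tlog(\s_i/\e_t)\bigr)$, the length terms $l_j\le\s_j/4$ being absorbed because $\tlog\ge1$ and $\gamma$ is large. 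These are exactly the hypotheses of the equilength-interference bound (Lemma~\ref{L:pcequilength}); concretely one splits $L_t$ into the links whose endpoint nearest $i$ lies near $r_i$ and those near $s_i$, and for each half covers the relevant region by concentric annuli about that endpoint of width $\Theta(\sqrt{\gamma}\,\e_t)$ starting at radius $\Theta(\sqrt{\gamma}\,\e_t\tlog(\s_i/\e_t))$. The doubling property bounds the $r$-th annulus' population by $O((q+r)^m)$ with $q=\Theta(\tlog(\s_i/\e_t))$, and each link there contributes at most $(2\e_t)^\alpha/(\sqrt{\gamma}\,\e_t(q+r))^\alpha=O(\gamma^{-\alpha/2}(q+r)^{-\alpha})$ to $I(\cdot,i)$ since $\s_j<2\e_t$. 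Feeding this into Lemma~\ref{L:summation} (with $q\ge1$, so $\min(1,q)^{-1}=1$) yields
\[
 I(L_t,i)=O\!\left(\gamma^{-\alpha/2}\right)\cdot\tlog(\s_i/\e_t)^{-(\alpha-m)} .
\]

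It remains to sum over $t$. With $\e_t=\Theta(2^{t}\s_0)$ we have $\log(\s_i/\e_t)=\log(\s_i/\s_0)-t+O(1)$, so as $t$ decreases this runs over the integers from (roughly) $0$ up to $\log(\s_i/\s_0)$; using $\tlog(x)=\max(\log^{2/(\alpha-m)}x,\,1)$ we get $\tlog(\s_i/\e_t)^{-(\alpha-m)}=\min(w^{-2},1)$ with $w=\log(\s_i/\e_t)$. Hence
\[
 I(S_i,i)=\sum_{t\ge1} I(L_t,i)=O(\gamma^{-\alpha/2})\sum_{w\ge0}\min(w^{-2},1)=O(\gamma^{-\alpha/2}),
\]
the series converging precisely because the exponent $2/(\alpha-m)$ in $\tlog$ was chosen so that $\bigl(\log^{2/(\alpha-m)}\bigr)^{\alpha-m}=\log^{2}$. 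Since $i$ was arbitrary, $I(S)=\max_i I(S_i,i)=O(\gamma^{-\alpha/2})$.

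The main obstacle is the equilength bound: getting the annulus decomposition and its population/contribution estimates exactly right, in particular choosing the offset $q=\Theta(\tlog(\s_i/\e_t))$ so that the innermost annulus already lies beyond the guaranteed separation distance (this is what produces the decisive $\tlog^{-(\alpha-m)}$ factor rather than a constant), and then verifying that the hypotheses of Lemma~\ref{L:summation} are met. This parallels Lemma~\ref{P:oblcore}, but with a slowly-growing separation function in place of a power law; as there, the assumption $\alpha>m$ is what makes both the inner (annulus) sum and the outer (scale) sum converge.
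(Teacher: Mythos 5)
Your proposal is correct and follows essentially the same route as the paper's proof: partition the shorter links into equilength blocks, apply Lemma~\ref{L:distreduction} to turn independence in $G_{\gamma\tlog}$ into geometric separation, invoke Lemma~\ref{L:pcequilength} (which internally uses Lemma~\ref{L:summation}) on each block to get a bound of order $\gamma^{-\alpha/2}\tlog(\s_i/\e_t)^{m-\alpha}$, and sum the resulting series, which converges to $O(1)$ precisely because the exponent in $\tlog$ was calibrated to produce $\sum t^{-2}$. The only cosmetic difference is that you index blocks from $\s_0$ upward whereas the paper indexes them from $l_i$ downward; the resulting sum is identical after reindexing.
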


  \begin{proof}
Fix an arbitrary link $i\in S$, and denote $S_i^-=\{j\in S: l_j\le l_i\}$. It suffices to show that $I(S_i^-,i)=O\left(\gamma^{-\alpha/2} \right)$. 
	
  Let $\gamma'$ be such that $\gamma\tlog(x) \le \gamma' x$, for every $x\ge 1$. Consider any link $j\in S_i^-$. Using independence and applying Lemma~\ref{L:distreduction}, we obtain that $d(i,j) > \gamma'' \s_{j}\tlog(l_i/l_j)$, where $\gamma''=\frac{\gamma}{\sqrt{\gamma'+1}+1}-1=\Theta(\sqrt{\gamma})$. Similarly, for every pair of links $j,k\in S_i^-$ with $l_j\ge l_k$, we have $d(j,k)>\gamma'' \s_{k}\tlog(l_i/l_j)\ge \gamma''\tlog(1) \cdot \s_{k}$.

Partition $S_i^-$ into equilength subsets 
$S_t = \{j\in S_i^- : l_i/2^{t-1} \leq l_j < l_i/2^t\},$ $t=1,2,\ldots$
Let $\ell_t$ denote the smallest link length in $S_t$.
The conditions of Lemma~\ref{L:pcequilength} hold for each $S_t$, so applying the lemma gives us the bound:
\[
I(S_t,i) = O(1)\cdot (\gamma''\tlog(l_i/\ell_t))^{m-\alpha}/(\gamma'')^{m}=O(\gamma^{-\alpha/2})\cdot \tlog(l_i/\ell_t)^{m-\alpha}.
\]
Observe that $\log (l_i/\ell_t) = t$, so $\tlog(l_i/\ell_t) = t^{2/(\alpha-m)}$.
Thus, 
\[
I(S_i^-,i)= \sum_{t\ge 1}{I(S_t,i)}\le O(\gamma^{-\alpha/2})\cdot\left(1+ \sum_{t\ge 1}{\left(t^{2/(\alpha-m)}\right)^{m-\alpha}}\right)=O(\gamma^{-\alpha/2}) \sum_{t\ge 1} t^{-2} = 
O(\gamma^{-\alpha/2}) \ , \]
which completes the proof.
 \end{proof}

The following lemma is the analogue of Lemma~\ref{P:oblcore}: It bounds the interference of an equilength independent set $S$ on a longer link $i$ that is independent from the set $S$. 

\begin{lemma}\label{L:pcequilength}
 Let $f$ be a non-decreasing function, such that $f(x)\geq 1$ whenever $x\geq 1$. Let $S$ be an equilength set of links such that for all $j,k\in S$ with $d(j,k)>f(1) \cdot \min\{\s_j,\s_k\} $, and let $i$ be a link in $S$ such that, for each $j\in S$, $l_i\ge l_j$ and $d(i,j)>\s_j f(l_i/l_j)$.
Then
 $
  I(S,i) = O(1)\cdot f(l_i/\ell)^{m - \alpha}/f(1)^m,
 $
where $\ell$ denotes the largest link length in $S$.
 \end{lemma}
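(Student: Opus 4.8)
The plan is to mimic the proof of Lemma~\ref{P:oblcore}, with the single parameter $\gamma$ there replaced by the two relevant scales here: a \emph{separation scale} $f(1)\s_0$, coming from the internal spacing of $S$, and a \emph{distance scale} $\s_0 f(l_i/\ell)$, coming from the independence of $i$ from the members of $S$, where $\s_0=\min_{j\in S}\s_j$. Since $S$ is equilength, $\s_j\le 2\s_0$ for every $j\in S$, so each term satisfies $I(j,i)=\s_j^\alpha/d(i,j)^\alpha\le (2\s_0)^\alpha/d(i,j)^\alpha$, and the whole task reduces to controlling the distances $d(i,j)$ from below, shell by shell. (If $i\in S$ the term $j=i$ contributes $0$ and may be ignored.)

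First I would split $S$ into the set $S'$ of links whose endpoint nearest to $i$ lies closer to $r_i$ than to $s_i$, and its complement; by symmetry it suffices to bound $I(S',i)$. For $j\in S'$ write $p_j$ for that nearest endpoint, so $d(p_j,r_i)=d(i,j)$. Put $q=f(l_i/\ell)$ and $q'=q/f(1)$; since $l_i\ge\ell$ and $f$ is non-decreasing with $f(1)\ge 1$, we have $q'\ge 1$. Using the hypothesis $d(i,j)>\s_j f(l_i/l_j)$ together with $l_j\le\ell$ and monotonicity of $f$ gives $d(p_j,r_i)>\s_0 f(l_i/\ell)=f(1)\s_0\, q'$. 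Then I would define the nested family $S_r=\{\,j\in S' : d(p_j,r_i)\le f(1)\s_0\,(q'+r-1)\,\}$ for $r=1,2,\dots$, so that $S_1=\emptyset$ by the bound just derived. The separation hypothesis gives $d(p_j,p_k)\ge d(j,k)>f(1)\min\{\s_j,\s_k\}\ge f(1)\s_0$, so the doubling property of the metric bounds $|S_r|$ by $C(q'+r-1)^m$; and for $j\in S_r\setminus S_{r-1}$ with $r\ge 2$ we get $d(i,j)>f(1)\s_0(q'+r-2)$, hence $I(j,i)\le (2/f(1))^\alpha(q'+r-2)^{-\alpha}$.

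At this point Lemma~\ref{L:summation}, applied with its parameter $q$ taken to be $q'$, its $h$ taken to be $C$, and $A(j)=(f(1)/2)^\alpha I(j,i)$, yields $\sum_{j\in S'}A(j)=O(C)\,(q')^{m-\alpha}\min(1,q')^{-1}=O(1)\,(q')^{m-\alpha}$, using $q'\ge 1$. Multiplying back by $(2/f(1))^\alpha$ and substituting $q'=f(l_i/\ell)/f(1)$ collapses the powers of $f(1)$, giving $I(S',i)=O(1)\,f(l_i/\ell)^{m-\alpha}/f(1)^m$ (note $m-\alpha<0$). The identical argument for $S\setminus S'$, with $s_i$ in place of $r_i$, gives the same bound, and adding the two completes the proof.

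I expect the only genuine subtlety to be keeping the two scales $f(1)\s_0$ and $\s_0 f(l_i/\ell)$ straight — concretely, the choice $q'=f(l_i/\ell)/f(1)$, which is exactly what makes $S_1$ empty and what places the data into the hypotheses of Lemma~\ref{L:summation}; in particular it guarantees $q'\ge 1$, so that the $\min(1,q')^{-1}$ factor produced by that lemma is harmless. Everything else is the standard concentric-shell, doubling-dimension packing estimate already carried out in Lemma~\ref{P:oblcore}.
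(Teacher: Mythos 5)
Your proof is correct and follows essentially the same route as the paper's: decompose into the half $S'$ closer to $r_i$, define concentric shells around $r_i$, observe $S_1=\emptyset$ from the hypothesis $d(i,j)>\s_j f(l_i/l_j)$, bound $|S_r|$ via the doubling property, and then invoke Lemma~\ref{L:summation}. The only cosmetic difference is your normalization: you take the shell width to be $f(1)\s_0$ and fold the factor $f(1)$ into the shifted index $q'=f(l_i/\ell)/f(1)$, whereas the paper uses a shell width on the order of $\s_0$ and absorbs the $1/f(1)^m$ into the count parameter $h$ of Lemma~\ref{L:summation}; both parameterizations satisfy that lemma's hypotheses and yield the stated bound after unnormalizing.
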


\begin{proof}
\def \hel {\hat \ell}
We proceed as in the proof of Lemma~\ref{P:oblcore}.
We treat the subset $S'$ of links in $S$ that are closer to $r_i$ than to $s_i$ and bound $I(S',i)$, while the symmetric case of $S\setminus S'$ is omitted.

Let us denote $q=f(l_i/\ell)$ and $\hel = \beta\ell$. Note that $q\geq 1$ because $l_i/\ell\ge 1$. 
For a link $j\in S'$, let $p_j$ denote the endpoint of link $j$ that is closest to $r_i$, i.e., $d(i,j)=d(p_j,r_i)$. 
Consider the nested sequence of subsets $S_1\subseteq S_2\subseteq \dots\subseteq S'$, where 
 \[
 S_r=\{j\in S': d(j,i)=d(p_j,r_i)\leq q \hel/2+(r-1)\hel/2\}.
 \]
We will apply Lemma~\ref{L:summation}.
First, note that $S_1=\emptyset$:  $S'$ is an equilength set with maximum link length $\ell$ and $f$ is non-decreasing, so by our independence assumption, $d(i,j)>\s_j f(l_i/l_j)\ge q \hel/2$.

Next, let us bound $|S_r|$ using the doubling dimension of the metric space. Consider any $j,k\in S_r$ such that $l_j\geq l_k$. By our assumption, $d(p_j,p_k) \ge d(j,k)> f(1) \cdot \min\{\s_j,\s_k\}\geq f(1)\hel/2.$
 By the definition of $S_r$, $d(p_j,r_i)\leq q \hel/2+(r-1)\hel/2$ for each $j\in S_r$. Thus,
\[
|S_r|=|\{p_j\}_{j\in S_r}|< C\cdot \left(\frac{ q\hel/2+(r-1)\hel/2 }{f(1)\hel/2}\right)^{m} = \frac{C}{f(1)^m} \left(q+r-1\right)^{m}.
\]

Next, let us bound $\max_{j\in S_{r}\setminus S_{r-1}}\{I(j,i)\}$. For each $r>1$ and for any link $j\in S_r\setminus S_{r-1}$, we have that $l_j \leq \ell$ and $d(i,j) > q\hel/2+(r-2)\hel/2$; hence, 
$
I (j, i) = \frac{\s_j^{\alpha}}{d(i,j)^\alpha}
 < \left(\frac{\hel}{q\hel/2+(r-2)\hel/2}\right)^\alpha
 =\frac{2^\alpha}{\left(q+r-2\right)^\alpha}.
$

Thus, we apply Lemma~\ref{L:summation} to the set $S'=\cup_{r\ge 2}S_r$ with the parameters 
$q,h=C/f(1)^m$ and function $A=I(\cdot,i)/2^\alpha$, to obtain:
 $I(S',i) =O(q^{m-\alpha}/f(1)^m)$, where we also use the fact that $q\ge 1$.
\end{proof}

\section{Limitations of the Conflict Graph Method}
\label{s:limitations}

Our results are best possible (up to constant factors) in several different ways.
Specifically,
conflict graphs for the physical model:
\begin{itemize}[noitemsep]
 \item are necessarily of the form we consider (for uniform data rates) (Sec.~\ref{ss:necessity}),
 \item incur cost of $\Omega(\log\log \Delta)$-factor for general data rates (Sec.~\ref{ss:genlb}),
 \item incur cost of  $\Omega(\log^* \Delta)$-factor for uniform data rates (Sec.~\ref{ss:uniflb}),
 \item give only the trivial approximation (of $n$) as a function of the number of links (Sec.~\ref{ss:uniflb}),
 \item give no approximation (in terms of $\Delta)$ in some non-doubling metrics (Sec.~\ref{ss:genmetrics-lb}),
 \item cannot represent uniform power (with non-trivial tightness) (Sec.~\ref{ss:uniformlb}), and
 \item cannot represent noise-limited networks (Sec.~\ref{ss:weaklb}).
\end{itemize}

Note that the instances we construct are embedded on the real line, i.e., in one dimensional space. 

\subsection{Necessity}
\label{ss:necessity}

What kind of graphs are conflict graphs? By a ``conflict graph formulation'' we mean a
deterministic rule for forming graphs on top of a set of links.  For it to be meaningful as a general purpose
mechanism, such a formulation cannot be too context-sensitive.  We shall postulate some axioms
(that by nature should be self-evident) that lead to a compact description of the space of
possible conflict graph formulations. For simplicity, we focus on the case of uniform thresholds, also because
this case lies at the heart of all our limitation results.

\begin{axiom}
A conflict graph formulation is defined in terms of the \emph{pairwise relationship} of links.
\label{axiom:pairwise}
\end{axiom}

By nature, graphs represent pairwise relationships; conflict graph formulations are boolean predicates of
pairs of links.  More specifically, though, we expect the conflict graph to be defined in terms of the
relative standings of the link pairs. That is, the existence of an edge between link $i$ and link $j$ should
depend only on the properties of the two links, not on other links in the instance.  The only properties
of note are the $\binom{4}{2} = 6$ distances between the nodes in the links.

We refer to a \emph{conflict} between two links if the formulation specifies them to be adjacent in the
conflict graph; otherwise, they are \emph{conflict-free}.

\begin{axiom}
A conflict graph formulation is \emph{invariant to translation and scaling}. 
Translating or scaling links by a fixed factor does not change the conflict relationship.
\label{axiom:scale-free}
\end{axiom}

An essential feature of the SINR formula -- that distinguishes it from
other formulations, like unit-disc graphs -- is that only relative
distances matter.  Thus, the positions of the nodes should not matter,
only the pairwise distances, and only the relative factors among the
distances.  There is a practical limit to which links can truly grow,
due to the ambient noise term.  However, that only matters when
lengths are very close to that limit; we will treat that case
separately.

\begin{axiom}
A conflict formulation is \emph{monotonic} with increasing distances.
\label{axiom:monotonicity}
\end{axiom}

The reasoning is that a conflict formulation should represent the degree of conflict between pairs of
links, or their relative ``nearness''.  Specifically, if two links conflict and their separation (i.e.,
one of the distances between endpoints on distinct links) decreases while the links stay of the same
length, then the links still conflict.  Similarly, if two links are conflict-free and the length of one
of them decreases (while their separation stays unchanged), the links stay conflict-free.

\begin{axiom}
A conflict formulation should respect pairwise incompatibility. That is, 
if two links cannot coexist in a feasible solution, they should be adjacent in the conflict graph.
\label{axiom:incompatible-pair}
\end{axiom}

\smallskip

In the case of conflict graphs for links in the SINR model with arbitrary power control, we propose an
additional axiom.

\begin{axiom}
  A conflict formulation for links under arbitrary power control is
  symmetric with respect to senders and receivers.
\label{axiom:symmetry}
\end{axiom}

Namely, it should not matter which endpoint of a link is the sender and which is the receiver when
determining conflicts. The key rationale for this comes from Kesselheim's sufficient condition for feasibility,
given here as Thm.~\ref{T:kesselheimconstant}. 
As we show in Sec.\ \ref{ss:uniflb}, this formula is also a necessary condition in doubling metrics, up to constant factors.
Thus, feasibility is fully characterized by a symmetric rule (modulo constant factors).

As we shall see, the axioms and the properties of doubling metrics imply that only two distances really
matter in the formulation of conflict graphs: the length of the longer link, and the distance between the
nearest nodes on the two links (both scaled by the length of the shorter link). 

We now argue that all conflict formulations satisfying the above axioms are essentially of the form
$G_f$, for a function $f$ (as defined in (\ref{eq:unifgraphdef})). They can only differ from $G_f$ by what can be accounted for by an appropriate constant factor in the definition of $f$. 

\begin{proposition}
Every conflict graph formulation $\cK$ is captured by $G_f$, for some non-decreasing 
function $f$.
Namely, $\cK$ is sandwiched by  $G_f$ and $G_{g}$, i.e.,
$G_f(L) \subseteq \cK(L) \subseteq G_{\gamma f}(L)$, for every link set $L$, where $g(x)=c_1f(c_2x)$, for some constants $c_1,c_2>0$.
\label{prop:gf-suffices}
\end{proposition}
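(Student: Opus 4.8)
The plan is to extract from the axioms a canonical "two-parameter" description of any conflict formulation $\cK$, and then to fit an $f$-envelope around it. First I would fix the shorter link of a pair to have unit length (legitimate by Axiom~\ref{axiom:scale-free}, invariance to scaling) and argue that, after this normalization, the conflict predicate between links $i$ (shorter) and $j$ (longer) depends only on two quantities: $x=\s_j/\s_i$ (equivalently, the length ratio, since thresholds are uniform) and $y=d(i,j)/\s_i$, the normalized nearest-node separation. The reduction from six pairwise distances to these two proceeds via Axiom~\ref{axiom:pairwise} (only pairwise data matter), Axiom~\ref{axiom:scale-free} (only ratios matter, killing one degree of freedom and fixing $\s_i=1$), Axiom~\ref{axiom:monotonicity} (the predicate is monotone in each separation with lengths fixed, so among the four cross-distances only the minimum is decisive for a ``near enough'' threshold), and the doubling property of the metric, which lets one realize essentially any pair $(x,y)$ by a configuration on the real line and makes the remaining distances irrelevant up to constants. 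This is the step I expect to be the main obstacle: making precise that ``the other distances only cost a constant factor'' — concretely, that two configurations with the same $(x,y)$ but different arrangements of the remaining endpoints differ in their $\cK$-status only by what a constant rescaling of $y$ (or $x$) absorbs. I would handle this by exhibiting, for each $(x,y)$, a one-dimensional witness configuration, and using monotonicity (Axiom~\ref{axiom:monotonicity}) to sandwich any other configuration with the same $(x,y)$ between two witnesses whose parameters differ by absolute constants.

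Next, having reduced $\cK$ to a monotone boolean predicate $P(x,y)$ on $[1,\infty)\times\mathbb{R}_{+}$ (conflict when $y$ is small relative to $x$), I would define, for each $x\ge 1$,
\[
 f(x) \;=\; \sup\{\, y \;:\; P(x,y) = \text{conflict} \,\}\, ,
\]
i.e.\ the largest normalized separation at which links of length ratio $x$ still conflict. Monotonicity of the formulation in the length of the longer link (Axiom~\ref{axiom:monotonicity} again) gives that $f$ is non-decreasing, and Axiom~\ref{axiom:incompatible-pair} together with the physical-model facts recorded in Sec.~\ref{s:conflict} (that $\glo=G_1$ is exactly the set of incompatible pairs) forces $f(x)\ge 1$ for all $x$, so $f$ is a legitimate parameter function for the $G_f$ construction in the sense of~(\ref{eq:unifgraphdef}). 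By construction, a conflict in $\cK$ at parameters $(x,y)$ implies $y\le f(x)$, which up to the constant slack from the reduction above is exactly membership in $G_{\gamma f}$; hence $\cK(L)\subseteq G_{\gamma f}(L)$. Conversely, if $y$ is strictly below $f(x)$ by a constant factor, monotonicity and the definition of the supremum force a conflict in $\cK$, giving $G_{f}(L)\subseteq \cK(L)$ after replacing $f$ by $c_1 f(c_2 \cdot)$ to absorb that factor; this is the $g(x)=c_1 f(c_2 x)$ appearing in the statement.

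Finally I would assemble the two inclusions into the claimed sandwich $G_f(L)\subseteq \cK(L)\subseteq G_{\gamma f}(L)$, checking that the constants $c_1,c_2,\gamma$ produced along the way are absolute (they come only from the doubling constant $C$ and the triangle-inequality slack $\s_i\ge 4l_i$, not from $|L|$ or $\Delta$), and noting that $f$ being obtained as a supremum of a monotone predicate is automatically non-decreasing, as required. One subtlety worth flagging in the write-up: $f$ as defined need not be continuous or sub-linear a priori — but the proposition only claims the sandwich, and the later limitation results (Secs.~\ref{ss:genlb}, \ref{ss:uniflb}) are what force $f$ to grow at least like $\log^\ast$ or $\log\log$; so here it suffices to produce \emph{some} non-decreasing $f$ with $f\ge 1$, which the supremum construction does unconditionally.
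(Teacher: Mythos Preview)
Your outline matches the paper's strategy: normalize by the shorter link, reduce the predicate to two parameters (length ratio and normalized separation), define $f$ as the threshold function, and read off the sandwich. Two points where your execution diverges from the paper's, and where you should tighten:

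First, the reduction to two parameters. You flag this as the main obstacle and propose to handle it by building one-dimensional witness configurations and invoking the doubling property. The paper does something more direct and does \emph{not} use doubling here at all: it first uses Axiom~\ref{axiom:incompatible-pair} to restrict attention to pairs independent in $\glo$, which (with $\s_i\ge 4l_i$) forces $d(i,j)\ge 3l_i$; then a short triangle-inequality computation shows that all four cross-distances are within absolute constant factors of either $d(i,j)$ or $q=\max(d(i,j),l_j/2)$. Since $q$ is itself a function of $d(i,j)$ and $l_j$, every cross-distance is determined up to constants by $(l_j/l_i,\, d(i,j)/l_i)$, and monotonicity then lets you absorb those constants into $c_1,c_2$. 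This is cleaner than constructing witnesses and avoids the doubling assumption entirely.

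Second, you never invoke Axiom~\ref{axiom:symmetry}. The paper uses it to treat the four cross-distances as unlabeled, so that the \emph{w.l.o.g.}\ choice of which pair realizes $d(i,j)$ is harmless. Your claim that ``among the four cross-distances only the minimum is decisive'' does not follow from monotonicity alone (a monotone predicate can genuinely depend on several separations); you need either the symmetry axiom or the paper's stronger observation that all four distances are pinned down by $d(i,j)$ and $l_j$ up to constants.
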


\begin{proof}
By Axiom \ref{axiom:pairwise}, $\cK$ is a function of link pairs, more specifically, the
distances among the four points. By Axiom \ref{axiom:scale-free}, we can use normalized distances, and will 
choose to factor out the length of the shorter link.
By Axiom \ref{axiom:symmetry}, it does not matter which of them involve senders and which involve receivers.

Now, consider two links $i = (s_i,r_i)$ and $j = (s_j,r_j)$, where $l_i \le l_j$.
Let us denote for short $d = d(i,j)$.
We aim to show that decisions regarding adjacency in $\cK$
can be determined in terms of constant multiples of $d$ and $l_j$.

First, recall that by Axiom \ref{axiom:incompatible-pair}, pairwise incompatible links must be adjacent in any conflict
graph. As observed in Sec.~\ref{s:conflict}, this is encoded in $\glo$, so we may restrict attention to independent sets in $\glo$,
where we have $d_{ij}d_{ji} \ge \s_i\s_j\ge 16l_il_j$. Let us first show that in this case, $d \ge 3l_i$. Indeed, assuming the opposite, and assuming w.l.o.g.\ $d_{ij} \ge d_{ji}$, we obtain, by triangular inequality, that either $d_{ji}= d <3l_i$ and $d_{ij}\le d + l_i + l_j < 5l_j$, or $d_{ji}\le d+l_i<4l_i$ and $d_{ij}\le d+l_j<4l_j$, both cases contradicting the independence assumption. 

We can relate the other distances to a combination of $d(i,j)$ and $l_j$,
using triangular inequality. Assume w.l.o.g.\ that $d=d(s_i,s_j)$. 
First, observe that the distance $d(r_i,s_j)$ is at most constant times the distance
from $i$ to $j$, i.e., $d(s_i,s_j) \le d(r_i,s_j) \le d + l_i \le (1 + 1/3) d(s_i,s_j)$.
Next, we claim that $d(s_i,r_j)$ and $d(r_i,r_j)$ are within a constant multiple of $q = \max(d, l_j/2)$. We have:
$d(s_i,r_j) \ge l_j - d(s_i,s_j) = l_j - d$, and hence $d(s_i,r_j) \ge \max(d,l_j-d) \ge q$, and also $d(s_i,r_j) \le d + l_j \le 3q$. 
Similarly, $d(r_i,r_j)\ge q$ and $d(r_i,r_j) \le d + l_i + l_j \le 4q$.
It follows that all four distances between endpoints are within constant multiples of $d(i,j)$ and $l_j$.

Hence, by monotonicity (Axiom \ref{axiom:monotonicity}), $\cK$ is dominated by a conflict graph
formulation $\cH$ defined by a monotone boolean predicate of two variables: length of the longer link
$l_j$, and the distance $d(i,j)$ between the links (scaled by the shorter link). 
However, an arbitrary monotone boolean predicate of two variables $x, y$ can be represented by a 
relationship of the form $y > f(x)$, for some monotonic function $f$. 
Thus, $\cK$ is dominated by $G_f$, for some non-decreasing function $f$.
Also, by the same arguments, $\cK$ dominates $G_{c_1f(c_2x)}$ for constants $c_1,c_2>0$.
\end{proof}

Finally, we can observe that sub-linearity is necessary if one seeks non-trivial approximations. 
Namely, linear functions correspond to disc graphs, and Moscibroda and Wattenhofer \cite{moscibrodaconnectivity} gave an instance of a feasible set of links that induces a clique in disc graphs. The length diversity $\Delta$ in their construction is $2^n$, 
thus the best approximation one can hope for is logarithmic in $\Delta$.
This holds equally for any super-linear function, by the same construction.

\subsection{Optimality of Tightness Bounds}

We begin with a general result showing that the tightness bound of $O(f^*(\Delta))$ in Thm.~\ref{T:sandwich} cannot be improved. The result further shows that for the physical model, there is no choice of the lower bound graph $\glo$ that can improve the tightness bound.
Finally, this result shows that the number of links $n$ alone is not a meaningful measure for tightness.
The result holds even for the special case when the links have uniform thresholds, and the nodes are arranged on the real line.

The construction is based on the following technical observations. On the one hand, it follows from Thm.\ \ref{T:kesselheimconstant}
that any set of exponentially growing links arranged sequentially by the order of length on the real line is (almost)
feasible. On the other hand, given such a set $S$ of links on the line, a new link $j$ can be formed so that
$j$ is adjacent (in $G_f$) to all the links in $S$ while the set $S\cup j$ stays feasible; the only requirement is that $j$
be long enough. Our construction then builds recursively on these ideas.
\begin{theorem}\label{T:ndependence}
Let $f(x)=\omega(1)$, and assume uniform threshold $\beta\ge 1$. For any integer $n > 0$, there is a \emph{feasible} set $L$ of $n$ links arranged on the real line, such that $G_f(L)$ is a clique, i.e., $\chi(G_f(L))=n$. Moreover, if $f(x)\ge g(x)$ ($x\ge 1$) for a strongly sub-linear increasing function $g(x)$ with $g(x)=\omega(1)$, then $n = \Omega(g^*(\Delta))$.
\end{theorem}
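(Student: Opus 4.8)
The plan is to exhibit, for each $n$, a set $L=\{1,\dots,n\}$ of links on $\mathbb{R}$ with rapidly increasing sensitivities $\s_1<\s_2<\cdots<\s_n$ (so $\Delta=\s_n/\s_1$), laid out from left to right in \emph{decreasing} order of length, with a small gap preceding each link, and to show that this set is feasible while $G_f(L)$ is a clique (hence $\chi(G_f(L))=n$). Two facts, as indicated before the statement, drive the construction. First, such a geometrically growing chain satisfies Kesselheim's criterion (Thm.~\ref{T:kesselheimconstant}): the influence $I(\{j:l_j\le l_i\},i)=\sum_j \s_j^\alpha/d(i,j)^\alpha$ received by any link is dominated by a convergent geometric series in the ratios $\s_j/\s_i$, hence stays below $\frac{1}{12\cdot 3^\alpha}$. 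Second, a \emph{sufficiently long} new link appended on the \emph{left} of the chain can be made adjacent in $G_f$ to \emph{all} the links already present while the influence it receives stays below threshold.

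I would build $L$ inductively, maintaining that $\{1,\dots,t-1\}$ is feasible, forms a $G_f$-clique, and lies in an interval of length at most $C'\s_{t-1}$ for a fixed constant $C'$. To add link $t$: put a gap of length $c\,\s_{t-1}$ to the left of link $t-1$ (with $c$ a constant fixed once and for all), place link $t$ just to the left of that gap, and then pick $\s_t$ large enough that (i) $\s_t/\s_{t-1}\ge\rho$ for a fixed large constant $\rho$ (keeping the sensitivities geometric), and (ii) the clique inequality $d(t,j)<\s_j\,f(\s_t/\s_j)$ holds for every $j<t$. Since link $t$ extends leftward, away from the rest, $d(t,j)=\Theta(\s_{t-1})$ for all $j<t$, with a constant independent of $\s_t$; and since $f(x)=\omega(1)$, for each of the finitely many $j<t$ the quantity $\s_j f(\s_t/\s_j)$ tends to infinity as $\s_t\to\infty$, so one common large choice of $\s_t$ serves all of them. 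For feasibility only link $t$ needs checking, because adjoining a longer link does not change the influence received by the shorter, already present links; there one bounds $\sum_{j<t}\s_j^\alpha/d(t,j)^\alpha\le c^{-\alpha}+\sum_{k\ge 1}\rho^{-k\alpha}$, which is below $\frac{1}{12\cdot 3^\alpha}$ once $c,\rho$ are large enough. This gives the first assertion.

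For the ``moreover'' part I would run the same construction with $g$ in place of $f$: since $g\le f$ implies $G_g(L)\subseteq G_f(L)$, a $G_g$-clique is a $G_f$-clique, and now $g$ is increasing and strongly sublinear, which makes the quantitative bookkeeping clean. Writing $R_t=\s_t/\s_1$, the binding clique constraint is the one between link $t$ and the shortest link (this is the tightest, since the sensitivities grow geometrically and $g$ is strongly sublinear, so $j\mapsto \s_j g(\s_t/\s_j)$ is increasing along the chain), namely $\Theta(\s_{t-1})=d(t,1)<\s_1 g(R_t)$, i.e.\ $R_{t-1}\le g(R_t)/C'$; the feasibility requirement $R_t/R_{t-1}\ge\rho$ is then automatic, as $g$ sublinear forces $R_t\gg R_{t-1}$. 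To make $\Delta$ as small as possible for the given $n$, take $R_1=1$ and each $R_t$ minimal, so that (up to the first few terms) $R_{t-1}=g(R_t)/C'$ exactly; unrolling gives $(g/C')^{(n-1)}(\Delta)=R_1=1\le x_0$. Finally, strong sublinearity of $g$ absorbs the constant $C'$: iterating $g$ (rather than $g/C'$) the same $n-1$ times from $\Delta$ overshoots $1$ by at most a factor that is a constant raised to the power $O(n)$, and since strong sublinearity also yields $g(x)\le x/2$ above an absolute constant, a further $O(n)$ iterations of $g$ push the value below $x_0$. Hence $g^{*}(\Delta)\le n-1+O(n)=O(n)$, i.e.\ $n=\Omega(g^{*}(\Delta))$; combined with Thm.~\ref{T:sandwich} this pins the tightness at $\Theta(g^*(\Delta))$ when $g=f$.

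The conceptual core is the geometric invariant --- a compact, left-extending chain of exponentially growing links --- after which the clique inequalities and the feasibility estimate are routine geometric-series arguments. The one genuinely delicate point is the constant bookkeeping in the ``moreover'' part: the layout unavoidably puts a factor $C'>1$ into the recursion $R_{t-1}=g(R_t)/C'$, and iterating it $n-1$ times would naively blow this factor up; the notion of \emph{strongly sublinear} $g$ is used precisely to show that this costs only $O(n)$ extra iterations of $g$, a term that is anyway absorbed by $g^{*}(\Delta)$ itself growing linearly in $n$.
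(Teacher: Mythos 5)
Your proposal is correct and takes essentially the same approach as the paper: a geometric chain on the line with exponentially growing lengths, with each newly added longest link forced to be $G_f$-adjacent to all earlier ones while contributing only a convergent geometric sum to the interference, and the lower bound on $n$ obtained by unrolling the defining recursion. The only cosmetic differences are that the paper lays out $2n$ links end-to-end (so that the even-indexed links serve as your gaps, and the odd-indexed $n$ links form the feasible clique) and grows lengths rightward rather than leftward, and in the ``moreover'' part you absorb the constant $C'$ explicitly via strong sublinearity at a cost of $O(n)$ extra iterations of $g$, whereas the paper absorbs it implicitly by iterating the rescaled function $g/4$ and relying on $(g/4)^*$ and $g^*$ being comparable.
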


\begin{proof} Let us assume, for simplicity, that $\beta=1$; the argument extends straightforwardly to any constant $\beta> 1$.
Consider a set of $2n$ links, $\{1,2,\dots,2n\}$, arranged consecutively from left to right on the real line. For each $i=1,2,\dots,n-1$, the node $r_i$ is to the right of $s_i$, and shares the same location  on the line with $s_{i+1}$,  i.e., $r_i=s_i+l_i=s_{i+1}$. See Figure~\ref{fig:notcomplicated}.
 The lengths of the links are defined inductively, as follows. We set $l_1=1$, and for $i\ge 1$, we choose $l_{i+1}$ to be the minimum value satisfying:
\begin{align}
\label{E:feasibility} l_{i+1}&\ge cl_i \\ 
\label{E:clique}2d(i+1,j)=2d_{i+1,j}&\le l_jf(l_{i+1}/l_j)\mbox{ for all } j\leq i,
\end{align}
where $c\ge 2$ is a large enough constant, specified below. Such a value of $l_{i+1}$ can be chosen as follows. By the inductive hypothesis, we have $l_j\ge cl_{j-1}\ge 2l_{j-1}$ for $j=2,3,\dots,i$, which implies that $l_i> \sum_{j=1}^{i-1}{l_j}$. Then, we have that $d_{i+1,j}=\sum_{t=j+1}^i{l_t}< 2l_i$ for all $j\le i$. Thus, it is enough to choose $l_{i+1}$ such that $l_{i+1}\ge cl_i$ and $4l_i \le l_jf(l_{i+1}/l_j)$ for all $j\le i$, which can be done using the assumption $f=\omega(1)$ and the fact that the values of $l_j$ for $j\le i$ are already fixed at this point. This completes the construction.
\begin{figure}[htbp]
\begin{center}
\includegraphics[width=0.8\textwidth]{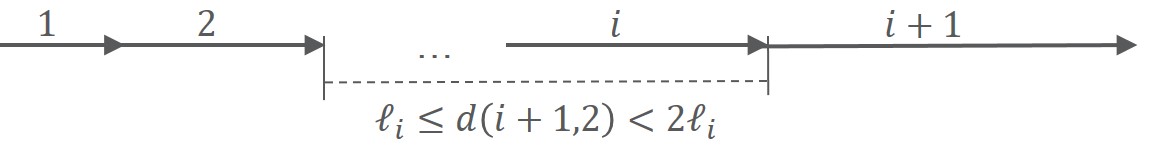}
\caption{The construction in Thm.\ \ref{T:ndependence}.}
\label{fig:notcomplicated}
\end{center}
\end{figure}

First, observe that (\ref{E:clique}) implies that $G_f(L)$ is a clique. Indeed, consider two links $i,j$, such that $i>j$. Then $d_{ji}\le 2l_i$, which, multiplied with (\ref{E:clique}), shows that links $i,j$ are adjacent in $G_f$ (recall that $\beta=1$).

Next, we prove feasibility. Consider the odd numbered links $S=\{1,3,\dots,...,2n-1\}$. Let us fix a link $2k+1\in S$. Let $T=\{j\in S: l_j< l_{2k+1}\}$.
Note that for each $j\in T$, $d(j,2k+1)\ge l_{2k}\ge c^{2k-j-1}$. We have that
\[
I(T,2k+1)=\sum_{j\in T}{\frac{l_j^{\alpha}}{d(j,2k+1)^{\alpha}}}< \sum_{t=1}^\infty {c^{-t\alpha}},
\]
where the last sum is a geometric series that can be made smaller than any constant, by choosing constant $c$ appropriately. Thus, there is a choice of constant $c$ for which $T$ is feasible, as per Thm.\ \ref{T:signalstrengthening}. This proves the first part of the theorem.

Now, let us assume that $f(x)\ge g(x)$ for a strongly sub-linear function $g(x)$ with $g(x)=\omega(1)$. Then, there is a constant $x_0$ such that $g(x) < x$ for all $x \ge x_0$ (because $g(x)=o(x)$) and there is a constant $c'$ such that $2g(x)/x\le g(y)/y$ whenever $x\ge c'y$ (strong sub-linearity). In this case, we repeat the construction above with a few modifications.
We set $l_1=\max(c',x_0)$ and set $l_{i+1}$  be the minimum value s.t. $g(l_{i+1}) \ge cl_i$, for $i=1,2,\dots$ (such a value exists because $g(x)=\omega(1)$), where $c$ is the constant from (\ref{E:feasibility}). Let us show that the conditions (\ref{E:feasibility}-\ref{E:clique}) hold for these links.
  
Since $l_{i+1}\ge x_0$, we have that $l_{i+1} > g(l_{i+1}) \ge cl_i$, which implies (\ref{E:feasibility}). This in turn implies, as observed in the first part of the proof, that $d(i+1,j) < 2l_i$ for all $2\le j\le i$. Let us denote $x=l_{i+1}/l_1=l_{i+1}$ and $y=l_{i+1}/l_j$. Note that $x/y=l_j \ge c'$, so we have, by strong sub-linearity of $g$,  that $g(y)/y \ge 2g(x)/x$, or equivalently, that $l_j\cdot g(l_{i+1}/l_{j}) \ge 2\cdot g(l_{i+1})$; hence $l_j\cdot g(l_{i+1}/l_j) \ge 4l_i > 2d(i+1,j)$ for all $2\le j \le i$, and (\ref{E:clique}) holds. 

It remains to prove the lower bound on $n$. Recall that the value of $l_{i+1}$ is the minimum satisfying $g(l_{i+1})\ge 2l_i$, for $i=1,2,\dots,n-1$. Then, we have $g(l_{i+1}/2) < 2l_i$ or, equivalently, $h(l_{i+1}/2) < l_i/2$, where $h(x)=g(x)/4$. Thus,
\[
1/2=l_1/2 > h(l_2/2) > h(h(l_3/2))> \dots > h^{(n-1)}(l_n/2)=h^{(n-1)}(\Delta/2),
\]
  which implies that $n =\Omega( h^*(\Delta/2))=\Omega(g^*(\Delta))$.
\end{proof}

\subsubsection{Optimality for General Thresholds}
\label{ss:genlb}

Here we show that the obtained tightness is essentially best possible, by demonstrating that every reasonable conflict graph formulation must incur an $O(\log\log\Ds)$ factor.
First, since the  feasibility of a set of links is precisely determined by the values $\s_i$ and $d_{ij}$, we can assume, by a similar reasoning as in Sec.~\ref{ss:necessity}, that the conflict relation is a function of $\frac{\s_{max}}{\s_{min}}, \frac{d_{ij}}{\s_{min}}, \frac{d_{ji}}{\s_{min}}$, where $\s_{min}$ and $\s_{max}$ are the smaller and larger values of $\s_i,\s_j$, respectively. Our construction will consist of only \emph{unit-length} links (i.e.\ $l_i=1$) of mutual distance at least 3. In this case, we can further reduce the number of variables by observing that in such instances, $d_{ij}=\Theta( d_{ji})=\Theta(d(i,j))$. Thus, the conflict relation is essentially determined by two variables: $\frac{d(i,j)}{\s_{min}}$ and $\frac{\s_{max}}{\s_{min}}$. By separating the variables, the conflict predicate boils down to a relation
$
\frac{d(i,j)}{\s_{min}} > f(\frac{\s_{max}}{\s_{min}})
$
for a function $f$. 

Let us show that 
feasibility of independent sets requires that $f(x)=\Omega(\sqrt{x})$ in such a graph.
Let us fix a function $f:[1,\infty)\rightarrow [1,\infty)$. Let $i,j$ be unit-length links with $\beta_j=1$ and $\beta_i=X^\alpha >1$, where $X$ is a parameter. Further assume that the links $i,j$ are placed on the plane so that $d(i,j)=f(X)+3=f(\s_i/\s_j)+3$, which implies that the links are non-adjacent in $G_f$. Thus,  $i,j$ must form a feasible set: $\frac{P(i)}{\s_i^\alpha} > \frac{P(j)}{d_{ji}^\alpha}$ and $\frac{P(j)}{\s_j^\alpha} > \frac{P(i)}{d_{ij}^\alpha}$. Multiplying these inequalities together and canceling $P(i)$ and $P(j)$ out, gives: $d_{ij}d_{ji} > \s_i\s_j=X$. Since the links have unit lengths, while $d(i,j)>2\max(l_i,l_j)$, the triangle inequality implies that $d(i,j)= \Theta(\max(d_{ij},d_{ji}))=\Theta(\sqrt{d_{ij}d_{ji}})=\Omega(\sqrt{X})$, which in turn implies that $f(X)=d(i,j)-3=\Omega(\sqrt{X})$.

Now, the main claim of this section, that is, the tightness must be at least $\Omega(\log\log\Ds)$, follows from Thm.~\ref{T:ndependence}, because for $f(x)=\Omega(\sqrt{x})$, we have  $f^*(x)=\Omega(\log\log x)$.

\subsubsection{Optimality for Uniform Thresholds}
\label{ss:uniflb}

The strategy  of proving a lower bound on $f$ for which $G_f$ is a ``working'' conflict graph, used in the case of general thresholds, seems difficult to apply for the uniform thresholds case. Instead, our strategy here is as follows. First, we observe that for $f(x)=\Omega(\log^{(c)}x)$ with any constant $c\ge 1$, the lower bound $\Theta(\log^*{\Delta})$ on tightness follows from Thm.~\ref{T:ndependence}. In particular, our analysis of $G_{\gamma\tlog}$ is tight. This, however, leaves the possibility that a slower-growing function $f$ could give better tightness. To close this gap, we prove the following theorem.

\begin{theorem}\label{T:hardinstance}
Let $f(x)=O(\log^{1/\alpha}x)$. Assume uniform and fixed thresholds $\beta\ge 1$. For each $\Delta>0$, there is a set $L$ of links on the real line with $\Delta(L)=\Omega(\Delta)$, such that $G_f(L)$ has no edges, but $L$ cannot be partitioned into fewer than $\Theta(\log^*{\Delta(L)})$ feasible subsets.
\end{theorem}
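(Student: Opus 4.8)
The plan is to build, for a parameter $k$, a single set $L_k$ of links on the real line, partitioned into $k$ \emph{levels}: all links of level $t$ share a common length $\ell_t$, with $\ell_1 \ll \ell_2 \ll \cdots \ll \ell_k$ growing extremely fast --- so fast that $\ell_{t+1}/\ell_1$ is roughly $f^{-1}$ applied to a polynomial in $\ell_t/\ell_1$, in analogy with the growth rule $g(l_{i+1})\ge cl_i$ driving the proof of Theorem~\ref{T:ndependence}. Geometrically $L_k$ is a rooted tree of gadgets: each level-$(t{+}1)$ link $v$ carries, clustered just to one side of its receiver $r_v$, several translated copies of the level-$t$ gadget, and each copy is placed at essentially the smallest distance from $v$ that keeps all of its links non-adjacent to $v$ in $G_f$, so that each such link still exerts on $v$ a nonnegligible but individually tiny interference, of order $f(\ell_{t+1}/\ell_t)^{-\alpha}$. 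The number of copies hung below each level-$(t{+}1)$ link is tuned to $\Theta\!\big(f(\ell_{t+1}/\ell_t)^{\alpha}\big)$, the threshold at which their combined contribution to the interference at $v$ becomes a constant; the base gadget $L_1$ is a single link.

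Two facts are then routine. First, $G_f(L_k)$ has no edges: since $f$ is non-decreasing and $f(x)=O(\log^{1/\alpha}x)=o(x)$, the separation $\s_{\min} f(\s_{\max}/\s_{\min})$ demanded between non-adjacent links is only a mild overhead above the lengths, so a triangle-inequality check (of the kind in Lemma~\ref{L:distreduction}) shows the chosen clustering radii leave every pair --- within a gadget, between sibling gadgets, and between a link and all shorter links nested below it --- conflict-free. Second, unwinding the growth rule gives $\Delta(L_k)=\ell_k/\ell_1$ equal to an iterate of $f^{-1}$ (a tower-type function, since $f^{-1}$ is at least exponential) of height $\Theta(k)$; equivalently $k=\Theta\!\big(f^{*}(\Delta(L_k))\big)=\Theta(\log^{*}\Delta(L_k))$, using $f^{*}=\Theta(\log^{*})$ for $f=O(\log^{1/\alpha})$ exactly as computed in Section~\ref{s:conflict}. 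Throughout the feasibility analysis I would use that a set $S$ is feasible up to constants iff $I(S)=\max_{i\in S}\sum_{j\in S:\, l_j\le l_i}\s_j^{\alpha}/d(i,j)^{\alpha}=O(1)$: Theorem~\ref{T:kesselheimconstant} for the ``if'' direction and the matching necessary condition recorded at the start of Section~\ref{ss:uniflb} for the ``only if'' direction, with Theorem~\ref{T:signalstrengthening} to turn an $O(1)$-bounded interference set into an actual feasible partition.

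The heart of the argument is to show, by induction on $k$, that $L_k$ cannot be covered by fewer than $k$ feasible sets. Given a partition into $z$ feasible sets, the inductive hypothesis (applied to each sub-gadget, whose feasibility chromatic number is unchanged by the single longer link $v$ sitting above it) forces every level-$(\le k{-}1)$ sub-gadget to spend $\ge k-1$ of the $z$ colors inside itself, so if $z\le k-1$ every color already appears inside every such gadget; then for a top-level link $v$, the class colored like $v$ meets each of the $\Theta\!\big(f(\ell_k/\ell_{k-1})^{\alpha}\big)$ gadgets below $v$, each meeting contributes at least one link's worth of interference at $v$, and by the tuning of the number of gadgets these contributions sum past the feasibility constant, so $I(v\text{'s class})=\omega(1)$ --- a contradiction, hence $z\ge k$. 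The main obstacle is making this last counting step actually go through: a gadget placed close enough to $v$ to exert the required interference tends to contain \emph{short} links that, at the necessary distance, are conflict-free with $v$ only because they contribute little, so the construction must be arranged --- by widening the length gap $\ell_{t+1}/\ell_t$, which is precisely what produces the tower-type growth (and hence $\log^{*}$ rather than $\log$ in the bound) --- so that the link in each gadget that ends up sharing $v$'s color is guaranteed close enough, and so that enough gadgets simultaneously fit into the thin annulus around $r_v$ in which non-adjacency to $v$ and appreciable interference on $v$ can coexist. Verifying that a consistent choice of clustering radii and gap sizes exists is where the one-dimensional geometry must be pushed through carefully, via triangle inequalities as in Lemma~\ref{L:distreduction} and geometric-sum estimates as in Lemma~\ref{L:summation}.
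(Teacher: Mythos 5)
Your high-level plan — recursive gadgets, feasibility characterized by the interference functional $I$ via Theorems~\ref{T:kesselheimconstant} and~\ref{T:necessary}, an inductive coloring argument forcing one more slot per level, and deriving $\log^*$ from the growth of the recursion — matches the paper's proof in outline. But the construction you propose has a quantitative flaw at exactly the step you flag as ``the main obstacle,'' and the fix you suggest (widening the length gap $\ell_{t+1}/\ell_t$) does not repair it.

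The issue is the one-dimensional geometry combined with same-length copies. Suppose level-$t$ gadgets all have links of length $\ell_t$ and you hang $m$ translated copies next to the receiver of a level-$(t+1)$ link $v$, each just barely non-adjacent to $v$. Non-adjacency forces the nearest copy to be at distance $\gtrsim \ell_t f(\ell_{t+1}/\ell_t)$, and consecutive copies must be spaced $\gtrsim \ell_t$ apart for mutual non-adjacency. Writing $F = f(\ell_{t+1}/\ell_t)$, the $s$-th copy sits at distance $\sim \ell_t (F + s)$ and contributes $\sim (F+s)^{-\alpha}$ to $I(\cdot,v)$. The total is $\sum_{s\ge 1}(F+s)^{-\alpha} = O(F^{1-\alpha})$, which tends to $0$ as $F\to\infty$ because $\alpha > 1$ on the line. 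So your $\Theta(F^\alpha)$ copies do \emph{not} push the interference past a constant; in fact, no number of same-sized copies will, and increasing the gap $\ell_{t+1}/\ell_t$ only makes $F$ larger and the sum smaller. The ``thin annulus'' you want does not contain enough same-length copies in 1D.

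The paper's construction avoids this by making the $k$ copies geometrically scaled, $L_t^s$ scaled by $8^s$, and placing copy $s$ at distance $\Theta\big(8^s\ell_t\,(k+1-s)^{1/\alpha}\big)$ from the top link. Because length and distance grow together, the per-copy contribution becomes $\Theta\big(\ell_t^{-\alpha}/(k-s+1)\big)$ — a \emph{harmonic} term rather than a convergent one — so summing a transversal over all copies yields $\Omega\big(\log k/\ell_t^\alpha\big)$. A harmonic sum reaches a constant only when $k = 2^{\Theta(\ell_t^\alpha)}$, i.e., exponential in the diameter of $L_t$, not $\mathrm{poly}(F)$. It is precisely this exponential number of copies, together with the geometric scaling (which makes the diameter of $L_{t+1}$ exponential in $k$), that produces the doubly-exponential-per-level growth of the diameter and hence $\Delta(L_t) \le 2\uparrow(c_1 t)$, i.e., $t=\Omega(\log^*\Delta)$. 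In your version the copy count is far too small for the interference lower bound, and the tower growth is asserted for the wrong reason. Everything else in your sketch (the axioms used, the role of Theorem~\ref{T:necessary}, the ``every color meets every sub-gadget'' pigeonhole, the $\log^*$ extraction) is sound and essentially the paper's argument, but without geometric scaling and the harmonic-sum/exponential-copy count, the key property that a transversal of the gadgets overloads $v$ fails.
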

The construction follows the general structure  of a lower bound for scheduling the edges of a minimum spanning tree of a set of points in the plane~\cite[Thm.\ 7]{SODA12}. In order to prove the theorem, we need a necessary condition for feasibility, which we present first. We show in the following result that the sufficient condition for feasibility stated in Thm.\ \ref{T:kesselheimconstant} is essentially necessary in doubling metric spaces. 
This result is of independent interest, as it may prove useful for improved analysis of various problems.
It should be noted that this theorem does not hold in general metric spaces (as opposed to Thm.\ \ref{T:kesselheimconstant}).

\begin{theorem}\label{T:necessary}
Let $\beta\ge 3^\alpha$, and $L$ be a feasible set of links. Then, $I(L) = O(1)$.
\end{theorem}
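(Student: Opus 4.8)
The plan is to show that if $L$ is feasible, then $I(L)=\max_{i\in L} I(\{j\in L: l_j\le l_i\},i) = O(1)$, i.e., every link receives only bounded interference from the shorter links in $L$ (where ``interference'' here means the purely geometric quantity $I(j,i)=\s_j^\alpha/d(i,j)^\alpha$). Fix the link $i$ that attains the maximum and let $L_i^-=\{j\in L: l_j\le l_i\}$; write $P$ for a power assignment witnessing feasibility of $L$. The obstacle is that feasibility bounds the \emph{received power} $\sum_{j\ne i} P(j)/d_{ji}^\alpha$ at $r_i$, not the geometric quantity $\sum_j \s_j^\alpha/d(i,j)^\alpha$; the two differ because (a) an arbitrary power assignment may put little power on a nearby short link, and (b) $d_{ji}$ (sender of $j$ to receiver of $i$) is not the same as $d(i,j)$ (closest pair of endpoints). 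The workaround for (b) is standard: since we restrict to independent sets in $\glo$ we may assume all links are pairwise well-separated relative to their lengths (links that are too close are already incompatible), so $d_{ji}$, $d_{ij}$, and $d(i,j)$ agree up to constant factors for the shorter link $j$; I would record this as a preliminary normalization just as in Sec.~\ref{ss:necessity}.

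The real work is (a). The key idea is that a short link $j\in L_i^-$ that is close to $i$ cannot afford to transmit at low power, because its own receiver $r_j$ must overcome the interference it gets — in particular the interference from link $i$ itself. Concretely, feasibility of link $j$ gives $P(j)/\s_j^\alpha > \sum_{k\ne j} P(k)/d_{kj}^\alpha \ge P(i)/d_{ij}^\alpha$, so $P(j) > \s_j^\alpha P(i)/d_{ij}^\alpha$. Plugging this lower bound on $P(j)$ into the interference that $j$ inflicts on $i$ gives
\[
\frac{P(j)}{d_{ji}^\alpha} > \frac{\s_j^\alpha P(i)}{d_{ij}^\alpha d_{ji}^\alpha} = \frac{P(i)}{\s_i^\alpha}\cdot \frac{\s_i^\alpha \s_j^\alpha}{(d_{ij}d_{ji})^\alpha} = \frac{P(i)}{\s_i^\alpha}\cdot \Theta(1)\cdot I(j,i)^{?},
\]
so each short link near $i$ contributes, up to the uniform constant $P(i)/\s_i^\alpha$, an amount comparable to its geometric interference term $\s_i^\alpha\s_j^\alpha/(d_{ij}d_{ji})^\alpha$. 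Summing over $j\in L_i^-$ and using that $L$ is feasible at $r_i$ (so $\sum_j P(j)/d_{ji}^\alpha < P(i)/\s_i^\alpha$, using $\beta_i\ge 1$), the factor $P(i)/\s_i^\alpha$ cancels and we get $\sum_{j\in L_i^-} \s_i^\alpha\s_j^\alpha/(d_{ij}d_{ji})^\alpha = O(1)$. Here the hypothesis $\beta\ge 3^\alpha$ (equivalently the thresholds are large enough) is what lets us absorb the constants and conclude cleanly — this is where that numerical assumption is spent.

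It remains to pass from $\sum_j \s_i^\alpha\s_j^\alpha/(d_{ij}d_{ji})^\alpha$ to $I(L_i^-,i)=\sum_j \s_j^\alpha/d(i,j)^\alpha$. For the shorter link $j$ we have $\s_j\le \s_i$ (up to a factor of $2$ within an equilength class, and in general because $l_j\le l_i$ and thresholds are uniform), so $\s_i^\alpha\s_j^\alpha \ge \s_j^{2\alpha}$ is the wrong direction; instead I would argue distance-by-distance. By the separation normalization above, $d_{ij}$ and $d_{ji}$ are each $\Theta(d(i,j) + l_i)$ when $j$ is the shorter link, hence $d_{ij}d_{ji} = \Theta(\max(d(i,j),l_i)^2)$; meanwhile $d(i,j)\le \max(d(i,j),l_i)$, so $\s_i^\alpha\s_j^\alpha/(d_{ij}d_{ji})^\alpha$ dominates a constant multiple of $(\s_i/\max(d(i,j),l_i))^\alpha\cdot(\s_j/\max(d(i,j),l_i))^\alpha \ge \Theta(1)\cdot \s_j^\alpha/d(i,j)^\alpha$ only when $d(i,j)\gtrsim l_i$. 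For the links with $d(i,j) = O(l_i)$, i.e., the links bunched near $i$ within distance $O(\s_i)$, a separate doubling-dimension packing argument bounds their count by $O(1)$ (as in Lemma~\ref{P:simpleset}), and each such link contributes at most $\s_j^\alpha/d(i,j)^\alpha \le O(1)$ to $I(L_i^-,i)$ because $d(i,j)$ is bounded below by the pairwise separation $\Omega(\s_j)$. Combining the two regimes gives $I(L_i^-,i)=O(1)$, and since $i$ was the maximizer, $I(L)=O(1)$. The main obstacle, to reiterate, is step~(a): converting the power-based feasibility inequality at the \emph{short} link's receiver into a lower bound on that link's transmit power and then threading it through to a geometric bound at link $i$'s receiver; once that reciprocal-feasibility trick is in place, the rest is bookkeeping with the triangle inequality and the doubling property.
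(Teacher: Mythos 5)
Your proposal has the right key lemma for part of the argument but misassigns it, and it is missing the second key tool entirely.

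The reciprocal-feasibility trick you describe is correct and is what the paper uses: multiplying the two SINR conditions kills the power assignment and yields $\sum_j l_i^\alpha l_j^\alpha / (d_{ij}^\alpha d_{ji}^\alpha) = O(1)$. The problem is the regime in which this sum dominates $I(j,i)=\s_j^\alpha/d(i,j)^\alpha$. For links $j$ \emph{near} $i$ (say $\max\{d_{ij},d_{ji}\}=O(l_i)$), one indeed has $\max\{d_{ij},d_{ji}\}=\Theta(l_i)$ and $\min\{d_{ij},d_{ji}\}=\Theta(d(i,j))$, so $l_i^\alpha l_j^\alpha/(d_{ij}d_{ji})^\alpha = \Theta\bigl(l_j^\alpha/d(i,j)^\alpha\bigr) = \Theta(I(j,i)/\beta)$, and the reciprocal bound carries over to $I$. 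For \emph{far} links ($d(i,j)\gg l_i$), however, both $d_{ij}$ and $d_{ji}$ are $\Theta(d(i,j))$, so $l_i^\alpha l_j^\alpha/(d_{ij}d_{ji})^\alpha = \Theta\bigl((l_i/d(i,j))^\alpha\cdot l_j^\alpha/d(i,j)^\alpha\bigr)$, which is \emph{smaller} than $I(j,i)$ by the vanishing factor $(l_i/d(i,j))^\alpha$. Your claim that the reciprocal quantity dominates $\Theta(1)\cdot\s_j^\alpha/d(i,j)^\alpha$ when $d(i,j)\gtrsim l_i$ is exactly backwards: it dominates only at distance scale $\Theta(l_i)$ and degrades as $d(i,j)$ grows, so the reciprocal bound says nothing useful about the far links.

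The second gap is in your treatment of the near links. Lemma~\ref{P:simpleset} packs links whose \emph{sensitivities are within a constant factor of each other}; it does not bound the number of $\glo$-independent shorter links of widely varying lengths inside a ball of radius $O(l_i)$, which can be arbitrarily large (think of a geometric cascade of ever-tinier links packed at distance proportional to their own lengths around $r_i$). Each such link indeed contributes $O(1)$ to $I$, but there is no a priori $O(1)$ bound on their count from packing alone; it is precisely the reciprocal-feasibility inequality that shows feasibility forbids more than a bounded total contribution from them. So the two regimes need to be assigned the opposite tools: the reciprocal trick handles the near links, and the far links require a separate, purely geometric argument. That geometric argument is missing from your proposal. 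The paper supplies it by observing that feasibility implies $\glo$-independence ($d_{kj}d_{jk}>\beta^{2/\alpha}l_jl_k$ for all pairs), from which one gets the well-separation hypotheses of Lemma~\ref{L:pcequilength}, and then runs the same equilength-partition-plus-convergent-series argument as in Lemma~\ref{L:globalmain}. Without this, the far-link contribution to $I$ is uncontrolled, and the proof does not close.
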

\begin{proof}
The proof consists of two parts, bounding the interference on a link $i$ by faraway links (i.e., links that are highly independent from link $i$) on one hand, adapting the proof of Lemma~\ref{L:globalmain}, and by near links (the rest) on the other hand, using simple manipulations of the SINR condition.

Let us fix a link $i\in L$ and denote $S=\{j\in L : l_j\le l_i\}$. Let constant $c$ be such that $(c-1)x\ge \beta^{1/\alpha}\tlog(x)$, for each $x\ge 1$.
We split $S$ into two subsets, $S_1=\{j\in S: \max\{d_{ij},d_{ji}\} > cl_i\}$ and $S_2=S\setminus S_1$. We bound the interference on $i$ from $S_1$ and $S_2$ separately.

For $S_1$, we  adapt the proof of Lemma~\ref{L:globalmain}. Feasibility implies independence in $\glo$: For each pair $j,k\in S_1$, $d_{kj}d_{jk} > \beta^{2/\alpha}l_jl_k\ge 9l_jl_k$. We claim that the latter implies that $d(k,j)>2\min(l_k,l_j)=2\tlog(1)\cdot \min(l_k,l_j)$. Assume, w.l.o.g.,\ that $l_j\le l_k$. Assume, for contradiction, that $d(j,k)\le 2l_j$. Then by the triangle inequality, we have $\min(d_{jk},d_{kj})\le l_j + d(j,k)\le 3l_j$, and $\max(d_{jk},d_{kj})\le l_k + d(j,k)\le 3l_k$, contradicting independence. On the other hand, $d(i,j)\ge \max\{d_{ij},d_{ji}\}-l_i> (c-1)l_i\ge \s_j\tlog(\s_i/\s_j)$ holds for each $j\in S_1$, by the definition of $S_1$. We can proceed now as in the proof of Lemma~\ref{L:globalmain}: Partition $S_1$ into equilength subsets, apply Lemma~\ref{L:pcequilength} to each of those (we have just shown that the assumptions of the lemma hold), and combine the obtained bounds into a convergent series. We omit the technical details.

Now, consider the set $S_2$. Let $P$ be a power assignment for which $L$ is $P$-feasible.
By the definition of SINR feasibility,
\[
\frac{P(i)}{l_i^\alpha} > 3^{\alpha}\sum_{j\in S_2}{\frac{P(j)}{d_{ji}^\alpha}},\mbox{ and }\frac{P(j)}{l_j^\alpha} >  3^{\alpha}\frac{P(i)}{d_{ij}^\alpha}\mbox{ for all }j\in S_2.
\]
 By replacing $P(j)$ with $3^{\alpha}\frac{P(i)l_j^{\alpha}}{d_{ij}^\alpha}$ in the first inequality and simplifying the expression, we get:
\begin{equation}\label{E:equation2}
\sum_{j\in S_2}\frac{l_i^\alpha l_j^\alpha}{d_{ij}^\alpha d_{ji}^\alpha} \le 9^{-\alpha}.
\end{equation}
In order to extract a bound on $I(S_2,i)$ from (\ref{E:equation2}), we it suffices to show that for each $j\in S_2$, $\max\{d_{ij},d_{ji}\}=\Theta(l_i)$ and $\min\{d_{ij},d_{ji}\}=\Theta(d(i,j))$. 

Assume, w.l.o.g., that $d_{ij}\ge d_{ji}$. First, as it was observed above, feasibility implies that $d_{ji}\ge d(i,j) > 2l_j$. Hence, $d_{ji}\ge d(i,j)\ge d_{ji} - l_j > d_{ji}/2$. Next, consider $d_{ij}$. 
 Recall that $d_{ij} \le cl_i$, by the definition of $S_1$. To prove $d_{ij}=\Omega(l_i)$, consider two cases: If $l_j\ge l_i/2$, then $d_{ij} \ge d(i,j)> 2l_j\ge l_i$, and otherwise the triangle inequality implies $d_{ij}\ge l_i-d(i,j)>l_i/2$.
\end{proof}

\noindent \emph{Remark.}
Note that Thm.\ \ref{T:signalstrengthening} implies that any feasible set can be refined into a constant number of $3^\alpha$-feasible subsets. Thus, the interference function $I$ fully captures feasibility in doubling metrics, modulo constant factors.

\begin{proof}[Proof of Thm.~\ref{T:hardinstance}]
For a set $S$ of links, we will use $diam(S)$ to denote the \emph{diameter} of $S$, or the maximum distance between nodes in $S$. We assume, for simplicity, that $\beta=3^\alpha$. The argument can easily be extended to any other value $\beta\ge 1$.

We will construct a set of links that cannot be partitioned/scheduled in fewer than $\Theta(\log^*{\Delta})$ feasible slots, relying on the necessary condition for feasibility (Thm.\ \ref{T:necessary}). 

  Let us fix a function $f$. Note that since $f=O(\log^{1/\alpha})$, there is a constant $C\geq 1$ s.t. $f(x)\le C\log^{1/\alpha}x$.  We construct sets  $L_t$ of links, $t=1,2,\dots$, recursively. The construction is illustrated in Figure~\ref{fig:complicated}. All the links are arranged on the real line and the receiver of
  each link is to the right of the sender.  Initially, we have a set $L_1$ consisting of a single
  link of length $1$, for which a single slot is sufficient and necessary.  Suppose that we have already
  constructed $L_{t}$ with the property that at least $t$ slots are required for scheduling $L_{t}$. The
  instance $L_{t+1}$ is constructed as follows, using $k$ scaled copies of $L_t$, where $k$ is to be
  determined. First, we place a single very long link $j_{t+1}$ in the line. We then add, in order from left to
  right, copies 
  $L_t^1,L_t^2,\dots,L_t^{k}$ of $L_t$ to the right of $j_{t+1}$, where $L_t^s$ is the copy of $L_t$ scaled by a factor
  $8^s$. The aim is to ensure the following properties: 
  \begin{enumerate}[noitemsep,label=\roman*]
  \setlength{\parskip}{0cm}%
  \item   {$L_{t+1}$ is $f$-independent,}\label{EN:independence}
  \item {$t=\Omega(\log^*{\Delta(L_t)})$,} \label{EN:lowerbound}
  \item {for any set $S=\{i_1,i_2,\dots,i_k\}$
  with $i_s\in L_t^s$, $s=1,2,\dots,k$, we have that $I(S,j_{t+1})>c_0$, for a constant $c_0$ of our choice.}\label{EN:inconsistency2}
 \end{enumerate}
  The last property ensures that each partitioning of $L_{t+1}$ into feasible subsets must put a complete copy $L_t^s$ in a slot separate from $j_{t+1}$. Indeed, the existence of such a partitioning that placed at least one link from each copy $L_t^s$ in the same slot with $j_{t+1}$ would contradict (\ref{EN:inconsistency2}): we would have $I(S,j_{t+1}) =O(1)$ for some $S$ as above, due to Thm.\ \ref{T:necessary}. Recall that $L_t$ needs at least $t$ slots to be scheduled, and so does each copy of it. It follows that $L_{t+1}$ needs at least $t+1=\Omega(\log^*{\Delta(L_t)})$ slots to be scheduled, one for $j_t$ and at least $t$ for scheduling the copies of $L_t$. Proving the properties (\ref{EN:independence}-\ref{EN:inconsistency2})  will complete the proof of the theorem.
\begin{figure}[htbp]
\begin{center}
\includegraphics[width=0.85\textwidth]{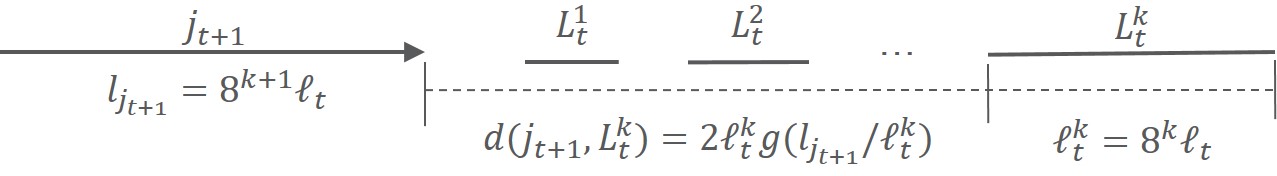}
\caption{The recursive construction of $L_{t+1}$.}
\label{fig:complicated}
\end{center}
\end{figure}

Now let us describe the inductive step of the construction in detail. Let $\ell_t=diam(L_t)$ denote the diameter of $L_t$. The number of copies of $L_t$ is $k=2^{c\ell_t}$, for a large enough constant $c$. The length of link $j_{t+1}$ is set to $l_{j_{t+1}}=8^{k+1}\ell_t$. It remains to specify the placement of each copy $L_t^s$ so as to guarantee the desired properties of $L_{t+1}$.

By induction, the links within each copy of $L_t$ are $f$-independent. We place the copies $L_t^s$ so that the links between any two copies are $f$-independent and are $f$-independent from $j_{t+1}$. Let $\ell_t^{s}=diam(L_t^s)=8^s\ell_t$ denote the diameter of $L_t^s$. 
Let $g(x)=C\log^{1/\alpha}x$. We place each copy $L_t^s$ at a distance $d(L_t^s,j_{t+1})=9\ell_t^sg(l_{j_{t+1}}/\ell_t^s)$ from $j_{t+1}$. The construction is ready.

We first prove the property (\ref{EN:independence}).
\begin{claim}
With the distances defined as above, the set $L_{t+1}$ is $f$-independent.
\end{claim}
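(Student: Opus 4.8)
The plan is to check the non-adjacency predicate of $G_f$ — which in the present (uniform-threshold) setting amounts to $d(i,j)\ge \s_{\min}f(\s_{\max}/\s_{\min})$ (cf.\ (\ref{eq:unifgraphdef})), with $\s_{\min}\le\s_{\max}$ the two sensitivities — for every pair of links of $L_{t+1}$, according to where the two links sit. Since $f(x)\le g(x)=C\log^{1/\alpha}x$, for any pair separated by the construction it suffices to verify $d(i,j)\ge \s_{\min}g(\s_{\max}/\s_{\min})$; pairs living inside a single copy will instead be handled by induction. Two elementary facts about $g$ carry the argument: (F1) \emph{sublinearity}, $\s_{\min}g(\s_{\max}/\s_{\min})=\s_{\max}\cdot g(y)/y\le c_0\s_{\max}$ where $y=\s_{\max}/\s_{\min}\ge 1$ and $c_0:=\max_{y\ge 1}g(y)/y<C$; and (F2) the map $\s\mapsto \s\,g(z/\s)$ is non-decreasing as long as $z/\s\ge e^{1/\alpha}$. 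Both are one-line calculus checks on $x\mapsto C(\log_2 x)^{1/\alpha}$. Recall also that here $\s_i=\beta^{1/\alpha}l_i=3l_i$ and every link of $L_t^s$ has sensitivity at most $3\cdot\mathrm{diam}(L_t^s)=3\ell_t^s$.

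If both links lie in the same copy $L_t^s$, the predicate $d(i,j)<\s_{\min}f(\s_{\max}/\s_{\min})$ is invariant under scaling (both sides scale with the copy, while the argument of $f$ is a scale-free ratio), so $f$-independence of $L_t^s$ follows from the induction hypothesis that $L_t$ is $f$-independent. If one link is $j_{t+1}$ and the other is $i\in L_t^s$, then $\s_i\le 3\ell_t^s$ and $\s_{j_{t+1}}/\s_i\ge l_{j_{t+1}}/\ell_t^s=8^{k+1-s}\ge 8>e^{1/\alpha}$, so applying (F2), pushing $\s_i$ up to its maximum $3\ell_t^s$,
\[
\s_i f(\s_{j_{t+1}}/\s_i)\le \s_i\,g(\s_{j_{t+1}}/\s_i)\le 3\ell_t^s\,g(l_{j_{t+1}}/\ell_t^s)=\tfrac13\,d(L_t^s,j_{t+1})\le \tfrac13\,d(i,j_{t+1}),
\]
which gives non-adjacency.

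The remaining case has $i\in L_t^s$ and $i'\in L_t^{s'}$ with $s<s'$. Writing $D_r:=d(L_t^r,j_{t+1})=9C\cdot 8^r\ell_t\,(3(k+1-r))^{1/\alpha}$, a direct estimate using $8^{r+1}\gg 8^r$ and $k+1-r\le 2(k-r)$ yields $D_{r+1}-D_r\ge 54C\,\ell_t^r$ for every $r<k$; this shows in passing that the copies lie on the line in increasing order of $D_r$ and do not overlap, and telescoping gives $D_{s'}-D_s\ge 54C\,\ell_t^{s'-1}=\tfrac{54}{8}C\,\ell_t^{s'}$. Since $j_{t+1}$ lies to the left of all the copies,
\[
d(i,i')\ \ge\ D_{s'}-D_s-\mathrm{diam}(L_t^s)\ \ge\ \big(\tfrac{54}{8}C-1\big)\ell_t^{s'}\ \ge\ 5C\,\ell_t^{s'},
\]
whereas, by (F1) and $\s_{\max}=3\max(l_i,l_{i'})\le 3\ell_t^{s'}$, we get $\s_{\min}f(\s_{\max}/\s_{\min})\le \s_{\min}g(\s_{\max}/\s_{\min})\le c_0\s_{\max}<3C\,\ell_t^{s'}$. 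As $3C<5C$, the two links are non-adjacent, so $G_f(L_{t+1})$ has no edge.

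The delicate point is this last case: one must ensure that the geometric spacing of the copies (the factor $8^s$ together with the placement constant $9$) beats both the slowly growing factor $g(l_{j_{t+1}}/\ell_t^s)$ and the threshold constant $\beta^{1/\alpha}=3$, and — the real subtlety — that the arbitrarily large length diversity \emph{inside} a copy $L_t^s$ does not force a larger gap between copies. This is precisely where strong sublinearity ($f=O(\log^{1/\alpha})$) is indispensable: fact (F1) keeps $\s_{\min}g(\s_{\max}/\s_{\min})$ linear in $\ell_t^{s'}$ no matter how large $\s_{\max}/\s_{\min}$ is, so the separation built into the construction always suffices. Case 2 is the same mechanism in simpler form (via (F2)), and the within-copy case is immediate from scale-invariance and induction.
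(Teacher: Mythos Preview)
Your proof is correct and follows the same three-case split as the paper: within a copy (induction plus scale-invariance), a copy against $j_{t+1}$ (monotonicity of $\s\mapsto\s\,g(z/\s)$), and two different copies. The first two cases are handled essentially identically. The third case is where you depart: the paper upper-bounds the threshold by $9\ell_t^r\,g(8^{s-r})=9C(3(s-r))^{1/\alpha}\ell_t^r$, i.e.\ in terms of the \emph{smaller} copy's diameter together with the index gap, and then shows $d(i,k)\ge 27C(s-r)\ell_t^r$ by expanding $d(L_t^s,j_{t+1})-d(L_t^r,j_{t+1})$ directly. You instead invoke raw sublinearity (your F1) to get $\s_{\min}g(\s_{\max}/\s_{\min})\le c_0\s_{\max}\le 3c_0\,\ell_t^{s'}$, a bound in the currency of the \emph{larger} copy alone, and match it against a telescoped lower bound $D_{s'}-D_s\ge(54/8)C\ell_t^{s'}$. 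Your route is shorter and sidesteps the $(s-r)$ bookkeeping; the paper's gives sharper constants and would tolerate smaller placement parameters. One stylistic point: you cite the simplified predicate (\ref{eq:unifgraphdef}), whereas the paper works from the product form (\ref{eq:gengraphdef}) together with the linear-layout observation $\max(d_{ij},d_{ji})\ge\max(l_i,l_j)$, obtaining the sufficient condition $d(i,j)>9l_{\min}f(l_{\max}/l_{\min})=3\s_{\min}f(\s_{\max}/\s_{\min})$; since the paper presents the two forms as interchangeable in the uniform-threshold case, this is not a gap.
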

\begin{proof} Since the links are arranged linearly, the maximum of $d_{ij}, d_{ji}$ for every pair of links $i,j$ is at least $\max\{l_i,l_j\}$. Hence, it suffices to prove that for any pair of links $i,j$ with $l_i\ge l_j$, $d(i,j)>9l_j f(l_i/l_j)$ (recalling that $\beta^{1/\alpha}=3$).
Consider any link $i\in L_t^s$. We have that 
\[
d(i,j_{t+1})\ge d(L_t^s,j_{t+1})=9\ell_t^sg(l_{j_{t+1}}/\ell_t^s)\ge 9l_ig(l_{j_{t+1}}/l_i)\ge 9l_if(l_{j_{t+1}}/l_i),
\]
where the second inequality follows from the fact that $xg(c/x)$ is an increasing function of $x$ and that $l_i<\ell_t^s$, and the third inequality follows because $f(x)\leq g(x)$ for all $x$. Thus, all the links in $L_t^s$ are $f$-independent from $j_{t+1}$. Now, let us show that any two links $i,k$ with $l_i\leq l_k$ from different copies $L_t^s$ and $L_t^r$ with $s > r$ are $f$-independent (no matter which link is from which copy). Since $f(x)\leq g(x)$, it will be enough to show that
\begin{equation}\label{E:pessdist}
d(i,k)> 9l_ig(l_k/l_i).
\end{equation}
 Recall that $xg(c/x)$ is an increasing function of $x$. Then, for a fixed $k$, the right-hand side of (\ref{E:pessdist}) is maximized when $l_i$ is maximum. On the other hand, for a fixed $i$,  the value $g(l_k/l_i)$ is maximized when $l_k$ is maximum, because $g$ is an increasing function. Let $j_t$ denote the maximum length link in $L_t$. Then, the maximum link length in $L_t^s$ (in $L_t^r$) is $8^sl_{j_{t}}$ ($8^rl_{j_{t}}$, resp.). Therefore, it is enough to show that 
\[
d(i,k)>9\ell_t^rg(8^sl_{j_t}/(8^rl_{j_t}))=9\ell_t^rg(8^{s-r})=9C(3(s-r))^{1/\alpha}\ell_t^r.
\]
We have that
\[
d(i,k) \geq d(L_t^s,L_t^r)=d(L_t^s,j_{t+1}) - d(L_t^r,j_{t+1}) - \ell_t^r\ge 9\ell_t^sg(l_{j_{t+1}}/\ell_t^s) - 10\ell_t^rg(l_{j_{t+1}}/\ell_t^r).
\]
The term $g(l_{j_{t+1}}/\ell_t^r)$ can be bounded as (using $\alpha\ge 1$)
\[
g(l_{j_{t+1}}/\ell_t^r)=g(8^{s-r}l_{j_{t+1}}/\ell_t^s) < 3(s-r)\cdot g(l_{j_{t+1}}/\ell_t^s), \mbox{hence}
\]
\begin{align*}
d(i,k) &\ge 9\ell_t^s g(l_{j_{t+1}}/\ell_t^s) - 30(s-r)\ell_t^rg(l_{j_{t+1}}/\ell_t^s)\\
& > C(9\cdot 8^{s-r} - 30(s-r))\ell_t^r \\
& > 27C(s-r)\cdot\ell_t^r.
\end{align*}
\end{proof}
Next, observe that (the first line follows because the links are arranged linearly) 
\begin{align}
\nonumber \ell_{t+1}&=l_{j_{t+1}}+d(L_t^k,j_{t+1}) + \ell_t^k\\
\nonumber &\leq l_{j_{t+1}}+9\ell_t^kg(l_{j_{t+1}}/\ell_t^k) + \ell_t^k\\
\nonumber &=8^{k+1}\ell_t + 9\cdot 8^kg(8)\ell_t + 8^k\ell_t\\
&=O(8^{ 2^{c\ell_t}}).
\end{align}
Since the minimum link-length in $L_{t+1}$ is $1$, we can conclude that $\Delta(L_{t})\le\ell_t\leq 2\uparrow (c_1t)$ for a constant $c_1$ and for each $t$, where $\uparrow$ denotes the tower function.
This implies that $t=\Omega(\log^*{\Delta(L_t)})$. The property (\ref{EN:lowerbound}) is now proven.

It remains to check (\ref{EN:inconsistency2}). Consider a link $i_s$ from $L_{t}^s$ where $i_s$ is the copy of a link $i\in L_{t}$. We have that
\[
d(i_s,j_t)\leq \ell_t^s+d(L_t^s,j_t)= \ell_t^s + 9C\ell_t^s\log^{1/\alpha}{(l_{j_{t+1}}/\ell_t^s)}\leq c_2\ell_t^s(k-s+1)^{1/\alpha},
\]
for a constant $c_2$. This implies:
\[
I(i_s,j_{t+1})=\left(\frac{l_{i_s}}{d(i_s,j_{t+1})}\right)^\alpha \geq \left(\frac{l_{i_s}}{  c_2(k-s+1)^{1/\alpha}\ell_{t}^s} \right)^\alpha \geq \frac{1}{c_3(k-s)\ell_{t-1}},
\]
where we used the fact that $l_{i_s}/\ell_t^s=l_i/\ell_t\geq 1/\ell_t$.
Now, let $\{i_s\in L_{t}^s | s=1,2,\dots,k\}$ be a set of links where $i_s\in L_{t}^s$, but they are not necessarily the copies of the same link of $L_{t}$. Then, 
\[
I(S,j_{t+1}) = \sum_{s=1}^{k}I(i_s,j_{t+1}) > \sum_{1}^{k}{\frac{1}{c_3(k-s+1)\ell_{t}}} = \Omega\left(\frac{\log{k}}{\ell_{t}}\right).
\]
Recall that $k=2^{c\ell_t}$. By choosing a large enough constant $c$, we can thus guarantee the property (\ref{EN:inconsistency2}). This completes the proof of all the properties of $L_t$ and the proof of the theorem.
\end{proof}

\subsection{Conflict Graphs without Power Control}
\label{ss:uniformlb}

Our results thus far show that conflict graphs can be used to obtain good approximation for scheduling problems that allow power control. That turns out not to be the case when power control is not available, that is, when we have the fixed uniform power assignment $P_0$: If there is no power control,  there is no conflict graph sandwich with tightness smaller than $\Theta(\log\Delta/\log\log\Delta)$. This claim is in contrast with the special case of unit length links (and uniform thresholds), where simple disk-graphs provide constant-tightness sandwiching~\cite{us:talg12}.

We prove the claim for linear (i.e., 1-dimensional) instances  with $\alpha=2$ and uniform thresholds $\beta=1$. It is not hard to show that uniform power scheduling is equivalent to its bidirectional variant (up to constant factors)~\cite{tonoyan11}, where we replace the distances $d_{ij},d_{ji}$ in the SINR formula with $d(i,j)$. Hence, we consider any conflict graph formulation $\cG$ that is, in view of the observations made in Sec.~\ref{ss:necessity}, of the following form: For every pair $i,j$ of links, they are \emph{independent in $\cG$} if $d(i,j)\ge c_1f(l_i,l_j)$, and are \emph{adjacent in $\cG$} if $d(i,j)<c_2f(l_i,l_j)$, where $c_1,c_2$ are constants and $f$ is an arbitrary function. The values of constants $c_1,c_2$ will not be important, so assume, for simplicity, that $c_1=c_2=1$. 

First, let  us show that there is a constant $h>0$, such that for every $\ell>0$, $f(\ell, \ell) \le h\ell$. It is an easy special case of Lemma~\ref{P:oblcore} that for some constant $h'>0$, every set of links of length $\ell$ arranged linearly with distance $d(i,j)=h'\ell$ between consecutive links is $P_0$-feasible. Hence, if $f(\ell, \ell) > h\ell$ for a number $h>h'$, then the tightness of the graph formulation $\cG$ is at least $\lfloor h/h'\rfloor$. On the other hand, we have $\Delta=1$ for the described instances, which means that the tightness must be bounded by a constant (in the context of our main claim), which implies that $h$ is bounded by a constant.

Next, we  bound from below $f(\ell_0,\ell_1)$, for any $\ell_0>\ell_1$. Consider a link $0$ of length $l_0=\ell_0$ and a large number of links $\{1,2,\dots,k\}$ of length $\ell_1< \ell_0$ arranged on the line such that $s_0$ is at the origin, $r_0$ is at coordinate $\ell_0$, and for $i=1,2,\dots,k$, $s_i$ is at $r_0+f(\ell_0,\ell_1)+(i-1)(h+1)\ell_1$, and $r_i$ is at $s_i+\ell_1$. Thus, $d(0,i)=f(\ell_0,\ell_1)+(i-1)(h+1)\ell_1$, and the spacing between any two links of length $\ell_1$ is at least $h\ell_1$, so the constructed set is independent in $\cG$, and must be feasible. The total interference-to-signal ratio on link $0$ is
\begin{align*}
I_0(\{1,2,\dots,k\}, 0) &=\sum_{i\ge 1} \frac{\ell_0^2}{(f(\ell_0,\ell_1)+(i-1)(h+1)\ell_1)^2} \\
&\ge \int_{0}^\infty \frac{\ell_0^2}{(f(\ell_0,\ell_1)+(h+1)\ell_1 x)^2}\ dx \\
&= \frac{\ell_0^2}{(h+1)\ell_1} \cdot \frac{1}{f(\ell_0,\ell_1)+(h+1)\ell_1}\ .
\end{align*}
For feasibility, the right-hand side must be less than 1, i.e., $f(\ell_0,\ell_1) > \frac{\ell_0^2}{(h+1)\ell_1}-(h+1)\ell_1$ must hold for every pair of distances $\ell_0,\ell_1$.

Now, we can use the obtained bound on $f(\ell_0,\ell_1)$ to construct an instance that is a clique in $\cG$, but is feasible with uniform power. For a number $n>4(h+1)^2$, consider the set ${1,2,\dots,k}$ of $k=n/\log n$ links, arranged on the line in the order $1,2,\dots,k$, such that $l_i=n^i$, and the minimum distance between links $i$ and $i+1$ is $$d(i,i+1)=\frac{l_{i+1}^2}{(h+1)l_i}-(h+1)l_i<n^{i+2}/(h+1).$$ Now, let us show that the obtained conflict graph has large chromatic number. Due to symmetry, it suffices to consider the conflicts with the longest link $k$. The link $k-1$ is adjacent to $k$, by the definition of the distances above. For each $i<k-1$, assuming $n$ is sufficiently large, we have 
$$d(i,k)<\sum_{t=1}^{k-1}\frac{n^{t+2}}{h+1} + n^{t}<\frac{2n^{k+1}}{h+1} + n^{k-1}<\frac{3n^{k+1}}{h+1}<\frac{l_{k}^2}{(h+1)l_i}-(h+1)l_i,$$
 where in the last two inequalities we used $n>4(h+1)^2$. This means that the obtained conflict graph is a clique of size $k$. On the other hand, it is easy to see that the instance is feasible: For every pair of links $i,j$ with $i>j$, the interference to signal ratio is at most $$\frac{l_i^2}{d(i,j)^2}\le \frac{l_i^2}{d(i,i-1)^2}\le \frac{n^{2i}}{(n^{i+1}/(h+1)-(h+1)n^{i-1})^2}\le \frac{1}{n},$$ since $n>4(h+1)^2$.

Finally, note that for the constructed instance, $\Delta=2^n$, which means that the approximation ratio provided by any conflict graph is at least $\Omega(\log\Delta/\log\log\Delta)$.

\subsection{General Metric Spaces}
\label{ss:genmetrics-lb}

The following proposition shows that conflict graphs can be arbitrarily poor approximation of the SINR model in general metric spaces. Given a function $f$, the construction consists of an $f$-independent set of \emph{unit length} links. Since all links have length $1$, $f$-independence is equivalent to $f(1)$-independence (that is, $g$-independence, where $g(x)\equiv f(1)$). The separation between the links is just enough to ensure $f(1)$-independence. However, since all the links are equally ($f(1)$-) separated from any given link, their interference accumulates and only a constant number of links can be scheduled in the same slot. This leads to schedules of length $\Theta(n)$. 
\begin{proposition}
 For every positive function $f$ and any $n\geq 1$, there is an $f$-independent set of $n$ unit length links (i.e., $\Delta=1$) that cannot be partitioned into less than $\Theta(n)$ feasible subsets, under uniform thresholds $\beta\ge 1$.
\end{proposition}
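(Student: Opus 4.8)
The plan is to place the instance in a highly non-doubling metric — essentially a ``blown-up uniform metric'' (whose doubling dimension grows like $\log n$) — in which a link suffers a fixed constant amount of interference from each of its equally, and just barely, separated neighbours, so that only $O(1)$ links can share a feasible slot even though $G_f$ is completely edgeless.

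Concretely I would take $2n$ points $s_1,r_1,\dots,s_n,r_n$, set $d(s_i,r_i)=1$ for every $i$, and set $d(x,y)=D$ for any two points $x,y$ lying on different links, where $D$ is the smallest value that is simultaneously at least $1$ (so that the triangle inequality holds on every triple of points — this is the only place $D\ge 1$ is used) and large enough that no two links are adjacent in $G_f$. Since all links are unit length and the thresholds are uniform, $\s_i\equiv\beta^{1/\alpha}$, so by (\ref{eq:gengraphdef}) links $i\ne j$ are non-adjacent as soon as $D^2=d_{ij}d_{ji}\ge\beta^{2/\alpha}f(1)$; thus $D=\max(\beta^{1/\alpha}\sqrt{f(1)},\,1)$ works, $G_f(L)$ has no edges, and $\Delta(L)=1$ since all links have length $1$.

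The heart of the proof is to show that every feasible subset $S\subseteq L$ has at most a constant number of links. Let $P$ be a power assignment witnessing feasibility of $S$; we may assume $P(i)>0$ on $S$. Writing the SINR condition (\ref{E:sinr}) for each $i\in S$, with $l_i=1$ and $d_{ji}=D$ for all $j\ne i$, gives $P(i)\,D^\alpha>\beta\sum_{j\in S\setminus\{i\}}P(j)$; summing these over $i\in S$ and using $\sum_{i\in S}\sum_{j\in S\setminus\{i\}}P(j)=(|S|-1)\sum_{i\in S}P(i)$, then cancelling the positive quantity $\sum_{i\in S}P(i)$, yields $|S|<D^\alpha/\beta+1$. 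Since $\beta\ge 1$, we have $D^\alpha/\beta=\max(f(1)^{\alpha/2},\beta^{-1})\le\max(f(1)^{\alpha/2},1)$, so $|S|< K:=\max(f(1)^{\alpha/2},1)+1$, a constant that does not depend on $n$.

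I expect this last step to be where the work lies, and the point to watch is that the bound on $|S|$ must hold against \emph{arbitrary} power assignments, not merely uniform power — this is exactly what the ``sum over all links and cancel the powers'' manipulation of the SINR inequalities delivers, and it is precisely where the non-doubling geometry is exploited (each of the $|S|-1$ barely-separated interferers carries off a fixed constant share of the interference budget, something impossible in a doubling metric). To finish, any partition of $L$ into feasible subsets must use more than $n/K$ parts, i.e.\ $\Omega(n)$; since $n$ singleton slots trivially suffice, $\Theta(n)$ feasible subsets are needed, whereas $\chi(G_f(L))=1$, which establishes the claimed gap.
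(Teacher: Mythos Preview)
Your argument is correct and follows essentially the same approach as the paper: build a uniform (non-doubling) metric in which all inter-link distances are equal and just large enough to make $G_f$ edgeless, then show any feasible set has $O(1)$ links. The only notable difference is in how the size bound is extracted from the SINR inequalities: the paper substitutes the inequality for $P(j)$ into the inequality for link $i$ to obtain $\sum_{j\ne i}\bigl(\tfrac{l_i l_j}{d_{ij}d_{ji}}\bigr)^{\alpha}<\beta^{-2}$, whereas you sum all $|S|$ inequalities and cancel $\sum_i P(i)$; both manipulations are standard and your version is arguably cleaner here since all $d_{ji}$ are equal.
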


\begin{proof}
Let $L=\{1,2,\dots,n\}$ be a set of links of unit length. 
The distance between every two senders of links is the same: $d(s_i, s_j) = 2\beta f(1)$.
Distances to and between receivers are then induced by these distances and lengths; e.g., distances between receivers is $d(r_i, r_j) = d(s_i,s_j) + l_i + l_j = 2\beta f(1)+2$.
The set $L$ is $f$-independent, since $d(i,j)>\beta f(1)\cdot l_i=\s_if(\s_j/\s_i)$. 
Consider any $P$-feasible subset $S$ of $k$ links for a power assignment $P$, and fix a link $i\in S$. The SINR condition implies that $P(i) > \beta \sum_{j\in S\setminus\{i\}}\frac{P(j)l_{i}^\alpha}{d_{ji}^\alpha}$ and $P(j) > \beta \frac{P(i)l_j^\alpha}{d_{ij}^\alpha}$ for all $j\in S\setminus \{i\}$.  Substituting for $P(j)$ in the first inequality and canceling the term $P(i)$, we obtain:
\[
1 >\sum_{j\in S\setminus\{i\}} \beta^2 {\frac{l_i^{\alpha }l_j^\alpha}{d_{ij}^\alpha d_{ji}^\alpha}}=\beta^2 \sum_{j\in S\setminus\{i\}}{\frac{1}{(2f(1)+1)^2}}=\frac{(|S|-1)\beta^2}{(2f(1)+1)^2},
\]
which implies that $|S| < \left(2f(1)+1\right)^2/\beta^2+1=O(1)$. Since $S$ was an arbitrary feasible subset of $L$, we conclude that $L$ cannot be split into fewer than $\Theta(n)$ feasible subsets.
\end{proof}

\subsection{Noise-Limited Networks}
\label{ss:weaklb}

\newcommand{\pmax}{P_{max}}
\newcommand{\lmax}{l_{max}}
\newcommand{\lhat}{\hat{l}}
\newcommand{\lmin}{l_{min}}

Recall that in order to obtain our approximations, we assumed in Sec~\ref{s:model} that there is a constant $c>1$, such that for each link $i$, $P(i)\ge c N_i \s_i^\alpha$. However, this is not always achievable when nodes have limited power. Suppose that each sender node has maximum power $\pmax$. For concreteness, we assume that $c=2$, $N_i=N>0$, $\beta_i=1$, for all links $i$, and the links are in a Euclidean space. Thus, a link $i$ is \emph{weak} if $\pmax \le 2 N l_i^\alpha$. Note that a link is weak because it is too long for its maximum power, i.e.\ $l_i \ge \lmax/2^{1/\alpha}$, where  $\lmax=(\pmax/ N)^{1/\alpha}$ is the maximum length a link can have to be able to overcome the noise when using maximum power. Scheduling weak links may be considered as a separate problem. Let $\tau$-{\wscheduling} denote the problem of scheduling weak links using power assignment $P_{\tau}$. We show here that the problem of scheduling (not necessarily weak links) with uniform power assignment (i.e., $P_0$), denoted {\uscheduling}, can be reduced to $\tau$-{\wscheduling} for any given $\tau\in [0,1]$, modulo constant approximation factors. Namely, a $\mu$-approximation algorithm for $\tau$-{\wscheduling} can be turned into a $O(\mu)$-approximation algorithm for {\uscheduling}. To our knowledge, there is no known approximation algorithm for {\uscheduling} with ratio in $o(\min(\log\Delta, \log n))$.

\begin{theorem}\label{T:weakhard}
There is a polynomial-time reduction from {\uscheduling}
to $\tau$-{\wscheduling} for any $\tau\in [0,1]$, that preserves approximation ratios up to constant factors.
\end{theorem}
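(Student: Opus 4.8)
The plan is to proceed in two stages: first strip away the sender/receiver asymmetry and rewrite $P_0$-feasibility as a clean interference condition, and then re-embed the instance geometrically so that every link becomes weak, using signal strengthening (Thm~\ref{T:signalstrengthening}) to absorb the noise term and all the constant-factor slack accumulated along the way.

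\emph{Stage 1 (bidirectionalize).} By the cited equivalence~\cite{tonoyan11}, {\uscheduling} is equivalent, up to constant factors in the approximation ratio, to its bidirectional variant, where $d_{ij},d_{ji}$ in the SINR formula are both replaced by $d(i,j)$. In this form, $P_0$-feasibility of a set $S$ amounts — modulo constants, via Thm~\ref{T:signalstrengthening} — to $\sum_{j\in S}(l_i/d(i,j))^\alpha=O(1)$ for every $i\in S$. So it suffices to reduce this bounded-interference covering problem to $\tau$-{\wscheduling}. The only structural facts we will use are that pairwise feasibility forces $d(i,j)>\max(l_i,l_j)$ and the standard packing bounds of a doubling metric.

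\emph{Stage 2 (re-embed as weak links).} Fix $\tau\in[0,1]$. A link that is both weak and usable has essentially a fixed length $\ell^\star=\Theta(\lmax)$ (up to the factor $2^{1/\alpha}$), and with power $P_\tau$ its signal $P_\tau(i)/(\ell^\star)^\alpha$ is $\Theta(N)$, so its margin over the ambient noise is a fixed constant fraction of $N$. Hence, for an equilength set of such links, $P_\tau$-feasibility is again — modulo constants and Thm~\ref{T:signalstrengthening}, working bidirectionally — equivalent to $\sum_{j\in S}(\ell^\star/d'(i,j))^\alpha=O(1)$, where $d'$ is the target metric. The reduction is thus purely geometric: turn the source placement (lengths $l_i$, metric $d$) into a placement in which every link has length $\ell^\star$ and $\sum_j(\ell^\star/d'(i,j))^\alpha\asymp\sum_j(l_i/d(i,j))^\alpha$ for each $i$ on every candidate set. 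The natural target is $d'(i,j)\asymp d(i,j)\cdot\ell^\star/\sqrt{l_i l_j}$, which symmetrically dilates the interference footprint of a link of length $l_i$ by $\sqrt{\ell^\star/l_i}$; then $\sum_j(\ell^\star/d'(i,j))^\alpha=\sum_j(\sqrt{l_il_j}/d(i,j))^\alpha$, which one checks — by $d(i,j)>\max(l_i,l_j)$ and splitting the sum into length scales, each controlled by the doubling property — is within a constant factor of $\sum_j(l_i/d(i,j))^\alpha$ on feasible sets, and conversely. A $\mu$-approximate schedule of the resulting $\tau$-{\wscheduling} instance is then read back slot by slot as an $O(\mu)$-approximate schedule of bidirectional {\uscheduling}, and hence of {\uscheduling}.

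\emph{Main obstacle.} The distances $d'(i,j)=d(i,j)\,\ell^\star/\sqrt{l_i l_j}$ are not realizable as an actual point set in general (different links scale by different factors, so the triangle inequality may break). The crux of the proof is an explicit Euclidean layout realizing these distances up to constants: sort links by length, lay down each length class as an appropriately scaled copy, and space consecutive classes far enough apart that cross-class distances come out right to within a constant — at the price of at most a polynomial blow-up in the number of links (harmless for the ratio) and a bounded distortion (absorbed by Thm~\ref{T:signalstrengthening}). Showing that this layout realizes the intended $d'$ up to constants in both directions, and that the induced correspondence between schedules loses only an $O(1)$ factor, is the step I expect to require the most care.
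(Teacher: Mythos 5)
There is a genuine gap in Stage~2, and in fact two compounding errors.

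\textbf{The symmetrization $d'(i,j)\asymp d(i,j)\ell^\star/\sqrt{l_il_j}$ does not preserve feasibility.}
Under uniform power, the interference of link~$j$ on link~$i$ is $\propto (l_i/d(i,j))^\alpha$ — it depends only on $l_i$, not on $l_j$. Your target interference $(\sqrt{l_il_j}/d(i,j))^\alpha$ is the \emph{geometric mean} of the two directed interferences $(l_i/d)^\alpha$ and $(l_j/d)^\alpha$, and a small geometric mean does not force both factors to be small. Concretely: take a link $i$ of length $1$ and links $j_k$ of length $2^{-k}$, $k=1,2,\dots$, all at distance $\approx 2$ from $i$ (in $1$D this is realizable since the $j_k$ can be packed with spacing $\sim 2^{-k}$; total extent $O(1)$). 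In the target each $j_k'$ has length $\ell^\star$ and sits at distance $d'(i,j_k)\approx 2\ell^\star 2^{k/2}$, so $\sum_k(\ell^\star/d'(i,j_k))^\alpha \sim \sum_k 2^{-k\alpha/2}=O(1)$; the target slot $\{i'\}\cup\{j_k'\}_k$ is feasible. But in the source, $\sum_k(l_i/d(i,j_k))^\alpha\approx\sum_k 2^{-\alpha}$ diverges, so the slot is infeasible. Hence ``feasible slot in target $\Rightarrow$ feasible slot in source (after $O(1)$ refinement)'' fails, and the read-back step collapses. The constraint $d(i,j)>\max(l_i,l_j)$ you invoke does not save this: it is satisfied in the counterexample.

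\textbf{Weak links do not behave as links of a fixed effective length $\ell^\star$.}
Your premise that ``a link that is both weak and usable has essentially a fixed length $\ell^\star$, and … its margin over ambient noise is a fixed constant fraction of $N$'' misses the central phenomenon. The length range $[\lmax/2^{1/\alpha},\lmax)$ is indeed narrow, but the leftover SINR budget after subtracting noise — captured in the paper by $g_i=l_i/(1-(l_i/\lmax)^\alpha)^{1/\alpha}$ — ranges over $[\lmax,\infty)$ and is \emph{unbounded} as $l_i\to\lmax$. Treating all weak links as having one effective footprint $\ell^\star$ throws away exactly this degree of freedom, and with it the information needed to encode arbitrary source lengths. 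The paper's reduction works precisely because the map $l_i\mapsto l_{i'}=f(Xl_i)$ (with $f=g^{-1}$) compresses arbitrary lengths into the weak band while making $g_{i'}=Xl_i$ scale linearly, so that each ordered term $g_{i'}/d_{j'i'}$ tracks $l_i/d_{ji}$ within a constant. No symmetrization is introduced, and the noise-induced blow-up of $g$ near $\lmax$ is the engine of the construction, not a constant to be absorbed by Thm.~\ref{T:signalstrengthening}.

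As a minor structural note, your Stage~1 bidirectionalization is fine and is consistent with the paper's remark in Sec.~\ref{ss:uniformlb}, but it is not actually what the paper uses for this theorem: the paper works directly with $\max\{d_{ij},d_{ji}\}$ and $\min\{d_{ij},d_{ji}\}$ and shows the map preserves the relevant ratios, and handles the $\tau$-dependence in a separate, simpler lemma. The essential difficulty — encoding arbitrary lengths into the tiny weak band via the inverse of $g$ — is absent from your outline, so the ``explicit Euclidean layout'' you flag as the remaining obstacle is not merely a realization detail but the part that cannot be made to work with the $\sqrt{l_il_j}$ symmetrization.
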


The proof directly follows from the two Lemmas below.
\begin{lemma}
There is a polynomial-time reduction from $0$-{\wscheduling} to $\tau$-{\wscheduling}, with any given $\tau\in [0,1]$, that preserves approximation ratios up to constant factors.
\end{lemma}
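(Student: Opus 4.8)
The plan is to realize the reduction as the identity map on links, showing that on the special domain of \emph{weak} links the two oblivious power regimes $P_0$ and $P_\tau$ are interchangeable up to constant factors in the number of slots. The first thing to record is that weak links are near-equilength: a link $i$ is weak exactly when it is long relative to $\lmax$, since $\pmax\le 2Nl_i^\alpha$ together with the necessary usability bound $\pmax>Nl_i^\alpha$ forces $l_i\in(\lmax/2^{1/\alpha},\,\lmax)$. As $\beta_i=1$ we have $\s_i=l_i$, so a set $L$ of weak links has $\Ds(L)<2$, and every oblivious assignment $P_\tau(i)\propto\s_i^{\tau\alpha}$ with $\tau\in[0,1]$ is within a factor $2^\tau\le 2$ of a uniform assignment. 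Fixing the proportionality constant of $P_\tau$ as large as the cap $\pmax$ permits gives constants $\kappa_1,\kappa_2=\Theta(1)$ (depending only on $\tau$) with $\kappa_1P_0(i)\le P_\tau(i)\le\kappa_2P_0(i)$ for all weak links, where $P_0$ is the uniform (max-)power assignment.

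The core step is a feasibility-translation lemma: if $P,P'$ are power assignments on a set $L$ of weak links with $P'(i)=\Theta(P(i))$, then any $P$-feasible subset of $L$ partitions into $O(1)$ subsets that are $P'$-feasible. Substituting $P'$ for $P$ in the SINR inequality changes both signal and interference by at most a constant factor, so a $P$-feasible set satisfies, under $P'$, a constant-factor–relaxed SINR inequality — equivalently, it is an $O(1)$-relaxed feasible set. Turning this into $O(1)$ genuinely feasible sets is a signal-strengthening step: Thm.~\ref{T:signalstrengthening} supplies exactly this kind of statement in the direction $1$-strong $\to t$-strong, and the same standard partitioning argument converts a constant-factor–relaxed feasible set into $O(1)$ feasible ones. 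I expect the main obstacle to live precisely here, in the interaction with ambient noise: splitting a set into more slots reduces interference but \emph{not} the noise floor, so the argument must use that for weak links the signal-to-noise ratio under any of these assignments is bounded below by a constant (this is where the constant $c>1$ in the definition of weakness — and restricting $\tau$ away from the degenerate endpoint $1$ — buy the needed slack), which lets the additive noise term be folded into a bounded ``interference budget'' that the partitioning is allowed to divide.

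With the lemma in hand the reduction is immediate. Given a $0$-$\wscheduling$ instance $L$, feed $L$ verbatim to the assumed $\mu$-approximation algorithm for $\tau$-$\wscheduling$; it returns a $P_\tau$-schedule of $L$ with at most $\mu\cdot\mathrm{OPT}_\tau(L)$ slots, and applying the lemma to each slot (with $P=P_\tau$, $P'=P_0$) refines it into a $P_0$-schedule with $O(\mu)\cdot\mathrm{OPT}_\tau(L)$ slots. Applying the lemma in the other direction (with $P=P_0$, $P'=P_\tau$) to an optimal $P_0$-schedule gives $\mathrm{OPT}_\tau(L)=O(\mathrm{OPT}_0(L))$. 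Hence the produced $P_0$-schedule uses $O(\mu)\cdot\mathrm{OPT}_0(L)$ slots, so the map is approximation-preserving up to constant factors and is plainly polynomial time. (Note that only the first application of the lemma, the ``signal does not decrease'' direction $P_\tau\to P_0$, is entirely routine; the second, $P_0\to P_\tau$, is where the noise term must be controlled, consistent with the obstacle identified above.)
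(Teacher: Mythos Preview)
Your approach is the paper's: the reduction is the identity on instances, and the substance is that weak links are near-equilength (lengths in $[\lmax/2^{1/\alpha},\lmax)$), so $P_\tau$ and $P_0=\pmax$ agree up to a factor of~$2$, whence signal strengthening (Thm.~\ref{T:signalstrengthening}) converts feasibility under one into $O(1)$ feasible pieces under the other.

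Two differences are worth noting. First, the paper argues only the direction $P_\tau$-feasible $\Rightarrow$ $O(1)$ many $\pmax$-feasible subsets, and does so more cleanly than your sketch: from $P_\tau$-feasibility one gets $P_\tau(i)\ge Nl_i^\alpha$, and weakness gives $\pmax\le 2Nl_i^\alpha$, so $\pmax/P_\tau(i)\le 2$; together with $P_\tau(j)/P_\tau(i)\le 2^\tau$ this shows that any $2$-strong subset (w.r.t.\ $P_\tau$) is outright $\pmax$-feasible. Second, you correctly observe that the reduction also needs $\mathrm{OPT}_\tau\le O(\mathrm{OPT}_0)$, i.e., the reverse implication, and that this is where the noise term bites --- the paper simply declares the one direction ``enough'' and leaves this implicit. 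Your diagnosis of the obstacle is right, but your resolution is still a hand-wave: Thm.~\ref{T:signalstrengthening} reduces interference, not noise, and on weak links the margin $\pmax/l_i^\alpha-N$ can be arbitrarily small. What actually closes it is that for $\tau$-{\wscheduling} to be meaningful one must normalize $P_\tau$ so every weak link is usable, $P_\tau(i)>Nl_i^\alpha$; then weakness again gives $P_\tau(i)>\pmax/2$, and the two noise-adjusted budgets are within a $\tau$-dependent constant, after which the strengthening argument is symmetric. Your remark about the degenerate endpoint is apt here.
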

\begin{proof}
Consider a $P_{\tau}$-feasible set $S$ of weak links. It is enough to show that $S$ can be partitioned into a constant number of $\pmax$-feasible subsets. Recall that $S$ is $P_{\tau}$-feasible  if for each link $i\in S$, $\frac{P_{\tau}(i)}{l_i^\alpha} > \sum_{j\in S\setminus i} \frac{P_\tau(j)}{d_{ji}^\alpha} + N$, or equivalently,  
\[
\frac{\pmax}{l_i^\alpha} > \sum_{j\in S\setminus i} \frac{P_\tau(j)}{P_\tau(i)}\cdot \frac{\pmax}{d_{ji}^\alpha} + \frac{\pmax N}{P_\tau(i)}.
\]
 Since the links are weak, we have $l_j/l_i\le 2^{1/\alpha}$, implying that $P_{\tau}(j)/P_{\tau}(i)\le 2^\tau$, and have $\pmax\le 2Nl_i^\alpha$, implying $\frac{\pmax}{P_{\tau}(i)}\le \frac{2Nl_{i}^\alpha}{Nl_i^{\alpha}}=2$, where we also used the fact that $P_{\tau}(i)\ge Nl_i^\alpha$, as $S$ is $P_{\tau}$-feasible. Hence, a $2$-strong subset of $S$, w.r.t. $P_\tau$, is $\pmax$-feasible. The proof is completed by recalling (Thm.~\ref{T:signalstrengthening}) that each $P_{\tau}$-feasible set can be partitioned into four $2$-strong subsets w.r.t. $P_\tau$.
\end{proof}

\begin{lemma}
There is a polynomial-time reduction from {\uscheduling} to $0$-{\wscheduling} that preserves approximation ratios up to constant factors.
\end{lemma}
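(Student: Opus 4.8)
The plan is to construct, from any {\uscheduling} instance, a $0$-{\wscheduling} instance with the same conflict structure up to constant factors, so that feasible sets of either one refine into $O(1)$ feasible sets of the other; together with a constant-factor translation of schedules, this is the ratio-preserving reduction. First I would pass to the \emph{bidirectional} variant of uniform-power scheduling, equivalent up to constants (cf.\ the discussion above and \cite{tonoyan11}): then the only relevant data of an instance are the lengths $l_i$ and pairwise minimum distances $d(i,j)$, and --- absorbing a constant via Thm.~\ref{T:signalstrengthening} --- a set $S$ is feasible under uniform power with ambient noise $N$ exactly when $\sum_{j\in S\setminus i}(l_i/d(i,j))^\alpha<1-(l_i/\lmax)^\alpha$ for every $i\in S$, where $\lmax=(\pmax/N)^{1/\alpha}$. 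Equivalently, a link $i$ of length $l_i<\lmax$ behaves like a noise-free link of \emph{effective length} $\lhat_i:=\bigl(l_i^{-\alpha}-\lmax^{-\alpha}\bigr)^{-1/\alpha}$, since its feasibility reads $\sum_j(\lhat_i/d(i,j))^\alpha<1$ (links with $l_i\ge\lmax$ are unschedulable and may be dropped). In this language the contrast between the two problems is transparent: an arbitrary instance may have effective lengths anywhere in $(0,\infty)$, whereas a \emph{weak} link is precisely one whose effective length lies in $[\lmax,\infty)$ --- so a weak instance can encode arbitrary length diversity, but only when realised by very short physical segments (real length below $\lmax$).

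Accordingly I would set $c:=\max\bigl(1,\ \lmax/\min_i\lhat_i\bigr)$ (if $c=1$ the instance already consists of weak links and there is nothing to do) and replace each link $i$ by a weak link $i'$ whose effective length is $c\,\lhat_i\in[\lmax,\infty)$; this forces its real length to be $l_{i'}=\bigl((c\lhat_i)^{-\alpha}+\lmax^{-\alpha}\bigr)^{-1/\alpha}\in[\lmax/2^{1/\alpha},\lmax)$. I would then place the links $i'$ so that $d(i',j')=\Theta\bigl(c\,d(i,j)\bigr)$ for every pair: since the threshold is $1$ in both instances and the affectance between $i'$ and $j'$ is then within a constant of the affectance between $i$ and $j$ in the noise-free form of the original instance, the conflict structures agree up to constants. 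The natural construction lays the $i'$ on the line in the original order, shrinking each footprint from $c\,l_i$ down to $l_{i'}$ and depositing the reclaimed space into a compensating gap; the technical content is to size these gaps --- proportionally to the \emph{effective} length of the adjacent link rather than to its original length --- so that distances between far-apart links neither collapse (making the weak instance spuriously hard) nor inflate (spuriously easy) by more than a constant, which I would verify via an annulus/packing estimate in the spirit of Lemma~\ref{P:oblcore}.

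The main obstacle is exactly this embedding, and within it the treatment of pairs of \emph{long} original links that are geometrically close: such a link shrinks to physical length $\approx\lmax$ when it becomes weak, so a careless re-layout either squeezes a close pair into a spurious mutual conflict or, after over-widening the intervening gaps, separates a genuinely conflicting pair. The escape is that the weak link's effective length stays large, so a close pair may legitimately sit at distance up to $\Theta(\max(l_i,l_j))=\Theta(\max(\lhat_{i'},\lhat_{j'}))$ and still be mutually incompatible, and this same choice keeps long chains of such links from accumulating more than a constant-factor distortion in the distances they induce. Granting the embedding, the rest is routine: corresponding feasible sets have all affectances within constant factors, hence each refines into $O(1)$ feasible sets of the other (a standard consequence of the SINR condition, cf.\ Thm.~\ref{T:signalstrengthening}), and the link correspondence carries any schedule of one instance to a schedule of the other with $O(1)$ blow-up --- exactly what the reduction needs.
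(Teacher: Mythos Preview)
Your conceptual framework is essentially the paper's: both proofs introduce the effective length $g_i=l_i/(1-(l_i/\lmax)^\alpha)^{1/\alpha}$ (your $\lhat_i$), observe that feasibility depends only on the ratios $g_i/d_{ji}$, and then scale all effective lengths by a common factor so that every link becomes weak while those ratios move by at most a constant. The scaling factors also agree up to constants (the paper takes $X=\lmax/\lmin$, you take $c=\lmax/\min_i\lhat_i$, and $\lhat_i\in[l_i,2^{1/\alpha}l_i]$ for non-weak links).

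Where you diverge, and where there is a real gap, is the embedding. You propose to lay the new links on a line ``in the original order,'' shrinking footprints and inserting compensating gaps, and you correctly flag the distance-distortion issue as the main obstacle. But the input is a general Euclidean instance, so there is no ``original order'' to speak of, and a one-dimensional relayout would have to control all pairwise distances simultaneously---the annulus/packing argument you gesture at does not obviously do this. The paper sidesteps the entire difficulty with a much simpler move: keep the instance in its ambient Euclidean space, map every sender by the dilation $s_{i'}=X\cdot s_i$ (so $d(s_{i'},s_{j'})=X\,d(s_i,s_j)$ exactly), and then move only the receiver of each link along its own direction so that the new physical length is $g^{-1}(X l_i)$, which forces $g_{i'}=X l_i$. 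Since the new links are short relative to the (now dilated) sender separations, the triangle inequality gives $d_{j'i'}=\Theta(X\,d_{ji})$ immediately, and the constant-factor preservation of $g_i/d_{ji}$ follows without any delicate gap-sizing. This is the missing idea in your proposal; once you adopt it, the rest of your outline (refining via Thm.~\ref{T:signalstrengthening}, translating schedules) goes through exactly as you say.
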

\begin{proof}
We show that a given $\pmax$-feasible set $S$ of non-weak links can be transformed into a set $S'$ of weak links that can be partitioned into $O(1)$ subsets, each $\pmax$-feasible. Recall that set $S$ is $\pmax$-feasible if and only if $\sum_{j\in S\setminus i}\left(\frac{g_i}{d_{ji}}\right)^\alpha < 1$ holds for every link $i\in S$, where $g_i=\frac{l_i}{(1-\beta N l_i^\alpha/\pmax)^{1/\alpha}}=\frac{l_i}{(1-(l_i/\lmax)^\alpha)^{1/\alpha}}$. The idea is to apply a geometric transformation on the set $S$, so that every link becomes weak, while the ratios $\frac{g_i}{d_{ji}}$ change by no more than constant factors. To this end, we first scale the set of sender nodes in $S$ (taken as points in the space) by a factor $X>0$, then ``stretch'' each link separately, by moving only its receiver node.

Let $\lmin$ denote the smallest link length in $S$, and $\lhat=\lmax/2^{1/\alpha}$ denote the border link length between weak and non-weak links. 
We want to map the links with length in range $[\lmin, \lmax)$ to the range
$[\lhat, \lmax)$, as described above.
Denote $g(x)=\frac{x}{(1-(x/\lmax)^\alpha)^{1/\alpha}}$ the function that ``generates'' the coefficients $g_i=g(l_i)$. Since $g(x):(0,\lmax)\rightarrow (0,\infty)$ is a continuous and monotonically increasing function, so is its inverse $f=g^{-1}:(0,\infty)\rightarrow (0,\lmax)$. Now, the set $S'$ of links is constructed as follows. To each link $i\in S$  corresponds a single link $i'\in S'$. The sender node $s_{i'}$ is located at the point $r_{i'}=X\cdot s_i$ with $X=\lmax/\lmin$. The receiver node $r_{i'}$ is located at $r_{i'}=s_{i'}+f(Xl_i)\cdot(r_i-s_i)/l_i$. Thus, $l_{i'}=f(Xl_i)<\lmax$. Also, the facts that $g(\lhat)=2^{1/\alpha}\lhat =\lmax\le Xl_i$ and  that $f$ is an increasing function, imply that $l_{i'} = f(X l_i) \ge f(g(\lhat)) = \lhat$, that is, $i'$ is indeed a weak link.

In order to complete the proof, we need to show that $S'$ can be split into a constant number of feasible subsets. To this end, we first use Thm.~\ref{T:signalstrengthening} to split $S$ into at most $\lceil 2\cdot 4^\alpha\rceil$ subsets, each $4^\alpha$-strong. Let $T$ be one of those. It suffices to show that $T'\subseteq S'$, the image of $T$ under our mapping, is feasible. Let $i,j\in T$ be any pair of links. 

First, note that since $i$ is a non-weak link, $g_i\in [l_i,2^{1/\alpha}l_i]$, and by the choice of the length transformations, $g_{i'}=g(f(Xl_i))=X l_i\le 2^{1/\alpha}X g_i$. Next, we show that $d_{j'i'}\ge Xd_{ji}/2$. Since $T$ is $4^\alpha$-strong, it is easy to show that $d_{ji}>4l_i$, which implies that $d(s_i,s_j)\ge d_{ji}-l_i> 3d_{ji}/4>3l_i$. By construction, $d(s_{i'},s_{j'})=X\cdot d(s_i,s_j)>3Xl_i\ge 3f(Xl_i)=3l_{i'}$, where we also used $f(x)\le x$ for all $x\in (0,\lmax)$, which follows from the fact that $g(x)\ge x$. Again, by the triangle inequality, 
\[
d_{j'i'}\ge d(s_{i'},s_{j'})-l_{i'}>2d(s_{i'},s_{j'})/3=2Xd(s_{i},s_{j})/3>Xd_{ji}/2.
\]
 Putting all together, we see that  $(g_{i'}/d_{j'i'})^\alpha \le (2\cdot 2^{1/\alpha}\cdot g_{i}/d_{ji})^\alpha\le 4^\alpha \cdot (g_{i}/d_{ji})^\alpha$. Since $T$ is a $4^\alpha$-strong set, this easily implies that $T'$ is feasible.
\end{proof}

\section{Context}
\label{s:context}

\subsection{Related Work}
%% Model introduction
   %% with log-path fading 
 Gupta and Kumar introduced the SINR model of interference/communication in their influential paper~\cite{kumar00}.
Moscibroda and Wattenhofer \cite{moscibrodaconnectivity} initiated worst-case analysis of scheduling problems in networks of arbitrary topology, which is also the setting of interest in this paper.
There is a huge literature on wireless scheduling problems, but we focus here on algorithms with performance guarantees.

% Work on fixed rates
There has been significant progress during the past decade in understanding scheduling problems with fixed uniform data rates.
NP-completeness results have been given for different variants \cite{goussevskaiacomplexity, katz2010energy,lin2012complexity}.
Early work on approximation algorithms involve (directly or indirectly) partitioning links into length groups,
which results in performance guarantees that are at least logarithmic in
$\Delta$, the link length diversity: TDMA scheduling and uniform weights {\wcapacity} \cite{goussevskaiacomplexity,dinitz,us:talg12},
non-preemptive scheduling \cite{fu2009power}, joint power control, scheduling and routing \cite{chafekarcrosslayer},
and joint power control, routing and throughput scheduling in multiple channels \cite{AG10}, to name a few.
Constant-factor approximations are known for uniform weight {\wcapacity}, with uniform power \cite{GoussevskaiaHW14}, oblivious power \cite{SODA11}, and (general) power control \cite{kesselheimconstantfactor}.
The characterization of feasibility under general power in \cite{kesselheimconstantfactor} is essential for our results.
Standard approaches translate the constant-factor approximations for the uniform weight {\wcapacity}  into $O(\log n)$-approximations for TDMA link scheduling  and general {\wcapacity}.
On the other hand, \cite{wan09,jansen03} present approximation-preserving (up to constant factors) reductions from the fractional scheduling and routing and scheduling problems to {\wcapacity}, which in combination with the results above gives us $O(\log n)$-approximations for those problems (note, however, that this reduction uses computationally heavy linear programming techniques). 
The observation that extending  inductive independent graphs to the multi-channel multi-radio case essentially preserves inductive independence (Sec.~\ref{s:problems}) has been made in~\cite{WanCWY11}. The $O(\log n)$-approximation results do not require the assumption we make regarding interference-constrained networks, and some also work in general metrics.

Algorithms for the graph-based variants of flow routing and scheduling problems have initially been addressed in \cite{KodialamN03,Kumar2005,AlicherryBL06,WanCWY11,wan14}, among others. Algorithms (based on the primal-dual method) with performance guarantees in terms of inductive independence are presented in \cite{wan14}. Algorithms with performance guarantees for the graph-based variant of combinatorial auctions are presented in~\cite{HoeferKV14,HoeferK15}. Those works also present algorithms for the SINR model, with an extra $O(\log n)$  approximation factor.
Many problems become easier in the regime of linear power assignments, and constant factor approximations are known for 
{\wcapacity} \cite{wang2011constant,halmitcognitive} and 
TDMA link scheduling \cite{fangkeslinear,tonoyanlinear}.

% Capacity region & WIS
The communication ability of packet networks is characterized by the capacity region, i.e.\ the set of data rates that can be supported by any scheduling policy. In order to achieve \emph{low delays} (i.e.,\ polynomially bounded queues) and \emph{optimal throughput}, the classic result of Tassiulas and Ephremides \cite{TE92} and followup work in the area (e.g.~\cite{LinShroff04}) imply that {\mwisl} is a core optimization problem that lies at the heart of such questions. This reduction applies to very general settings involving  single-hop and multi-hop, as well as fixed and controlled transmission rate networks. Moreover, approximating {\wcapacity} within any factor implies achieving the corresponding fraction of the capacity region. 
In general, even approximating the capacity region  in polynomial time within  a non-trivial bound, while keeping the delays low, is hard under standard complexity-theoretic assumptions~\cite{ShahTT11}.
Methods with good performance, such as Carrier Sense Multiple Access (CSMA) \cite{jiang2010distributed} necessarily require exponential time in the worst case \cite{JiangLNSW12}.

% Rate control
Very few results on performance guarantees are known for problems involving rate control.
The constant-factor approximation for  {\wcapacity} with uniform weights and arbitrary but fixed data rates proposed in \cite{KesselheimESA12} can be used to obtain  $O(\log n)$-approximations for TDMA link scheduling and {\wcapacity} with rate control, where $n$ is the number of links.
A recent work~\cite{goussevskaia2016wireless} handles the TDMA scheduling problem with fixed but different rates, obtaining an approximation independent of the number of links $n$, but the ratio is polynomial in $\Ds$.

% Further applications of this work: Our connectivity result

% Modeling SINR with graphs
The idea of modeling SINR with graphs arose early. Disc graphs were shown to be insufficient in general \cite{moscibrodaconnectivity}. 
However, it was observed rather early that equilength sets of links can be captured with unit disc graphs \cite{goussevskaiacomplexity}. In fact, a sandwiching result with constant tightness holds for equilength links \cite{us:talg12}. For links of widely varying lengths, less was known: a $O(\log\log \Delta \log n)$-tight construction was given in \cite{us:talg12}.

\subsection{Modeling Issues}

% SINR
The SINR model has been an object of intense study given its closer fidelity to reality than binary models.
Indeed, the additivity of interference and the near-threshold nature of signal
reception has been well established in experiments.  
The model is though far from perfect: the assumption of signal strength decreasing inversely polynomially with distance can be far off \cite{son2006,MaheshwariJD08,sevani2012sir,us:MSWiM14}.  
We discuss here the various proposed alternatives and explain why analysis in the pure SINR model is of fundamental importance.

%% Stochastic fading
In stochastic analysis (see e.g.\ \cite{haenggi2009}), as well as in
simulation studies, the canonical approach is to assume stochastic
\emph{fading} or \emph{shadowing}, where signal strengths include a
multiplicative random component.  Such stochastic components are a
natural fit for stochastic analysis, but less so for the every-case analysis
aimed for here. The stochastic models \emph{seem} to generate
instances that are similar to real ones, but little rigorous validation
exists.  The question is then what we can say about the instance at
hand, rather than some distribution.

% Shadowing (vs Fading)
One can distinguish between two types of stochastic effects: time-varying \emph{fading} and time-invariant \emph{shadowing}.
It is typically assumed that shadowing is independent across space, and fading is independent across time.
The only work we are aware of involving performance guarantee analysis in the presence of shadowing is our recent work \cite{us:mobihoc17}.
It suggests that the usual assumptions about independence of the random variables across space has a major effect, as it can lead to counter-intuitive improvements in the size of optimal solutions. Much more remains though to be considered on this front.

% Rayleigh fading
For time-varying effects under Rayleigh fading, it has been shown that
applying algorithms based on the deterministic formula results in
nearly equally good results \cite{dams2015}.  In fact, for {\mwisl},
this only affects the constant factor \cite{us:mobihoc17}.  Thus,
asymptotic results in the standard non-fading model carry fully over
to settings with Rayleigh fading, including our approximation ratios.

% Every-case analysis models
For every-case analysis, a natural generalization of the SINR model
would be to shed the geometry and allow for an arbitrary signal-quality matrix.
One could in practice obtain this in the form of facts-on-the-ground signal
strength measurements \cite{us:MSWiM14,us:PODC14}.  
This generalization is, however, too expensive as it runs into the
computational intractability monster: with such a formulation one can
encode the coloring problem in general graphs \cite{GoussevskaiaHW14}, which is known to be famously hard to approximate \cite{FeigeKilian}.

% Metrics
A more moderate approach is to relax the Euclidean assumption to more
general metric spaces, as first proposed in \cite{fangkeslinear}.
We assume here doubling metrics \cite{us:talg12}, which has been a standard assumption 
when dealing with problems beyond unweighted throughput.

% Ind. ind.
Alternatively one can analyze algorithms in terms of some parameters of the signal-quality matrix. Such results then apply directly to the SINR model, but do not depend on the exact features of the model.
The most successful such effort has involved the so-called \emph{inductive independence} number, proposed in \cite{HoeferK15}, which has been applied for spectrum auctions \cite{HoeferK15,HoeferKV14}, dynamic packet scheduling \cite{kesselheimStability}, online independent sets \cite{GHKSV14}, and connectivity \cite{HHMW17}. 
The parameter is known to be a constant in SINR settings with power control \cite{HHMW17}.

% C-independence
  Another parameter is \emph{$C$-independence}, proposed in 
\cite{dams2014jamming}, based on a formulation in \cite{infocom11}. 
It is constant-bounded in SINR models with uniform power \cite{infocom11}.
It has been applied to the distributed optimization of (uniform weights) {\mwisl} via learning \cite{infocom11}, and its extensions involving jamming \cite{dams2014jamming} and channel availabilities \cite{dams2013sleeping}.
Both parameters, inductive independence and $C$-independence, however, are useful for unweighted throughput maximization, but have failed to give sublogarithmic bounds for weighted throughput or scheduling latency minimization thus far.

% Importance of SINR analysis
Ultimately, the pure SINR model lies at the core of all these models. It is exact in free space, forms the base case under stochastic fading, and is the starting point for any of the worst-case models. It is essential to understand properly how this fundamental case works, and then relax the assumptions as much as possible. It does appear that the doubling metrics we use are necessary for the results of the kind that we obtain. It remains to be seen what can be done in other settings.

\section{Conclusions}
Our results suggest that a reassessment of the role of graphs as wireless models might be in order.
By paying a small factor (recalling, as well, that $\log^*(x) \le 5$ in this universe),
we can work at higher levels of abstraction, with all the algorithmic and analytical benefits that it accrues.
At the same time, hopes for fully constant-factor approximation algorithms for core scheduling problems may have somewhat receded.

It would be interesting to see if other natural classes of hypergraphs admit efficient sketches. It would also be interesting to explore further properties of generalized disk graphs.

\section*{Acknowledgements}

We are grateful to Eyj\'olfur Ingi \'Asgeirsson for collaborations and experimentation.
We thank Allan Borodin, Guy Even, Stephan Holzer, Calvin Newport and Roger Wattenhofer for helpful discussions.

\bibliographystyle{abbrv}
\bibliography{Bibliography}

\end{document}